\documentclass[11pt,3p]{elsarticle}
\makeatletter
\def\ps@pprintTitle{\let\@oddhead\@empty
  \let\@evenhead\@empty
  \def\@oddfoot{\reset@font\hfil\thepage\hfil}
  \let\@evenfoot\@oddfoot
}
\makeatother

\usepackage{amsmath}
\usepackage{amsfonts}
\usepackage{amssymb}
\usepackage{bbm}
\usepackage{amsfonts}
\usepackage{amsthm}
\usepackage{orcidlink}
\newtheorem{theorem}{Theorem}[section]
\newtheorem{proposition}{Proposition}[section]
\usepackage{amscd}
\usepackage{mathtools}
\usepackage{mathrsfs}
\usepackage{commath}
\usepackage{lmodern}
\usepackage{color}
\usepackage[dvipsnames]{xcolor}
\usepackage{a4wide}
\usepackage{bm} 
\usepackage{physics}
\usepackage{enumitem}  
\usepackage{bbm}
\usepackage{hyperref}
\hypersetup{ colorlinks=true, urlcolor  = blue, linkcolor = blue,
citecolor = blue1,}

\usepackage[english]{babel}
\usepackage[IL2]{fontenc}
\usepackage[utf8]{inputenc}

\usepackage[linesnumbered,ruled,vlined]{algorithm2e}
\usepackage[noend]{algpseudocode}
\SetKwInput{KwInput}{Input}                
\SetKwInput{KwOutput}{Output}              

\SetCommentSty{mycommfont}

\newtheorem{lemma}[theorem]{Lemma}

\theoremstyle{definition}

\theoremstyle{remark}
\theoremstyle{remark}

\newcommand{\FP}{{\mbox{\rm\tiny FP}}}
\newcommand{\LC}{{\mbox{\rm\tiny LC}}}
\newcommand{\SP}{{\mbox{\rm\tiny SP}}}

\begin{document}

\begin{frontmatter}

\title{Analysis of inverse stochastic resonance: Effects of neural excitability and timescale separation}
\author[1,2]{Marius E. Yamakou \orcidlink{0000-0002-2809-1739}} 
\ead{marius.yamakou@fau.de}
 \author[2]{Torben Kr\"uger}
 \author[2]{Hermann Schulz-Baldes}
 \address[1]{Department of Data Science, Friedrich-Alexander-Universit\"at Erlangen-N\"urnberg, N\"urnberger Str. 74, 91052 Erlangen, Germany}
\address[2]{Department of Mathematics, Friedrich-Alexander-Universit\"at Erlangen-N\"urnberg, Cauer Str. 11, 91058 Erlangen, Germany}

\begin{abstract}
We analyze inverse stochastic resonance (ISR) in a bistable FitzHugh--Nagumo
neuron driven by additive noise in the voltage variable, focusing on how neural excitability and timescale separation regulate the noise-induced modulation of spiking activity. A codimension-two bifurcation analysis identifies a narrow bistable region in
which a stable fixed point and a stable limit cycle coexist, separated by an
unstable periodic orbit. Finite-time Monte Carlo simulations show that the occupation of the
limit-cycle basin may appear to depend on the initial basin when rare
transitions are not fully resolved. We prove that this dependence is not
asymptotic: the stochastic system admits a unique invariant probability
measure, so long-time firing statistics are independent of the initial
basin of attraction. The parameter dependence of ISR is characterized by quasi-potential barriers
computed with a geometric minimum action method for the degenerate noise. The difference
between the limit-cycle and fixed-point quasi-potentials partitions the
bistable wedge into two escape-dominated regimes. A reduced metastable
two-state Markov approximation yields a weak-noise formula for the limit-cycle basin
occupation probability, a sign criterion for genuine ISR, and a semiquantitative
prediction of the ISR-minimizing noise amplitude. In the fitted regime with a
negative effective exponent, a genuine ISR minimum occurs only when the
limit-cycle quasi-potential exceeds that of the fixed point. These
results provide an escape-balance mechanism linking intrinsic neuronal
parameters to asymptotic noise-induced spike suppression.
\end{abstract}

\begin{keyword}
biological neurons, excitability, timescale separation, bistability, quasi-potential, invariant probability measure, inverse stochastic resonance
\end{keyword}
\end{frontmatter}

\section{Introduction}
\label{sec:introduction}
Deterministic nonlinear systems can exhibit fixed points, periodic orbits, and chaotic trajectories. Multistability—coexistence of multiple attractors separated by unstable states---is common. In neurons, this manifests as coexisting quiescent (fixed point) and spiking (limit cycle) states. The final deterministic state of the system depends on the initial basin of attraction, but the presence of noise can induce transitions between basins. 

\textit{Inverse stochastic resonance} (ISR) is a noise-induced phenomenon that can occur in bi-stable stochastic dynamical systems where a stable fixed point and a stable limit cycle coexist. Weak noise suppresses oscillations along the limit cycle, minimizing the response measure (sometimes to zero); beyond this point, increased noise amplitude steadily raises the response \cite{tuckwellAnalysisInverseStochastic2012,yeInverseStochasticResonance2023,luInverseStochasticResonance2020,gutkinInhibitionRhythmicNeural2009,tuckwellInhibitionModulationRhythmic2009,guoInhibitionRhythmicSpiking2011,yamakouInverseStochasticResonance2024}. ISR occurs when some non-zero noise amplitude maximizes residence time in the fixed point’s basin of attraction, reducing transitions to the limit cycle’s basin at a particular noise amplitude. It is worth emphasizing that the ISR effect differs fundamentally from other noise-induced resonance phenomena, including stochastic resonance (SR) \cite{longtin1993stochastic,berglund2005universality,berglund2006noise,zamani2020concomitance,gammaitoni1998stochastic}, coherence resonance (CR) \cite{pikovskyCoherenceResonanceNoisedriven1997,yamakou2023combined,yamakouCoherenceResonanceStochastic2023}, and self-induced stochastic resonance (SISR) \cite{muratov2005self,devilleSelfinducedStochasticResonance2007,yamakouControlCoherenceResonance2019,zhu2024reduced}.

From the perspective of stochastic dynamical systems, ISR has been shown to emerge through several distinct mechanisms. One such mechanism occurs in a bistable parameter regime bounded by a subcritical Hopf bifurcation (HB) and a saddle-node-of-limit-cycles (SNLC) bifurcation, where noise-induced asymmetric switching between oscillatory and quasi-stationary states produces the characteristic suppression of spiking activity
\cite{gutkinInhibitionRhythmicNeural2009,tuckwellInhibitionModulationRhythmic2009}. Intermediate noise causes asymmetric switching, favoring the quasi-stationary state \cite{Torres2020theoretical,uzuntarlaDoubleInverseStochastic2017}, creating ISR's characteristic non-monotonic frequency-noise relationship. In this paper, we focus on bistability-driven ISR, though other mechanisms exist \cite{bacicTwoParadigmaticScenarios2020}, limit cycle phase-sensitivity \cite{Igor2018phase}, and noise-stabilized unstable fixed points \cite{bacicTwoParadigmaticScenarios2020,zhuUnifiedMechanismInverse2021}, which do not require bistability.

ISR was first observed in neurons, where optimal noise suppresses or halts firing \cite{gutkinInhibitionRhythmicNeural2009}. Validated in biological \cite{buchinInverseStochasticResonance2016} and physical systems \cite{huhControlStochasticInverse2023}, ISR demonstrates the noise's role in balancing excitation and inhibition. While potentially limiting spiking-based information transfer in neural networks, ISR enables computations requiring reduced activity without chemical inhibition \cite{paydarfarNoisyInputsInduction2006}. It may prevent pathological memory retention, regulate brain states, and offer therapeutic options for epilepsy \cite{buchin2015modeling}. ISR also impacts liquid crystal electroconvection \cite{huhInverseStochasticResonance2016}, ecological dynamics \cite{touboulComplexDynamicsSavanna2018}, airfoil system \cite{Zhu2025}, and excitable active rotators with adaptive coupling \cite{Bacic2018}.

In this paper, we investigate ISR in a neuronal model. Numerous studies have numerically investigated ISR in neurons and networks, examining factors including: noise types~\cite{luInverseStochasticResonance2020,wangNonGaussianNoiseAutapseinduced2022,zhaoLevyNoiseinducedInverse2019}, spatial extension~\cite{tuckwellEffectsVariousSpatial2011,liuEffectsNeuronalMorphology2024,zhangAutapseinducedMultipleInverse2021}, electromagnetic induction~\cite{yeInverseStochasticResonance2023}, conductance inputs~\cite{tuckwellInhibitionModulationRhythmic2009}, morphology~\cite{liuEffectsNeuronalMorphology2024}, time delays and coupling~\cite{liuEffectsNeuronalMorphology2024,zhangAutapseinducedMultipleInverse2021}, synapse types~\cite{uzuntarlaDoubleInverseStochastic2017}, autapses~\cite{zhangAutapseinducedMultipleInverse2021}, scale-free connectivity~\cite{uzuntarlaDoubleInverseStochastic2017}, and network adaptivity~\cite{yamakouInverseStochasticResonance2024,uzuntarlaDoubleInverseStochastic2017,bacicTwoParadigmaticScenarios2020,bacicInverseStochasticResonance2018}.

However, a detailed analytical study of how neural excitability and timescale separation jointly affect ISR remains lacking. Neural excitability—the ease with which deterministic neurons fire action potentials—varies across neuron types due to factors including: (i) ion channel expression (Na$^+$, K$^+$, Ca$^{2+}$ channel densities) and (ii) membrane properties (capacitance, resistance) affecting signal response \cite{xu2014parameters}. Timescale separation between membrane potential and recovery current is equally crucial, governing sodium channel kinetics and action potential morphology \cite{xu2014parameters}.

The main objective of this paper is to analyze how neural excitability and
timescale separation jointly regulate ISR in a bistable neuron. Using the FHN neuron model, we first identify the codimension-two bistable wedge in which a stable
fixed point and a stable limit cycle coexist and are separated by an unstable
periodic orbit. We then show, through Monte Carlo simulations, that finite-time
occupation curves of the limit-cycle basin may exhibit an apparent dependence on the initial basin when
one of the two escape processes is not sufficiently sampled. This dependence is
shown to be non-asymptotic by proving the existence and uniqueness of an
invariant probability measure for the stochastic system.

To characterize the parameter dependence of genuine ISR, we compute the
quasi-potential barriers for escape from the fixed-point and limit-cycle basins
using a degenerate-noise geometric minimum action method. The resulting barrier
difference partitions the bistable wedge into two escape-dominated regimes. By
combining weak-noise asymptotics for the mean escape times with a reduced
two-state Markov description of basin occupation, we derive a sign criterion for
genuine asymptotic ISR and a semiquantitative prediction of the noise amplitude
at which the ISR minimum occurs. This provides an escape-balance mechanism
linking the intrinsic neuronal parameters to long-time noise-induced suppression
of spiking.

The paper is organized as follows. Section~\ref{sec:mathematical model}
introduces the stochastic FHN neuron model. Section~\ref{sec:Zero-noise bifurcation}
identifies the bistable parameter regime through deterministic bifurcation
analysis. Section~\ref{sec: stochastic analysis} presents Monte Carlo
simulations illustrating the effects of excitability, timescale separation, and
initial basin on finite-time ISR curves. Section~\ref{invariant_measure} proves
existence and uniqueness of the invariant probability measure and establishes
the asymptotic independence of the long-time occupation statistics from the
initial condition. Section~\ref{sec:stochastic_analysis} computes the
quasi-potential barriers associated with escape from the fixed-point and
limit-cycle basins. Section~\ref{sec:_heuristics2} develops the reduced
escape-balance theory and derives a semiquantitative prediction for the
ISR-minimizing noise amplitude. Section~\ref{sec:conclusion} summarizes the
main conclusions.

\section{Mathematical model of the stochastic neuron}
\label{sec:mathematical model}
As a paradigmatic model with established neurobiological relevance, we consider a single FitzHugh-Nagumo (FHN) neuron~\cite{Fitzhugh-1960a,fitzhughImpulsesPhysiologicalStates1961,nagumoActivePulseTransmission1962}. This neuron is modeled by slow-fast stochastic differential equations (SDEs), represented on the fast timescale $t$ as in Eq.~\eqref{eq:2}:
\begin{equation}\label{eq:2}
\begin{split}
\left\{
\begin{array}{lcl}
dv_{t} &=& f(v_{t}, w_{t}) dt + \sigma N dB_t, \\[1.0mm]
dw_{t} &=& g(v_{t}, w_{t}) dt,
\end{array}
\right.
\end{split}
\end{equation}
where for any point $x=(v,w)$, the drift \(F(x)=\bigl[f(v,w),\,g(v,w)\bigr]^{\top}\) is given by
\begin{equation}\label{eq:3}
\begin{split}
\left\{
\begin{array}{lcl}
f(v, w) &=& v(a-v)(v-1)-w,\\
g(v, w) &=& \varepsilon(bv-cw),
\end{array}
\right.
\end{split}
\end{equation}
in which \( v_t \in \mathbb{R} \) and \( w_t \in \mathbb{R} \) represent the fast membrane potential and slow recovery current variables of the model at time \( t \), respectively, and \( b > 0 \), \( c > 0 \) are constant parameters. The noise enters only through the voltage variable. The corresponding diffusion vector is
\begin{equation}\label{diffusion}
N = (1,0)^\top ,
\end{equation}
where \(B_t\) denotes a standard one-dimensional Brownian motion, and \(\sigma\ge 0\) is the noise amplitude.
 
The timescale separation parameter $\varepsilon$ of the model, usually defined  $0<\varepsilon:=\tau/t\ll1$, represents the timescale ratio between slow ($\tau$) and fast ($t$) times dynamics, causing the membrane potential $v$ to evolve faster than the recovery current $w$. Biologically, $\varepsilon$ reflects sodium channel kinetics and shapes action potential morphology \cite{xu2014parameters}. 

The excitability parameter $a$ critically influences fast dynamics (particularly sodium current) and is typically set to $a > 0$ ($0 < a < 0.5$ or $0 < a < 1$ \cite{xu2014parameters}). However, our bifurcation analysis reveals $a < 0$ also has electrophysiological significance, as it enables bistability (co-existing stable fixed point and stable limit cycle), crucial for ISR in the FHN model.

Figures \ref{fig:2}(a)-(c) illustrate stochastic dynamics (\(\sigma > 0\)) of the model, obtained by integrating Eq.~\eqref{eq:2} in the bi-stable regime shown in Fig. \ref{fig:1}(a), using the Euler–Maruyama algorithm \cite{highamAlgorithmicIntroductionNumerical2001}. A spike is recorded whenever $v$ reaches or exceeds \(v_{\mathrm{th}} = 0.25\), since trajectories crossing this threshold autonomously make a large excursion in phase space before returning.

Figure~\ref{fig:2}(a) shows 29 periodic spikes for weak noise ($\sigma = 5.0 \times 10^{-4}$) during 2000 time units, after a thermalization of 5000 time units.  Surprisingly, a stronger noise amplitude ($\sigma = 5.0 \times 10^{-3}$) in Fig.~\ref{fig:2}(b) shows a complete inhibition of spikes. A stronger noise ($\sigma = 5.0 \times 10^{-2}$) in Fig.~\ref{fig:2}(c) yields only 22 spikes during the same time interval. 

This non-monotonic dependence of the spike count on the noise amplitude, characterized by maximal inhibition at an intermediate value of $\sigma$ despite identical initial conditions in the limit-cycle basin, defines ISR. Mechanistically, it arises from the asymmetry between the mean first exit times (MFETs) from the basins of attraction of the coexisting stable fixed point and stable limit cycle \cite{tuckwellAnalysisInverseStochastic2012,yeInverseStochasticResonance2023,luInverseStochasticResonance2020,gutkinInhibitionRhythmicNeural2009,tuckwellInhibitionModulationRhythmic2009}, as computed in Figs.~\ref{fig:2}(d) and \ref{fig:2}(e).

\begin{figure}
\centering
\includegraphics[width=5.0cm,height=4.0cm]{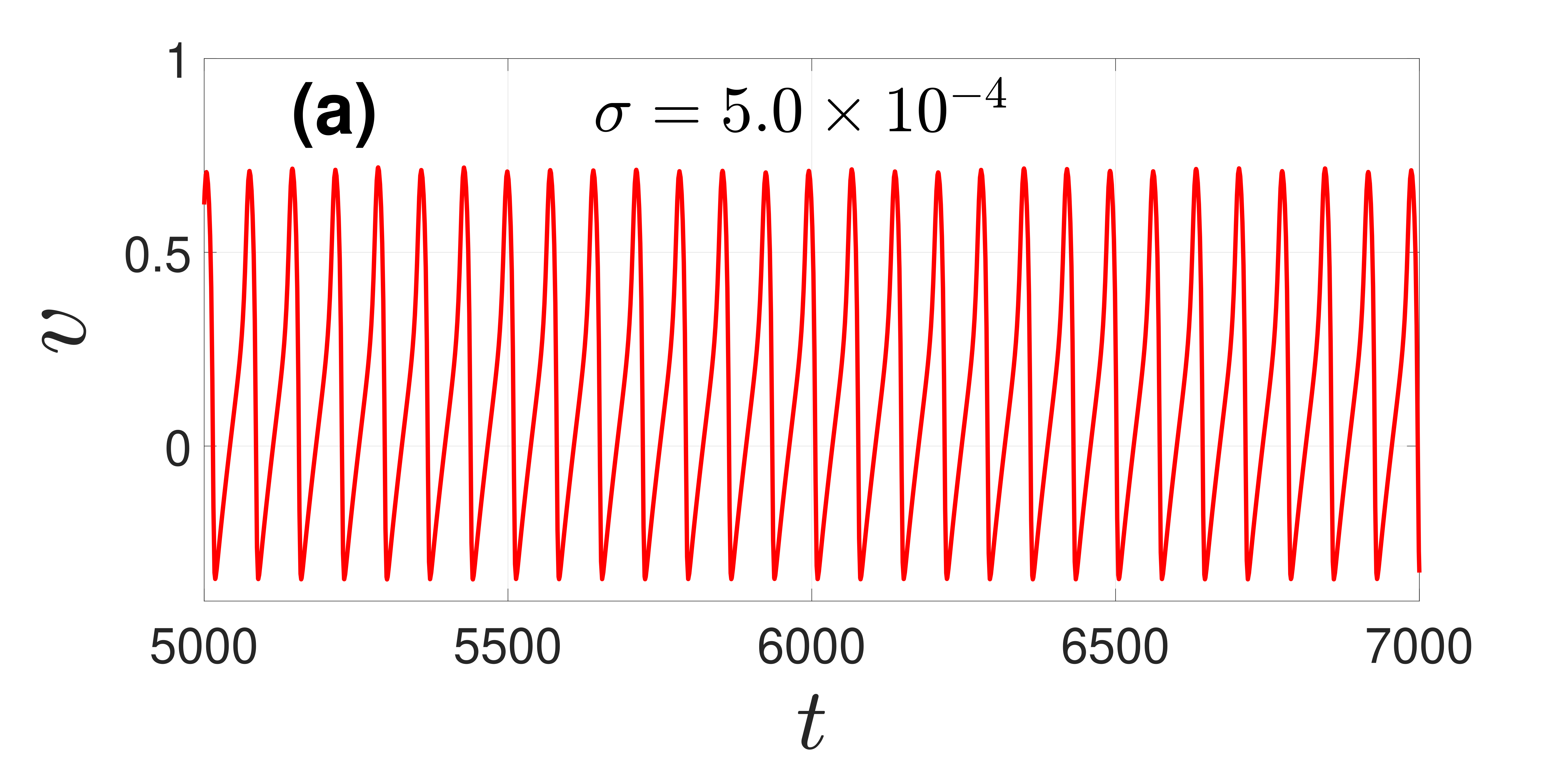}\includegraphics[width=5.0cm,height=4.0cm]{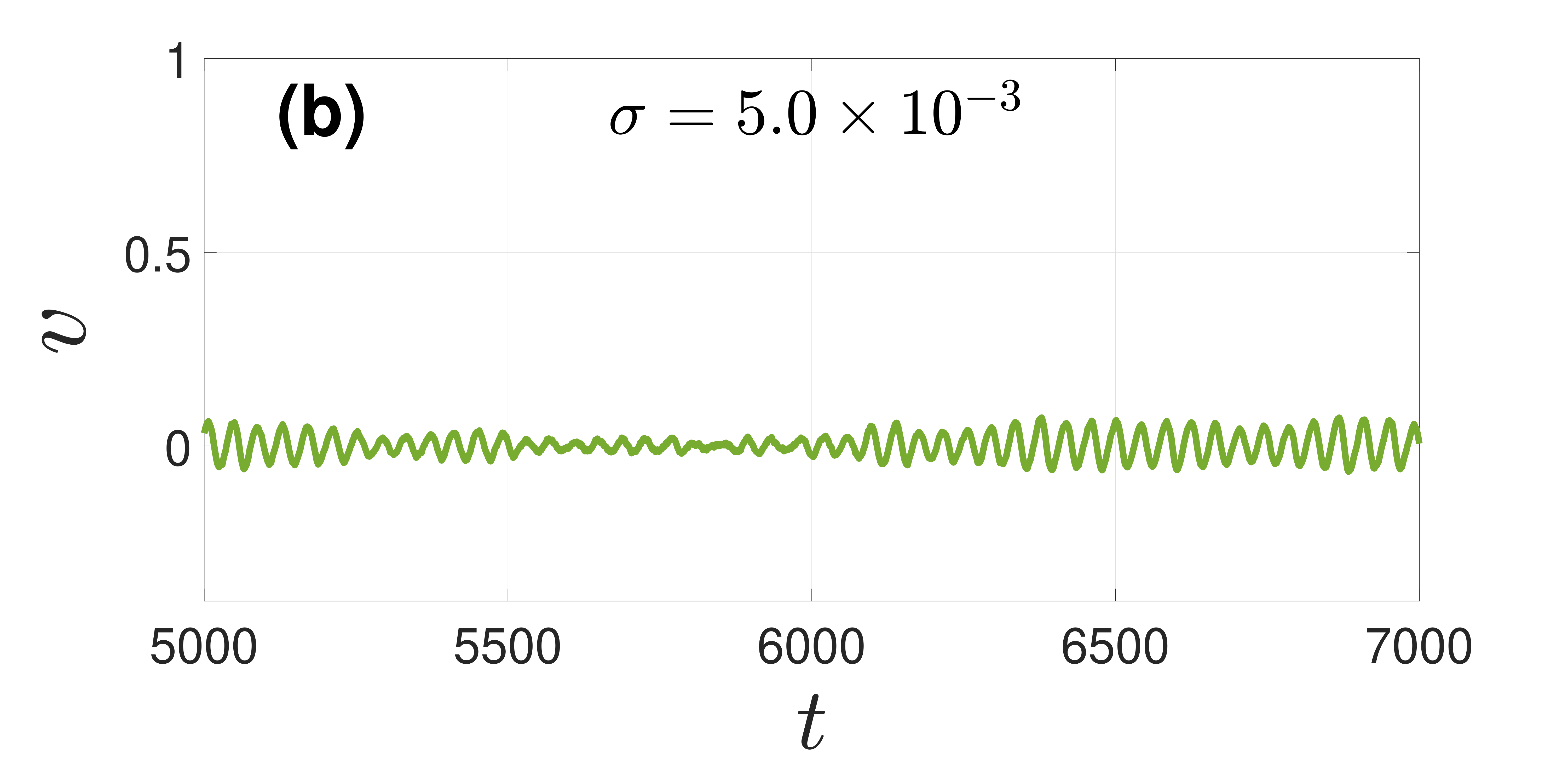}\includegraphics[width=5.0cm,height=4.0cm]{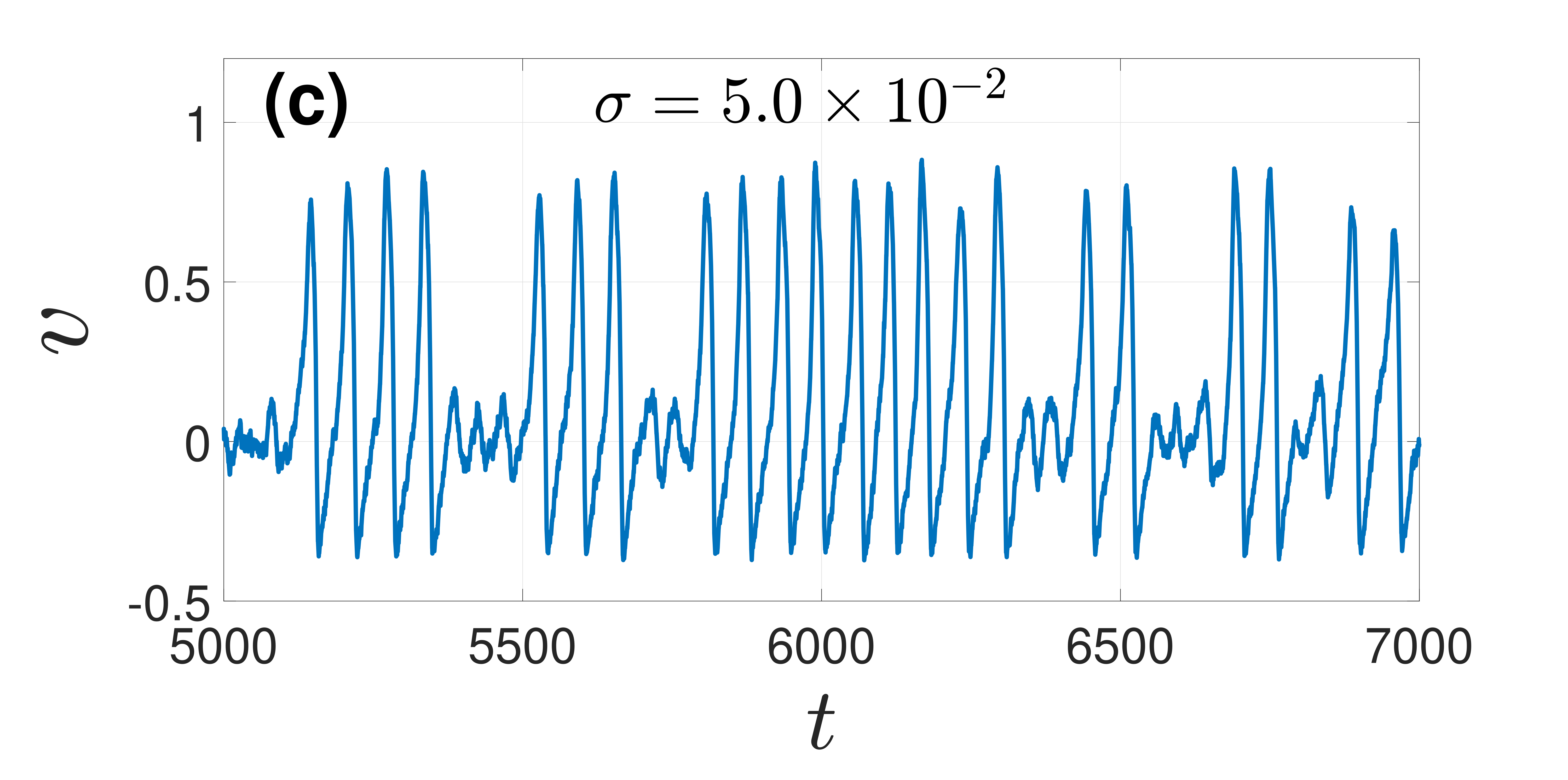}
\includegraphics[width=7.5cm,height=4.0cm]{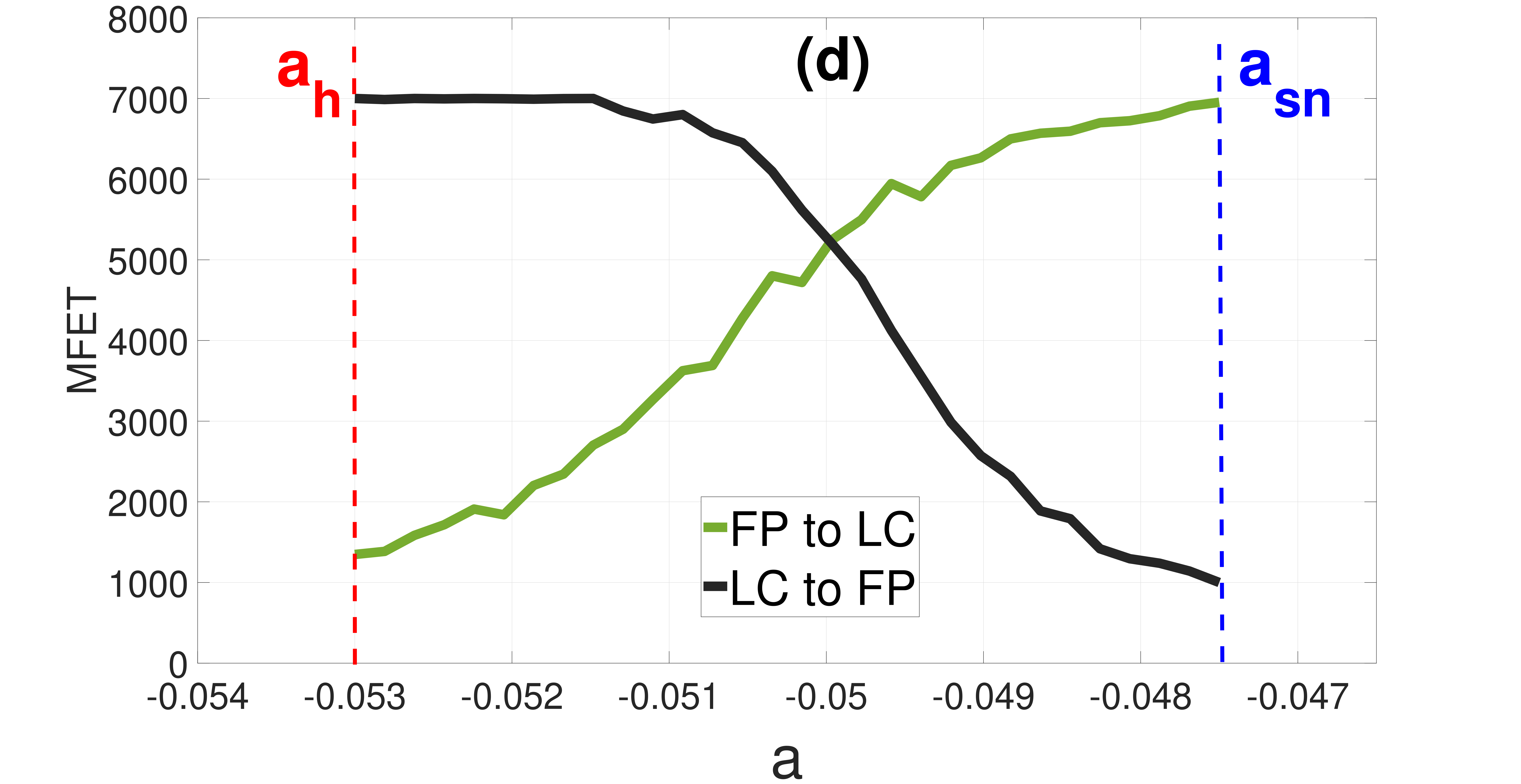}\includegraphics[width=7.5cm,height=4.0cm]{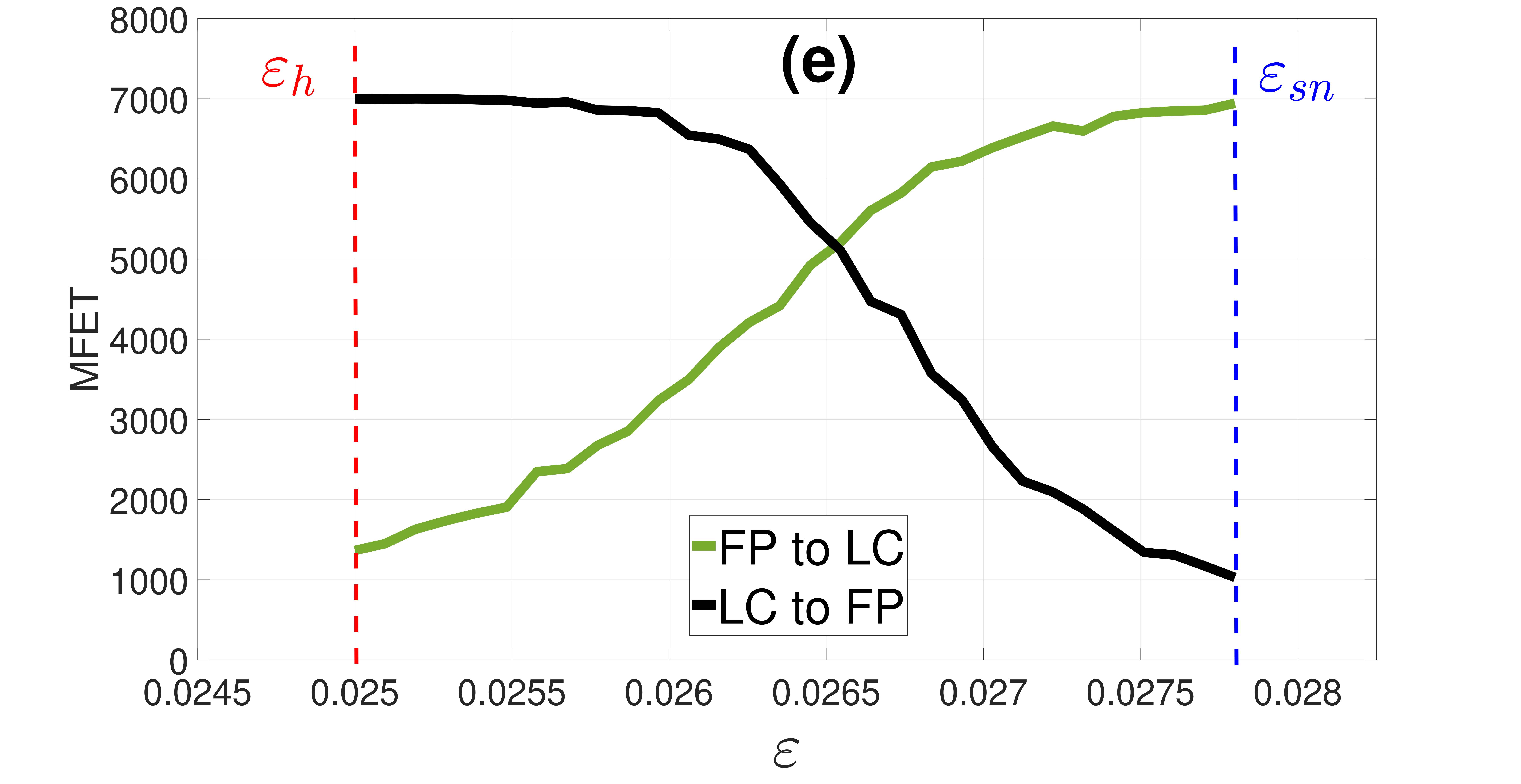}
\caption{(a)–(c) Time series of the potential variable \(v\) in Eq. \ref{eq:2} showing the non-monotonic effects on the number of spikes as noise amplitude  \(\sigma\) increases and when the initial conditions are \(X_0=(v(0),w(0)) = (0.75, 1.0)\). 
(d)–(e) MFET vs. $a$ (d) and $\varepsilon$ (e) within their bistability intervals. Green: MFET from $\mathcal{B}_\FP$ to $\mathcal{B}_\LC$; black: reverse. Vertical red/blue lines: Hopf/SNLC thresholds. $\varepsilon=0.0265$ in (d), $a=-0.05$ in (e), \(b = 1.0\), \(c = 2.0\) in all.}
\label{fig:2}
\end{figure}

\section{Zero-noise bifurcation analysis and bistability necessary for ISR}
\label{sec:Zero-noise bifurcation}
Dynamical systems theory establishes that bistability (coexisting stable fixed point and stable limit cycle) in the deterministic ($\sigma=0$) version of the system is necessary for ISR when the noise term is switched on ($0<\sigma\ll1$) \cite{tuckwellAnalysisInverseStochastic2012,yeInverseStochasticResonance2023,luInverseStochasticResonance2020,gutkinInhibitionRhythmicNeural2009,guoInhibitionRhythmicSpiking2011}. Therefore, we present a complete deterministic ($\sigma=0$) bifurcation analysis of the FHN model, to identify the precise parameter intervals of $a$ and $\varepsilon$ in which the model is bi-stable.

We carry out this deterministic bifurcation analysis on the slow timescale version of Eq. \eqref{eq:2}. That is, using the scaling  $0<\varepsilon:=\tau/t$, we rewrite the deterministic part of Eq. \eqref{eq:2} as
\begin{align}\label{eq:4}
\begin{split}
\left\{\begin{array}{lcl}
\varepsilon dV_{\tau}&=&[V_{\tau}(a-V_{\tau})(V_{\tau}-1)-W_{\tau}]d\tau,\\
dW_{\tau}&=&(bV_{\tau}-cW_{\tau})d\tau.
\end{array}\right.
\end{split}
\end{align}
The deterministic ODEs in Eq. \eqref{eq:2} ($\sigma=0$) and Eq. \eqref{eq:4} share identical topological phase space structure, preserving attractors' existence and trajectories (up to a time constant) \cite{kuehn2015multiple}. In the bifurcation analysis of Eq. \eqref{eq:4} below, we drop the subscripts $\tau$ on $V_{\tau}$ and $W_{\tau}$ for simplicity. The set of fixed points of Eq.~\eqref{eq:4}, defined by the intersection of the $V$- and $W$-nullclines, is given by 
\begin{equation}\label{eq:5}
\Big\{(V_e,W_e)\in\mathbb{R}^2:V_e(a-V_e)(V_e-1)-W_e=(bV_e-cW_e)=0\Big\},
\end{equation}
corresponds to the rest (quiescent) states of the neuron, where the variables $V$ and $W$ reach a stationary state. 
One obtains, from Eq.\eqref{eq:5}, the fixed point equations
\begin{align}\label{eq:6}
\begin{split}
\left\{\begin{array}{lcl}
 bV_e/c&=&-V_e^3+(a+1)V_e^2-aV_e,\\
  W_e&=&bV_e/c,
 \end{array}\right.
\end{split}
\end{align}
which have three solutions given as
\begin{align}\label{eq:7}
\begin{split}
\left\{\begin{array}{lcl}
V_1&=&0,\\[1.0mm]
V_{2,3}&=&(a+1)/2\pm \sqrt{(a-1)^2/4 -b/c}.
 \end{array}\right.
\end{split}
\end{align}
\begin{proposition}\label{proposition 1} If $(a-1)^2/4 < b/c$, then the origin $(V_1,W_1)=(0,0)$ is the unique fixed point of Eq. \eqref{eq:4}.
\end{proposition}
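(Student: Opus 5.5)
The plan is to reduce the fixed-point system \eqref{eq:5} to a single polynomial equation in $V_e$ and then show that, under the hypothesis, the only real root is $V_e=0$. The second equation $bV_e-cW_e=0$ gives $W_e=bV_e/c$, since $c>0$. Substituting this into the first equation yields the cubic relation in \eqref{eq:6}. Every fixed point is therefore determined by a real root $V_e$ of
\begin{equation*}
V_e\Bigl[V_e^2-(a+1)V_e+a+\tfrac{b}{c}\Bigr]=0 ,
\end{equation*}
with $W_e=bV_e/c$ then uniquely determined. To get this form, expand $V(a-V)(V-1)=-V^3+(a+1)V^2-aV$ and move $bV/c$ to the other side.

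The first step is to note that $V_1=0$ always solves this equation, giving the fixed point $(0,0)$. The second step is to study the quadratic factor $q(V)=V^2-(a+1)V+a+b/c$. Its discriminant is
\begin{equation*}
\Delta=(a+1)^2-4a-4\tfrac{b}{c}=(a-1)^2-4\tfrac{b}{c},
\end{equation*}
which is exactly four times the radicand appearing in \eqref{eq:7}.

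Under the hypothesis $(a-1)^2/4<b/c$ we have $\Delta<0$, so $q$ has no real roots; indeed $q(V)>0$ for all real $V$. Hence $V_e=0$ is the only real solution, and $W_e=0$ follows. This proves that the origin is the unique fixed point of Eq.~\eqref{eq:4}.

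There is no real obstacle here. The only point that needs care is verifying the algebraic identity $(a+1)^2-4a=(a-1)^2$. One should also check that $q(0)=a+b/c$ cannot vanish when $\Delta<0$, so that $V=0$ is a simple root and not a repeated one. This is automatic, since $q$ has no real roots at all.
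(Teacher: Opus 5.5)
Your argument is correct and is essentially the paper's: the paper gives no separate proof but obtains Eqs.~\eqref{eq:6}--\eqref{eq:7} by substituting $W_e=bV_e/c$ and factoring out $V_e$, so under the hypothesis the radicand in \eqref{eq:7} is negative, $V_{2,3}$ are non-real, and only the origin remains. Your discriminant identity $(a+1)^2-4(a+b/c)=(a-1)^2-4b/c$ is simply that step written out explicitly.
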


The linearization of Eq.\eqref{eq:4} at a given fixed point
$V_e$ is
\begin{equation}\label{eq:12}
\begin{split}
\left\{\begin{array}{lcl}
\varepsilon d\overline{V}&=& (-3V_e^2\overline{V} +2(a+1)V_e \overline{V} -a\overline{V} -\overline{W})d\tau,\\[2.0mm]
d\overline{W}&=& (b\overline{V} -c \overline{W})d\tau.
\end{array}\right.
\end{split}
\end{equation}
To analyze the bifurcation characteristics at the fixed point $V_e$, one examines the eigenvalues of the corresponding Jacobian matrix, which is defined by
\begin{equation}\label{eq:13}
J=\left[  \begin{array}{cc} \frac{1}{\varepsilon}\big(-3V_e^2 +2(a+1)V_e-a \big)&\:\:\:\:\: -\frac{1}{\varepsilon}\\
b&\:\:\:\:\: -c \end{array} \right].
\end{equation}
The stability of the fixed points $V_e$ is determined by the signs of the trace and determinant of $ J $. For a fixed point $ V_e $ to be stable, it suffices that $\operatorname{tr}J < 0 $ and $ \det J > 0 $. Given that $ \varepsilon $, $ c >0$, we achieve $\operatorname{tr}J < 0 $ and $ \det J > 0 $  only if
\begin{equation}\label{eq:14} 
-3V_e^2 +2(a+1)V_e  -a<0.
\end{equation} 

Equation \eqref{eq:14} indicates that the fixed point $ V_e $ must lie on the decreasing segments of the $v$-nullcline of Eq. \eqref{eq:4}, {\it i.e.,} the curve of solutions to the cubic equation  $w = -v^3 + (a+1)v^2 - av$, where $dw/dv<0$, in order to be stable. As we shall see later, the inequality in Eq.\eqref{eq:14} is a sufficient, but not necessary condition for a fixed point $V_e$ to be stable.

When there are three distinct fixed points $V_e = \{V_1, V_2, V_3\}$ as given in Eq.\eqref{eq:7}, this condition applies to the leftmost and rightmost fixed points, which would both be stable, while the middle fixed point, {\it i.e.,} $V_2$ would be unstable. 
\begin{proposition}\label{proposition 2}
If $ a > 0 $, then the fixed point $(V_1,W_1)=(0,0)$ of Eq. \eqref{eq:4} is stable. 
\end{proposition}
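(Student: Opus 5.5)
The plan is to linearize Eq.~\eqref{eq:4} at the origin and read off stability from the trace and determinant of the Jacobian \eqref{eq:13}, evaluated at $V_e=0$. First I will confirm that $(0,0)$ is a fixed point for every parameter choice. This is immediate from Eq.~\eqref{eq:7}, since $V_1=0$ always solves \eqref{eq:6} and then $W_1=bV_1/c=0$. No assumption on the number of fixed points, such as Proposition~\ref{proposition 1}, is needed, because the claim concerns only the origin.

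Next I substitute $V_e=0$ into \eqref{eq:13}. This gives
\begin{equation*}
J_0=\begin{bmatrix} -a/\varepsilon & -1/\varepsilon\\ b & -c\end{bmatrix},
\end{equation*}
with
\begin{equation*}
\operatorname{tr}J_0=-\frac{a}{\varepsilon}-c, \qquad \det J_0=\frac{ac}{\varepsilon}+\frac{b}{\varepsilon}=\frac{ac+b}{\varepsilon}.
\end{equation*}
Since $\varepsilon,b,c>0$ and $a>0$, we get $\operatorname{tr}J_0<0$ and $\det J_0>0$. Equivalently, condition \eqref{eq:14} at $V_e=0$ reduces to $-a<0$. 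Both eigenvalues of $J_0$ are roots of $\lambda^2-(\operatorname{tr}J_0)\lambda+\det J_0=0$, so they have negative real parts, whether they are real or complex.

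The final step invokes the Hartley--Grobman theorem, or more simply the principle of linearized stability. Because the origin is hyperbolic with all eigenvalues in the open left half-plane, it is a locally asymptotically stable fixed point of the nonlinear system \eqref{eq:4}. The same conclusion holds for the fast-time system \eqref{eq:2} with $\sigma=0$, since the two systems differ only by the positive time rescaling discussed earlier.

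I expect no genuine obstacle here. The only point requiring care is the sign bookkeeping in the determinant: the off-diagonal product $(-1/\varepsilon)\cdot b$ enters with a minus sign and so contributes $+b/\varepsilon$. As a result $\det J_0>0$ holds even without using $a>0$, and $a>0$ is needed only to make the trace negative.
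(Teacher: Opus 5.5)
Your proof is correct and follows the paper's route: evaluate the Jacobian \eqref{eq:13} at $V_e=0$, where condition \eqref{eq:14} reduces to $-a<0$, obtain $\operatorname{tr}J=-c-a/\varepsilon<0$ and $\det J=(ac+b)/\varepsilon>0$ as in \eqref{eq:18}, and conclude by linearized stability. One small correction, to a side remark rather than to the proof: $\det J_0>0$ does not hold for every $a$, only for $a>-b/c$, and the trace condition needs only $a>-\varepsilon c$; these weaker conditions are exactly Proposition~\ref{proposition 4}. Also, the theorem you cite is Hartman--Grobman.
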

It is worth noting that Proposition \ref{proposition 2} is a sufficient (but not necessary) condition for $(V_1,W_1)=(0,0)$ to be stable.

To obtain complex conjugate eigenvalues of the Jacobian in Eq.\eqref{eq:13} with vanishing real part, we must have that  $(\operatorname{tr}J)^2-4\det J<0$. Hence,  the conditions for complex conjugate eigenvalues are given by
\begin{equation}\label{eq:15}
\begin{split}
\left\{\begin{array}{lcl}
\frac{1}{\varepsilon}\big(-3V_e^2 +2(a+1)V_e -a\big) +c
-2\sqrt{b/\varepsilon}<0,\\[2.0mm]
\frac{1}{\varepsilon}\big(-3V_e^2 +2(a+1)V_e -a\big) +c
+2\sqrt{b/\varepsilon}>0,
\end{array}\right.
\end{split}
\end{equation}
and the condition for a vanishing real part
\begin{equation}\label{eq:16} 
\frac{1}{\varepsilon}\big(-3V_e^2
+2(a+1)V_e  -a\big)-c=0.
\end{equation} 
\begin{proposition}\label{proposition 3}
A fixed point $(V_e,W_e)$ of Eq. \eqref{eq:4} exhibits an Andronov-Hopf bifurcation if and only if the conditions in Eqs. \eqref{eq:15} and \eqref{eq:16} are simultaneously satisfied.
\end{proposition}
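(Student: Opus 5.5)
Write $\alpha(V_e) := -3V_e^2+2(a+1)V_e-a$, so that by Eq.~\eqref{eq:13}
\[
\operatorname{tr}J=\frac{\alpha}{\varepsilon}-c,\qquad
\det J=\frac{-c\,\alpha+b}{\varepsilon}.
\]
The plan is to reduce Proposition~\ref{proposition 3} to the standard planar Hopf criterion. At a fixed point of a smooth two-dimensional system, an Andronov--Hopf bifurcation requires a pair of purely imaginary eigenvalues $\pm i\omega$ with $\omega>0$. For a $2\times2$ matrix this is equivalent to $\operatorname{tr}J=0$ together with $\det J>0$, or, equivalently, $\operatorname{tr}J=0$ together with a negative discriminant $(\operatorname{tr}J)^2-4\det J<0$. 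I will show that these two conditions are exactly Eqs.~\eqref{eq:16} and \eqref{eq:15}.

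\textbf{Trace condition.} Multiplying $\operatorname{tr}J=0$ by nothing more than a rearrangement gives $\alpha/\varepsilon - c=0$, which is Eq.~\eqref{eq:16}.

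\textbf{Complex eigenvalues.} Next I rewrite the discriminant. A direct computation gives
\[
(\operatorname{tr}J)^2-4\det J=\Big(\frac{\alpha}{\varepsilon}+c\Big)^2-\frac{4b}{\varepsilon}.
\]
Since $b,\varepsilon>0$, this quantity is negative exactly when
\[
-2\sqrt{b/\varepsilon}<\frac{\alpha}{\varepsilon}+c<2\sqrt{b/\varepsilon},
\]
and these two inequalities are precisely the pair in Eq.~\eqref{eq:15}. So Eq.~\eqref{eq:15} characterizes non-real eigenvalues, and together with Eq.~\eqref{eq:16} it characterizes a purely imaginary pair with nonzero frequency. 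This gives the "only if" direction: if a Hopf bifurcation occurs at $(V_e,W_e)$, then both conditions must hold.

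\textbf{Sufficiency.} For the "if" direction, Hopf's theorem also needs the eigenvalues to cross the imaginary axis transversally, $\frac{d}{d\mu}\operatorname{Re}\lambda\neq0$, along some parameter $\mu$. Here I would take $\mu=\varepsilon$ or $\mu=a$. Since $\operatorname{Re}\lambda=\tfrac12\operatorname{tr}J$ whenever the eigenvalues are complex, it suffices to check that $\partial_\varepsilon(\alpha/\varepsilon-c)=-\alpha/\varepsilon^2=-c/\varepsilon\neq0$ on the locus defined by Eq.~\eqref{eq:16}. This holds because $c>0$.

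\textbf{Main obstacle.} The hard step is the genericity condition: the first Lyapunov coefficient must be nonzero, which is what distinguishes a nondegenerate (super- or subcritical) Hopf bifurcation from a degenerate one. I would compute it with the standard formula after shifting the fixed point to the origin. The only nonlinearity is $(-3V_e+(a+1))\bar V^2-\bar V^3$, which enters the fast equation only, so the coefficient reduces to an explicit expression in $V_e$, $a$ and $\varepsilon$.

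Because this condition is not part of Eqs.~\eqref{eq:15}--\eqref{eq:16}, I would read the proposition in the usual linear-level sense: Eqs.~\eqref{eq:15}--\eqref{eq:16} characterize the Hopf point, with the Lyapunov coefficient determining its criticality. Generic degeneracy, i.e.\ the Bautin points, then occurs only on the codimension-two set treated later.
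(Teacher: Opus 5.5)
Your proposal is correct and follows the paper's route. The paper likewise identifies Eq.~\eqref{eq:16} with $\operatorname{tr}J=0$ and Eq.~\eqref{eq:15} with $(\operatorname{tr}J)^2-4\det J=(\alpha/\varepsilon+c)^2-4b/\varepsilon<0$, i.e.\ a purely imaginary eigenvalue pair. Your transversality check in $\varepsilon$ and your caveat that a nonzero first Lyapunov coefficient is not encoded in Eqs.~\eqref{eq:15}--\eqref{eq:16} go beyond the paper's purely linear-level argument; the check is valid because the fixed points in Eq.~\eqref{eq:6} do not depend on $\varepsilon$, and the caveat is the right way to read the ``if'' direction.
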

To satisfy Eqs.\eqref{eq:15} and \eqref{eq:16} and hence have a 
Andronov-Hopf bifurcation, the parameters $b,c\ (>0)$ need to satisfy condition
\begin{equation}\label{eq:17} 
c^2 <b/\varepsilon,
\end{equation} 
which is satisfied when $\varepsilon>0$ is sufficiently small, since we have  $b>0$.

In the subsequent analysis, we assume the hypothesis of Proposition \ref{proposition 1}. At the fixed point $V_1=0$, the trace and determinant of the Jacobian in Eq.\eqref{eq:13} are  
\begin{equation}\label{eq:18}
\begin{split}
\left\{\begin{array}{lcl}
\operatorname{tr} J&=&-c-a/\varepsilon,\\[2.0mm]
\det J&=&(ac+b)/\varepsilon.
\end{array}\right.
\end{split}
\end{equation}

\begin{proposition}\label{proposition 4}
If $b>0$, $ c>0$, $a>-b/c$, and  $a>-\varepsilon c$, then the fixed point $(V_1,W_1)=(0,0)$ of Eq. \eqref{eq:4} is stable and in the limit $\varepsilon \to 0$, this stability persists only for $a\ge0$.
\end{proposition}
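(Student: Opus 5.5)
The plan is to read stability of the origin directly off the trace and determinant in Eq.~\eqref{eq:18}, using the criterion already recalled: a fixed point is (linearly, hence locally asymptotically) stable when $\operatorname{tr}J<0$ and $\det J>0$. The origin is always a fixed point, since $V_1=0$ solves Eq.~\eqref{eq:6}. So no appeal to Proposition~\ref{proposition 1} is needed for this part, although under its hypothesis the origin is the only fixed point.

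First, the determinant. Since $\varepsilon>0$, the sign of $\det J=(ac+b)/\varepsilon$ is the sign of $ac+b$. Because $c>0$, the condition $a>-b/c$ is equivalent to $ac+b>0$, which gives $\det J>0$.

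Next, the trace. We have $\operatorname{tr}J=-c-a/\varepsilon=-(a+\varepsilon c)/\varepsilon$. Multiplying by $\varepsilon>0$, $\operatorname{tr}J<0$ is equivalent to $a+\varepsilon c>0$, that is, to $a>-\varepsilon c$. Both Routh--Hurwitz conditions for a $2\times2$ matrix hold, so both eigenvalues have negative real part and the origin is a hyperbolic stable node or focus. I would remark that $a\ge 0$ is the case of Proposition~\ref{proposition 2}, and the new content is the admissible window $-\min(b/c,\varepsilon c)<a<0$.

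Finally, the limit $\varepsilon\to0$, with $b,c$ fixed. The determinant condition $a>-b/c$ does not depend on $\varepsilon$. The trace condition $a>-\varepsilon c$ shrinks to $a>0$, or $a\ge0$ if the boundary is included. Concretely, for any fixed $a<0$ there is $\varepsilon_0=-a/c$ such that $\operatorname{tr}J>0$ for all $\varepsilon<\varepsilon_0$, while $\det J>0$ still holds when $a>-b/c$. In that regime both eigenvalues have positive real part, so the origin is unstable. Hence the set of $a$ giving stability uniformly as $\varepsilon\to0$ is $a\ge0$, matching Proposition~\ref{proposition 2}.

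The main obstacle is the borderline case $a=0$. There $\operatorname{tr}J=-c<0$ and $\det J=b/\varepsilon>0$, so the origin is stable for every $\varepsilon>0$, which justifies "$\ge$". However, the real part of the eigenvalues, $-c/2$ in the complex regime $c^2<4b/\varepsilon$ (the discriminant condition behind Eqs.~\eqref{eq:15} and \eqref{eq:17}), stays bounded while the imaginary part grows like $\sqrt{b/\varepsilon}$. I would phrase the limit statement as concerning the sign conditions for each fixed small $\varepsilon$, not as a statement about a singular limit system, to avoid ambiguity.
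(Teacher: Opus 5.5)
Your proposal is correct and is essentially the argument the paper intends: the statement follows directly from Eq.~\eqref{eq:18}, since $\det J>0 \iff a>-b/c$ and $\operatorname{tr}J<0 \iff a>-\varepsilon c$. As $\varepsilon\to0$, $a=0$ keeps $\operatorname{tr}J=-c<0$ for every $\varepsilon$, while any fixed $a<0$ gives $\operatorname{tr}J>0$ once $\varepsilon<-a/c$; your extra remarks on the borderline case and the complex-eigenvalue regime are accurate but not needed.
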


Recall that Proposition \ref{proposition 2} gave a sufficient ($a>0$), but not necessary condition for the fixed point $(V_1,W_1)=(0,0)$ to be stable. For example, Proposition \ref{proposition 4} gives a different set of sufficient conditions for the stability of $(V_1,W_1)=(0,0)$.

The $v$-nullcline, given by the cubic equation $W = -V^3 +(a+1)V^2 - aV$ loses normal hyperbolicity at its local maximum $V_+$ and minimum $V_-$ points, each located at 
\begin{equation}\label{eq:19}
V_\pm=(a+1)/3 \pm \sqrt{(a+1)^2/9 - a/3}. 
\end{equation}
It is worth noting that $V_- < V_1=0$ if and only if $a < 0$.
\begin{proposition}\label{proposition 6}
The fixed point $(V_1,W_1)=(0,0)$ of Eq. \eqref{eq:4} exhibits two Andronov-Hopf bifurcations at $V_{h}^-$ and $V_{h}^{+}$, given respectively, as per Eq.\eqref{eq:16}, by
\begin{equation}\label{eq:20}
V_\text{h}^{\pm}=(a+1)/3\pm\sqrt{(a+1)^2/9-(a+\varepsilon c)/3},
\end{equation}
if and only if $c^2<b/\varepsilon$ and $3\varepsilon c \le a^2-a +1$.
\end{proposition}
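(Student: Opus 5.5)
The plan is to reduce the statement to Proposition~\ref{proposition 3}: by that result, a Hopf bifurcation occurs at a fixed point with $V$-coordinate $V_e$ exactly when \eqref{eq:15} and \eqref{eq:16} hold simultaneously. I will solve \eqref{eq:16} for $V_e$ and then substitute the solution into \eqref{eq:15}. Throughout I read the statement as the paper sets it up: $V_e$ is treated as the location along the cubic $v$-nullcline at which the fixed point sits as parameters vary. The equilibrium labelled $(V_1,W_1)$ is followed through the family, and the Hopf points are the values of $V_e$ where the trace of \eqref{eq:13} vanishes with complex eigenvalues.

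\textbf{Step 1 (trace condition).} Multiply \eqref{eq:16} by $\varepsilon>0$ to get the quadratic
\[
-3V_e^2+2(a+1)V_e-(a+\varepsilon c)=0 .
\]
Dividing by $-3$ and completing the square gives exactly the two roots $V_h^\pm$ in \eqref{eq:20}. These roots are real (distinct, or coinciding in the boundary case) if and only if the discriminant is nonnegative:
\[
\frac{(a+1)^2}{9}-\frac{a+\varepsilon c}{3}\ge 0
\iff a^2+2a+1-3a-3\varepsilon c\ge 0
\iff 3\varepsilon c\le a^2-a+1 .
\]
So \eqref{eq:16} is solvable exactly under the second stated hypothesis, and its solutions are precisely $V_h^\pm$.

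\textbf{Step 2 (complex-eigenvalue condition).} At $V_e=V_h^\pm$, \eqref{eq:16} gives $\frac1\varepsilon(-3V_e^2+2(a+1)V_e-a)=c$. Substituting this into \eqref{eq:15}:
\begin{itemize}[label={}]
\item the first inequality becomes $2c-2\sqrt{b/\varepsilon}<0$, which is equivalent to $c^2<b/\varepsilon$ since $c>0$ (this is \eqref{eq:17});
\item the second inequality becomes $2c+2\sqrt{b/\varepsilon}>0$, which always holds.
\end{itemize}
Equivalently, with $\operatorname{tr}J=0$ one has $\det J=(b-\varepsilon c^2)/\varepsilon>0$, so the eigenvalues are $\pm i\sqrt{\det J}$. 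Combining Steps 1 and 2 with Proposition~\ref{proposition 3} yields both directions of the ``if and only if''.

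\textbf{Expected obstacle.} The algebra is routine. The delicate point is interpretive. \eqref{eq:13} is evaluated at a general equilibrium, whereas the origin always has $V_e=0$, where the trace is $-c-a/\varepsilon$ by \eqref{eq:18}. I would therefore make explicit that $V_h^\pm$ are the nullcline positions at which the trace condition is met, i.e.\ the Hopf loci for the equilibrium branch. I would also note that for the origin itself the condition collapses to $a=-\varepsilon c$, consistent with Proposition~\ref{proposition 4}.

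A second point to check is genericity. Proposition~\ref{proposition 3} already packages nondegeneracy, so I would rely on it as stated. If needed, I would verify transversality by computing
\[
\frac{d}{da}\operatorname{tr}J=-\frac1\varepsilon\neq0 .
\]
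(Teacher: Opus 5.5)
Your proposal is correct and follows essentially the derivation the paper intends. The paper gives no separate proof, but it first extracts Eq.~\eqref{eq:17} from Eqs.~\eqref{eq:15}--\eqref{eq:16}, and it obtains Eq.~\eqref{eq:20} ``as per Eq.~\eqref{eq:16}''; your two steps carry out exactly these computations.

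Your reading of $V_h^\pm$ as loci on the nullcline, rather than as two Hopf points of the origin itself, matches the paper's own discussion leading to Eq.~\eqref{eq:21}: the origin meets the trace condition only at $a=-\varepsilon c$, where $V_h^-=0$. One thing to add is that in the equality case $3\varepsilon c=a^2-a+1$ the two loci merge and $\partial_{V_e}\operatorname{tr}J$ vanishes there. So ``two'' Hopf points strictly needs the strict inequality, and your $d/da$ transversality check applies to the origin, not to that merged point.
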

Given that $\varepsilon c>0$, the left Hopf bifurcation $V_\text{h}^{-}$ in Eq. \eqref{eq:20} is to the right of the local minimum $V_-$ in Eq. \eqref{eq:19}, {\it i.e.,} $V_-<V_\text{h}^{-}$, and hence $V_\text{h}^{-}$ is on middle (unstable) increasing segment of the $v$-nullcline  $W = -V^3 +(a+1)V^2 - aV$. 

Whenever the fixed point $V_1=0$ is located on the left decreasing segment of the $v$-nullcline, specifically to the left of its local minimum $V_-$ ({\it i.e.,} $0=V_1<V_-$), it remains stable, as per Eq. \eqref{eq:14}. However, as $\varepsilon\to0$ in Eq. \eqref{eq:20}, the left Hopf bifurcation point $V_\text{h}^{-}$, {\it i.e.,} where the fixed point $V_1=0$ loses its stability, shifts toward the local minimum  $V_-$ in Eq. \eqref{eq:19}. 

For $\varepsilon > 0$, the left Hopf bifurcation $V_\text{h}^{-}$ occurs to the right of the local minimum $V_-$ (\textit{i.e.,} $V_-<V_\text{h}^{-}$). Hence, the stability of the fixed point $V_1=0$ extends slightly into the increasing segment of the $v$-nullcline.  We notice that for the fixed point $V_e=V_1=0$, Eq. \eqref{eq:16} has no solution under the conditions $\varepsilon>0$ and $ c > 0 $, when $ a > 0 $. But when $ a < 0 $, Eq. \eqref{eq:16} is satisfied for $V_e=V_1 = 0 $ if
\begin{equation}\label{eq:21}
\varepsilon_\text{h} = - a/c.
\end{equation}
Hence, to position the stable and unique fixed point $V_1 = 0$ to the right of the local minimum $V_-$ ({\it i.e.,} $V_-<V_1=0$), it suffices to choose specific values of $a < 0 $ and $c > 0 $ such that $a $ and $c $ satisfy the assumptions of Proposition \ref{proposition 1} and $0 = V_1 < -a/c = \varepsilon_\text{h}$.  This shows that when $a < 0 $ and as $\varepsilon>0$ increases, the fixed point $V_1 = 0 $ is positioned to the right of the local minimum $V_-$ ({\it i.e.,} $V_- < V_1 = 0$), and also in the region ({\it i.e.,} $V_- <V_1<V_\text{h}^{-}$) where it eventually loses its stability, via a Hopf bifurcation when $\varepsilon=\varepsilon_\text{h}$. 

This subcritical Hopf bifurcation (occurring as $\varepsilon$ increases) produces:
\textit{(i)}  a stable limit cycle  $\Gamma_\LC$ around the stable fixed point \(X_\FP=(V_1,W_1)\) and \textit{(ii)} an unstable limit cycle (saddle) $\Gamma_\SP$ between \(X_\FP\) and \(\Gamma_\LC\) emerging by topological necessity, with the unstable cycle confirming the bifurcation's subcritical nature. Combining the assumptions of Propositions \ref{proposition 1}, \ref{proposition 6}, and the inequality \(V_- < V_1 < V_{\mathrm{h}}^{-}\), we establish the following co-existence result for the FHN neuron model in Eq. \eqref{eq:4}:

\begin{proposition}\label{proposition 7}
The deterministic ($\sigma=0$) FHN neuron model Eq. \eqref{eq:4} with \(b > 0\), \(c > 0\), and \(\varepsilon > 0\) admits bistability between a stable unique fixed point
\(X_\FP=(V_1,W_1)=(0,0)\) and a stable \(T_\LC\)-periodic limit cycle $\Gamma_\LC$, provided the following conditions hold simultaneously:
\begin{enumerate}[label=(\roman*)]
    \item  $(a-1)^2/4 < b/c$,
    \item \(a > -b/c\) and \(a > -\varepsilon c\),
    \item \(V_- < V_1 < V_h^{-}\),
\end{enumerate}
where \(V_-\) and $V_h^{-}$ denote, respectively, the local minimum of the \(v\)-nullcline and the \(V\)-coordinate of the left Hopf bifurcation point, as given in Eqs.~\eqref{eq:19} and \eqref{eq:20}, and where the corresponding subcritical Hopf bifurcation threshold in the parameter \(\varepsilon\) is given by Eq. \eqref{eq:21}.
\end{proposition}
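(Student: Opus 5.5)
The plan is to assemble the coexistence result from the earlier propositions, treating each hypothesis as responsible for one ingredient. Condition (i) places us under Proposition~\ref{proposition 1}, so the discriminant in Eq.~\eqref{eq:7} is negative, $V_{2,3}$ are non-real, and $X_\FP=(0,0)$ is the unique equilibrium. Condition (ii) is exactly the hypothesis of Proposition~\ref{proposition 4}. From Eq.~\eqref{eq:18}, $a>-b/c$ gives $\det J=(ac+b)/\varepsilon>0$, and $a>-\varepsilon c$ gives $\operatorname{tr}J=-c-a/\varepsilon<0$. Hence $X_\FP$ is a hyperbolic sink; this holds even when $a<0$, so the sufficient condition \eqref{eq:14} is not needed.

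Next I would use condition (iii) to locate $X_\FP$ on the increasing middle branch of the $v$-nullcline, strictly left of the left Hopf point $V_h^-$ of Proposition~\ref{proposition 6}. Since $V_-<0$ forces $a<0$, the Hopf threshold $\varepsilon_\text{h}=-a/c$ of Eq.~\eqref{eq:21} is positive. Condition (ii) says precisely that $\varepsilon<\varepsilon_\text{h}$, so we are on the stable side of the Hopf bifurcation at $V_1=0$.

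I would then compute the first Lyapunov coefficient $\ell_1$ at $(V,W)=(0,0)$, $\varepsilon=\varepsilon_\text{h}$. I would write the system in the normal form coordinates of the eigenbasis of $J$, with frequency $\omega=\sqrt{\det J}$. The cubic $f$ has quadratic coefficient $(a+1)$ and cubic coefficient $-1$. The standard planar formula should give $\ell_1\propto -6\cdot(-1)\cdot(\ldots)+ (a+1)^2(\ldots)$, and I expect it to be positive for $a<0$ near the bistable wedge. That sign makes the Hopf bifurcation subcritical. For $\varepsilon$ slightly below $\varepsilon_\text{h}$, this yields an unstable small cycle $\Gamma_\SP$ surrounding the stable focus $X_\FP$.

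To obtain the stable large cycle $\Gamma_\LC$, I would build a trapping annulus. The outer boundary is a large closed curve on which the drift points inward; this works because the cubic $-v^3$ dominates in $f$ and $-cw$ in $g$, e.g. using the level sets of $V(v,w)=bv^2/2+w^2/2$ for large radius. The inner boundary is a curve just outside $\Gamma_\SP$, e.g. a slightly perturbed copy of it, on which the flow exits the disk bounded by $\Gamma_\SP$. The annulus contains no equilibria by uniqueness of $X_\FP$, so Poincaré--Bendixson gives a periodic orbit in it. Taking the outermost periodic orbit, which is attracting from outside, and using analyticity (finitely many cycles), I would argue that this orbit is $\Gamma_\LC$. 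In the relaxation regime $\varepsilon\ll1$, hyperbolic stability of the large cycle follows from the canard/relaxation-oscillation theory \cite{kuehn2015multiple}, with negative divergence integral along the slow branches.

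The main obstacle is the global step: justifying that $\Gamma_\SP$ persists and stays separated from $\Gamma_\LC$ throughout the whole wedge, not just near $\varepsilon_\text{h}$, and that $\Gamma_\LC$ is asymptotically stable. Local Hopf theory only covers a neighborhood of $\varepsilon_\text{h}$, and the other boundary of the wedge is the saddle-node of limit cycles. I would first try the divergence-integral estimate on the relaxation cycle. Failing that, I would state the bistability as holding for $\varepsilon$ in an interval below $\varepsilon_\text{h}$, bounded by the SNLC curve identified numerically.
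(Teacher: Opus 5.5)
Your proposal has two genuine gaps, a sign error and an unproved criticality sign, plus the global issue you already flag.

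\textbf{The Hopf step is on the wrong side of $\varepsilon_{\mathrm h}$.} Condition (ii), $a>-\varepsilon c$, is equivalent to $\varepsilon>-a/c=\varepsilon_{\mathrm h}$, not to $\varepsilon<\varepsilon_{\mathrm h}$. Your own trace computation says exactly this: $\operatorname{Re}\lambda=\tfrac12\operatorname{tr}J=-\tfrac12(c+a/\varepsilon)$ is negative precisely for $\varepsilon>\varepsilon_{\mathrm h}$. It crosses zero with $\frac{d}{d\varepsilon}\operatorname{Re}\lambda=a/(2\varepsilon^2)<0$. So for $\varepsilon$ slightly below $\varepsilon_{\mathrm h}$ the origin is an unstable focus, condition (ii) fails there, and a subcritical Hopf bifurcation produces no small cycle on that side. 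The repelling cycle $\Gamma_\SP$ around the stable focus exists for $\varepsilon$ slightly \emph{above} $\varepsilon_{\mathrm h}$. The bistable interval is therefore $(\varepsilon_{\mathrm h},\varepsilon_{\mathrm{sn}})$, e.g.\ $(0.025,0.027865)$ at $a=-0.05$, matching the paper's wedge $\Omega_{\mathrm{bi}}$. Your fallback ``interval below $\varepsilon_{\mathrm h}$'' must be flipped accordingly.

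\textbf{The sign of $\ell_1$ is not implied by (i)--(iii).} You only ``expect'' it to be positive, but it can be computed. Work in fast time at $\varepsilon=\varepsilon_{\mathrm h}$; the sign is the same in slow time. Set $\alpha=-a$ and $\omega^2=\det=-a(a+b/c)>0$. The coordinates $x=v$, $y=(w-\alpha v)/\omega$ give $\dot x=-\omega y+N(x)$ and $\dot y=\omega x-(\alpha/\omega)N(x)$, with $N(x)=(a+1)x^2-x^3$. The Guckenheimer--Holmes formula then gives $16\ell_1=N'''(0)+\frac{\alpha}{\omega^2}N''(0)^2=-6+\frac{4(a+1)^2}{a+b/c}$. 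The cubic term is stabilizing, so any subcriticality must come from the quadratic term. Hence $\ell_1>0$ if and only if $2(a+1)^2>3(a+b/c)$. This is an \emph{upper} bound on $b/c$, whereas (i) is a lower bound. It holds for the paper's $b=1$, $c=2$, $a\approx-0.05$, since $1.805>1.35$. But take $b=c=1$, $a=-0.05$, and $\varepsilon$ slightly above $0.05$. These values satisfy (i)--(iii) and Proposition~\ref{proposition 6}, yet the Hopf bifurcation is \emph{supercritical}. Then no $\Gamma_\SP$ is born, and the inner boundary of your annulus does not exist. This inequality has to be added as a hypothesis.

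\textbf{Comparison with the paper.} With both fixes, your route is a genuinely more rigorous version of the paper's argument. Your steps are: uniqueness from Proposition~\ref{proposition 1}, the trace/determinant signs, $\ell_1>0$, then Poincar\'e--Bendixson between $\Gamma_\SP$ and a large Lyapunov level set. The paper only combines Propositions~\ref{proposition 1}, \ref{proposition 4}, \ref{proposition 6} with the location inequality. It then asserts that the subcritical Hopf bifurcation yields $\Gamma_\SP$ and $\Gamma_\LC$ ``by topological necessity'', and takes the SNLC boundary from numerical continuation (Fig.~\ref{fig:1}(d)). Like the paper's argument, yours yields bistability only for $\varepsilon$ near $\varepsilon_{\mathrm h}$. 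Conditions (i)--(iii) impose no upper bound on $\varepsilon$, while the cycles disappear at $\varepsilon_{\mathrm{sn}}$, so the SNLC restriction belongs in the hypotheses.

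\textbf{A smaller point in the annulus step.} Take as $\Gamma_\LC$ the first cycle outward from $\Gamma_\SP$ that attracts from outside; it then attracts from both sides. The outermost cycle is only guaranteed to attract from outside.
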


In this paper, we denote the bistability regime of Proposition \ref{proposition 7} by the wedge $\Omega_{\text{bi}}$:
\begin{equation}
\Omega_{\text{bi}} := \left\{(a,\varepsilon)\in\mathbb R^2 : 
\varepsilon_h(a)<\varepsilon<\varepsilon_{sn}(a) \quad
\text{or}\quad a_h(\varepsilon)<a<a_{sn}(\varepsilon)
\right\},  
\end{equation}
where $\varepsilon_h(a)=-a/c$ is the subcritical Hopf curve, $a_h(\varepsilon)=-c\varepsilon$ is the vertical Hopf condition, $\varepsilon_{sn}(a)$ and  $a_{sn}(\varepsilon)$ are the saddle-node–of–limit-cycles curves.
 For all $(a,\varepsilon)\in\Omega_{\text{bi}}$, Eq. \eqref{eq:2} (with $\sigma=0$) possesses exactly three invariant sets: $X_\FP=(V_1,W_1)=(0,0)$, $\Gamma_\LC$, and  $\Gamma_\SP$.

 \begin{figure}[h]
\centering
\includegraphics[width=7.75cm,height=4.5cm]{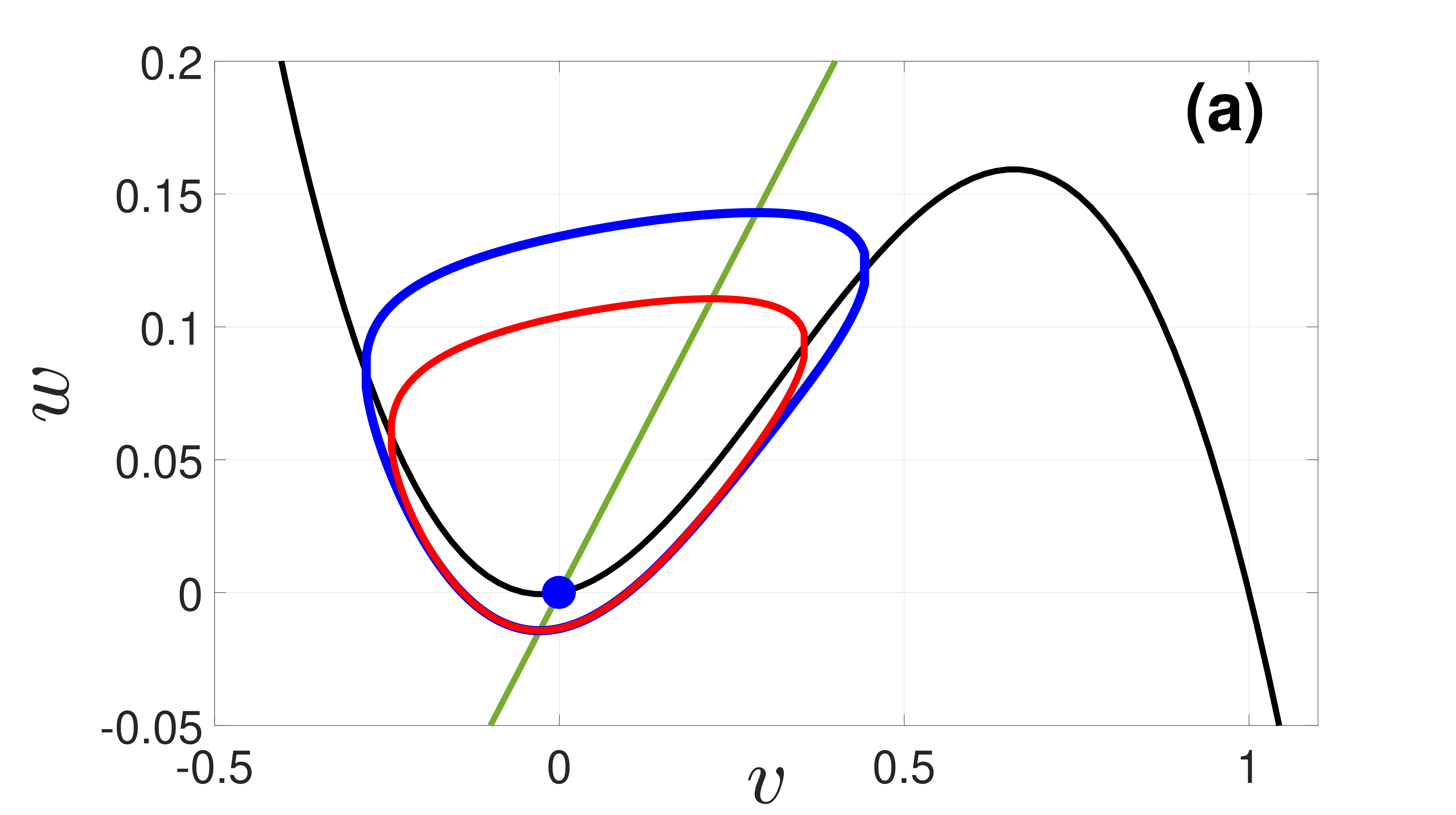}\includegraphics[width=7.5cm,height=4.5cm]{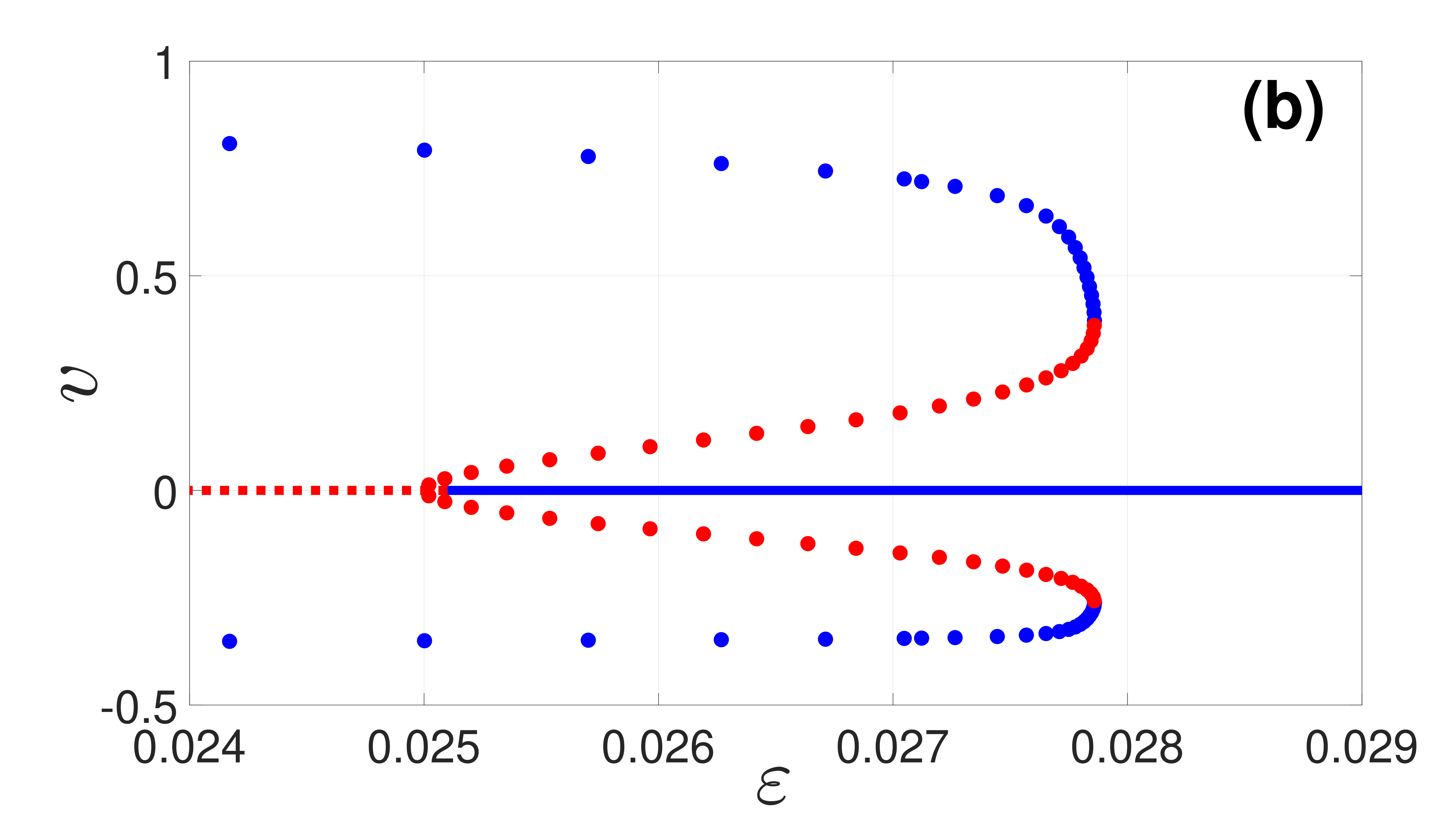}
\includegraphics[width=7.75cm,height=4.5cm]{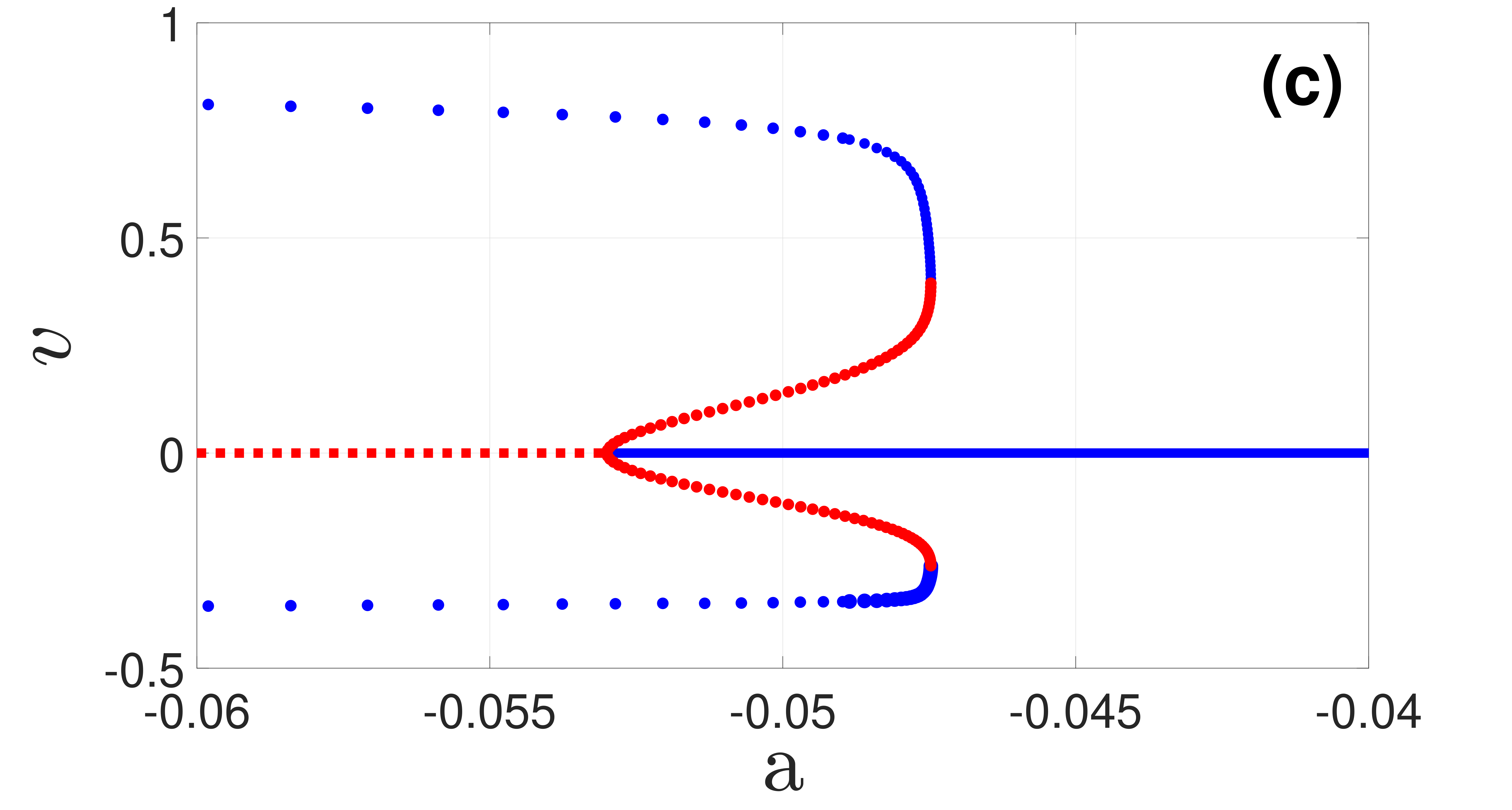}\includegraphics[width=7.75cm,height=4.5cm]{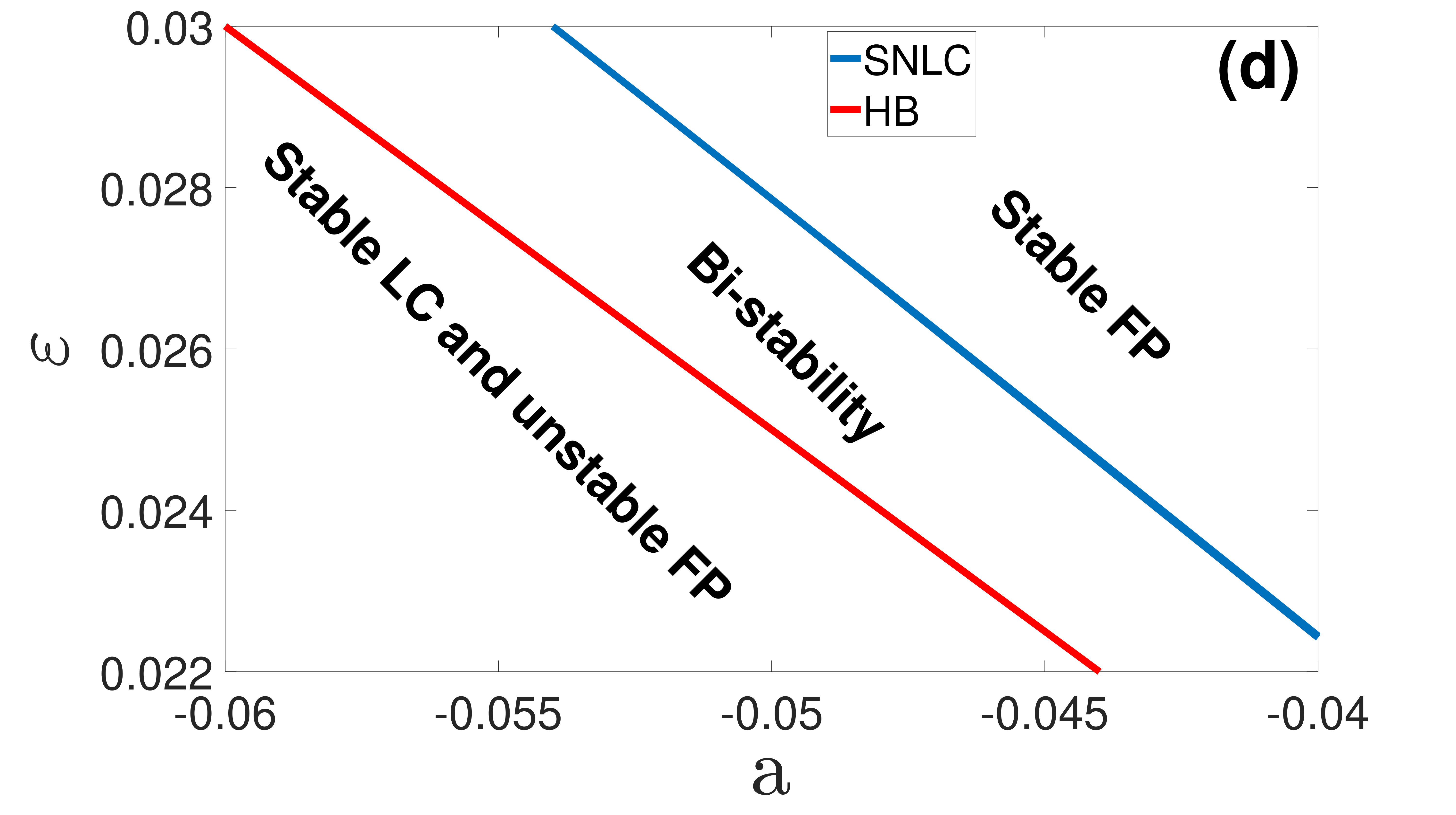}
\caption{(a) Phase portrait: $v$-nullcline (black) and $w$-nullcline (green) intersect at the fixed point (0,0) (blue dot). Unstable (red) and stable (blue) limit cycles encircle (0,0). (b) - (c) Bifurcation diagrams: $v$ vs. $\varepsilon$ ($a=-0.05$) and $v$ vs. $a$ ($\varepsilon=0.0265$) showing bistability (stable FP [blue line], stable LC [blue dots], unstable LC [red dots]). (d) $a$-$\varepsilon$ bifurcation: LC-FP bistability region bounded by subcritical HB and SNLC curves. Parameters: $b=1.0$, $c=2.0$.}
\label{fig:1}
\end{figure}

We also denote the basins of attraction of the stable fixed point  and limit cycle by:
\begin{equation}\label{basin}
\begin{split}
\left\{\begin{array}{lcl}
\mathcal{B}_\FP := \{Y_0 \in \mathbb{R}^2 :\limsup\limits_{\tau\rightarrow\infty}\|Y_{\tau} - X_\FP \| =0\},\\[2.0mm]
\mathcal{B}_{\LC} := \{Y_0\in \mathbb{R}^2 : \limsup\limits_{\tau\rightarrow\infty}\mbox{\rm dist}( Y_{\tau},\Gamma_\LC)=0 \},
\end{array}\right.
\end{split}
\end{equation}
respectively, where $Y_{\tau}=(V_{\tau},W_{\tau})\in\mathbb{R}^2$ is a trajectory of Eq. \eqref{eq:4} with initial condition $Y_0=(V_0,W_0)\in\mathbb{R}^2$. Furthermore, the separatrix is the unstable limit cycle $\Gamma_{\SP}=\partial\mathcal{B}_\FP$.

Furthermore, we denote by
\[
\chi_{\LC}(\tau)
   :=\mathbbm{1}_{\{Y(\tau)\in\mathcal B_{\LC}\}}
   =\begin{cases}
      1,& Y(\tau)\in\mathcal B_{\LC},\\[4pt]
      0,& Y(\tau)\notin\mathcal B_{\LC},
     \end{cases}
\]
the characteristic (indicator) function of the stable limit–cycle basin
\(\mathcal B_{\LC}\).

Using Proposition \ref{proposition 7}, we can precisely determine the parameter intervals of $a$ and $\varepsilon$ in which the stable fixed point $X_\FP$ and stable limit cycle $\Gamma_\LC$ co-exist. Proposition \ref{proposition 7} gives results shown in Figs. \ref{fig:1}(a)-(d). Figure~\ref{fig:1}(a) shows a phase portrait where a stable fixed point
coexists with a stable limit cycle.
The one-parameter bifurcation curves in
Figs.~\ref{fig:1}(b) and (c) and the two-parameter map in
Fig.~\ref{fig:1}(d) bound this bi-stable window, delimited by a
subcritical Hopf (HB) and a saddle-node of limit cycles (SNLC).  For \(a=-0.05\) the thresholds are \(\varepsilon_{\mathrm{h}}=0.025\) and
\(\varepsilon_{\mathrm{sn}}=0.027865\); for \(\varepsilon=0.0265\) they are
\(a_{\mathrm{h}}=-0.053\) and \(a_{\mathrm{sn}}=-0.0474726\).

\section{Monte Carlo simulations: Effects of $a$, $\varepsilon$, and initial conditions on ISR}
\label{sec: stochastic analysis}
We study how the excitability parameter \( a \) and timescale separation \( \varepsilon \) influence the non-monotonic characteristics of ISR within their bistability ranges. For system Eq. \eqref{eq:2}, we quantify ISR via the mean firing rate (Hz):
\begin{equation}\label{eq:3m1}
    \langle r \rangle = \lim_{T \to \infty} \frac{n_{\text{spike}}(T)}{T},
\end{equation}
where \( n_{\text{spike}}(T) \) counts spikes (upcrossings of \( v(t) \) through \( v_{\text{th}} = 0.25 \)) in \([0,T]\).  Furthermore, assuming each full traversal of the limit cycle produces one spike, the mean firing rate can therefore be approximated (where the approximation uses the weak dependence of $T_\LC$ on $\sigma$ at small noise) by:
\begin{equation}\label{eq:mfr_proxy}
\langle r\rangle\;\approx\;\frac{\mu_{\sigma}(\mathcal \mathcal{B}_{\LC})}{T_\LC},  
\end{equation}
where \(T_\LC=T_\LC(a,\varepsilon)\) is the deterministic period of $\Gamma_\LC$ and $\mu_{\sigma}(\mathcal B_{\LC})$ is the occupation probability of the 
limit-cycle basin $\mathcal B_{\LC}$.

The occupation probability $\mu_{\sigma}(\mathcal B_{\LC})$ is controlled by the residence times in the two deterministic basins. We define the first-exit times
\begin{align}
\tau_\FP &= \inf\{t>0:\,X_t\notin\mathcal{B}_\FP \,|\, X_0=X_\FP\}, \\
\tau_\LC &= \langle\tau_\LC^{x}\rangle_{x\in\Gamma_\LC}\quad, \quad
\tau_\LC^{x} = \inf\{t>0:\,X_t\notin\mathcal{B}_\LC \,|\, X_0=x\in\Gamma_\LC\},
\end{align}
where \(X_t=(v_t,w_t)\) is the stochastic trajectory and
\(\langle\cdot\rangle_{x\in\Gamma_\LC}\) denotes averaging over one deterministic period of the stable limit cycle. The standard residence-time interpretation of ISR is that, at an intermediate noise level, trajectories spend comparatively longer times in $\mathcal{B}_{\FP}$ than in $\mathcal{B}_{\LC}$, thereby suppressing spiking \cite{tuckwellAnalysisInverseStochastic2012,gutkinInhibitionRhythmicNeural2009,tuckwellInhibitionModulationRhythmic2009,li2026role}.

Figures~\ref{fig:4}(a1) and~\ref{fig:4}(a2) show that, for the finite
simulation time used here and for initial condition
$X_0\in\mathcal B_{\LC}$, the occupation probability
$\mu_{\sigma}(\mathcal B_{\LC})$ exhibits a non-monotone dependence on
$\sigma$ throughout the bistable intervals of the wedge $\Omega_{\text{bi}}$. The corresponding dips become
deeper as $(a,\varepsilon)$ approaches the SNLC boundary
$(a_{\mathrm{sn}},\varepsilon_{\mathrm{sn}})$ and become shallower near the
Hopf boundary $(a_{\mathrm{h}},\varepsilon_{\mathrm{h}})$. If these curves were
considered in isolation, one might conclude that ISR occurs throughout the
bistable wedge whenever the trajectory is initialized in the basin of the
stable limit cycle, and that its strength is controlled mainly by proximity to
the SNLC boundary.

Figures~\ref{fig:4}(b1) and~\ref{fig:4}(b2), computed over the same simulation
horizon but with $X_0\in\mathcal B_{\FP}$, show a different finite-time
picture. In this case, non-monotone occupation curves are visible only near the
Hopf boundary, whereas closer to the SNLC boundary the curves appear monotone
over the simulated time window. Taken together,
Figs.~\ref{fig:4}(a1)--\ref{fig:4}(b2) therefore suggest an apparent dependence
of ISR on the initial basin: ISR-like dips seem to occur broadly for
$X_0\in\mathcal B_{\LC}$, but only near the Hopf boundary for
$X_0\in\mathcal B_{\FP}$.

This apparent dependence is a finite-time effect. In a bistable stochastic
system, the long-time occupation of $\mathcal B_{\LC}$ is determined by the
balance between the two noise-induced escape mechanisms: escape from the
fixed-point basin and escape from the limit-cycle basin. Over a finite
observation window, one of these transitions may be so rare that it is not
adequately sampled. The measured occupation probability may then retain a
visible dependence on the initial basin and may display an apparent dip that
does not persist in the invariant occupation statistics.

Consequently, finite-time non-monotonicity alone is not sufficient to identify
genuine ISR. We regard an ISR minimum as asymptotically meaningful only when
the occupation curves obtained from initial conditions in
$\mathcal B_{\FP}$ and $\mathcal B_{\LC}$ coincide, or nearly coincide, and
both exhibit a minimum at an intermediate noise amplitude. In the next section,
we prove that the stochastic FHN system admits a unique invariant probability
measure. It follows that, for fixed $(a,\varepsilon)$ and $\sigma>0$, the
long-time occupation probability $\mu_\sigma(\mathcal B_{\LC})$ is independent
of the initial condition. Thus, the asymptotic occurrence of ISR is determined
not by the choice of $X_0$, but by the parameter-dependent balance between the
two escape mechanisms. The quasi-potential analysis and reduced escape-balance theory developed in Sections~\ref{sec:stochastic_analysis} and \ref{sec:_heuristics2} make this balance quantitative and identify which side of
the bistable wedge supports a genuine weak-noise ISR minimum.

\begin{figure}
\centering
\includegraphics[width=7.5cm,height=5.5cm]{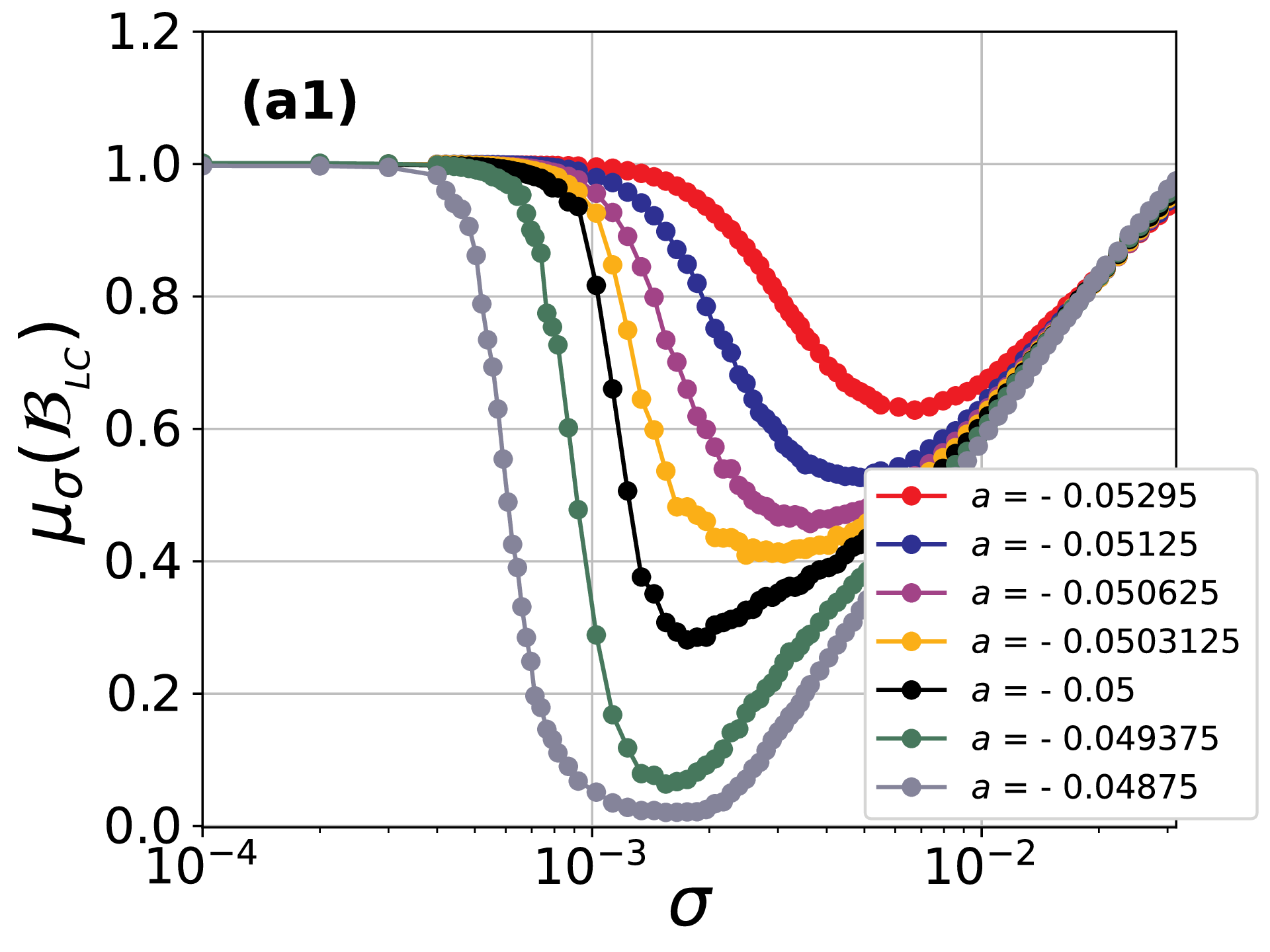}\includegraphics[width=7.5 cm,height=5.5cm]{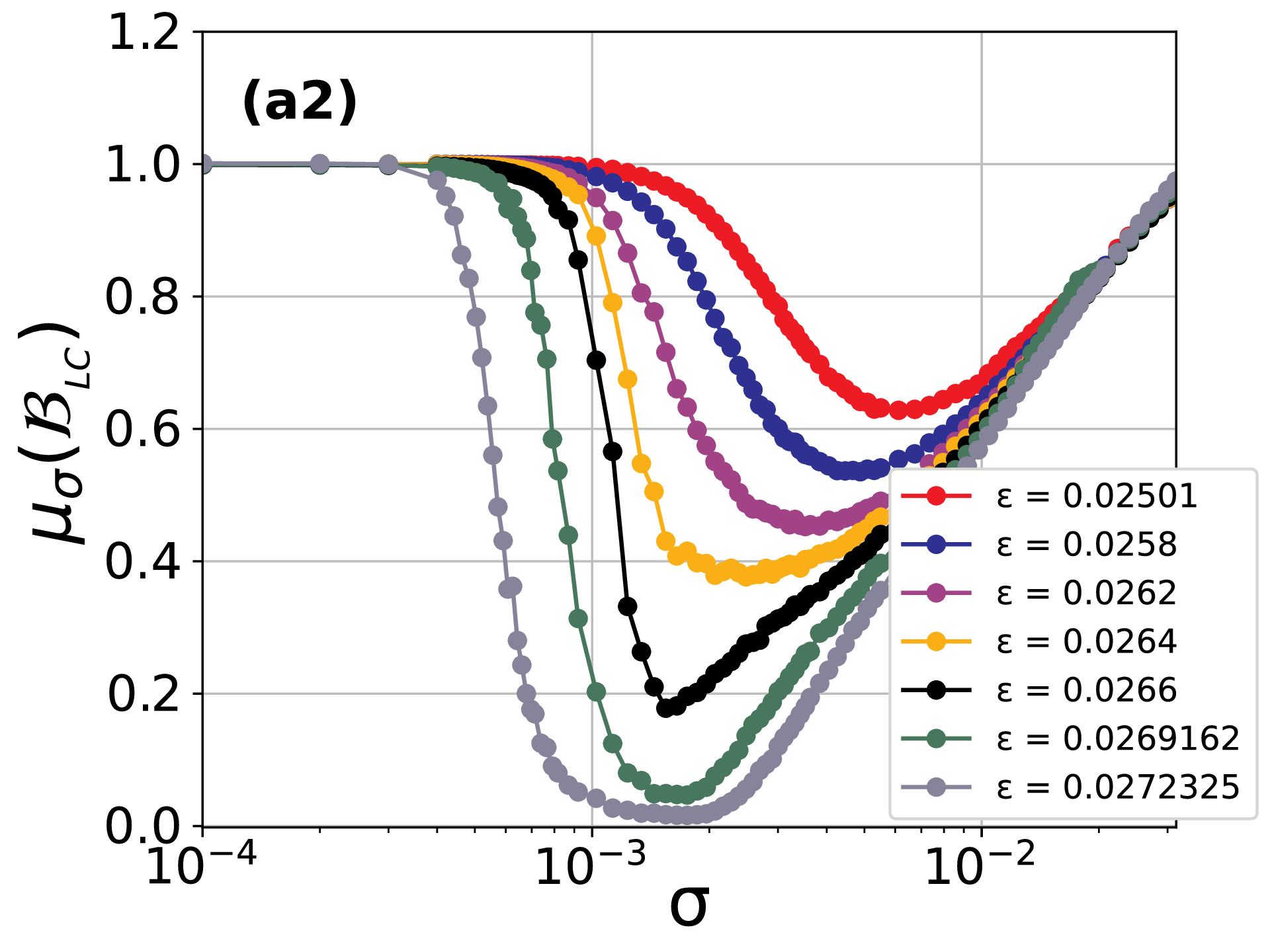}
\includegraphics[width=7.5cm,height=5.5cm]{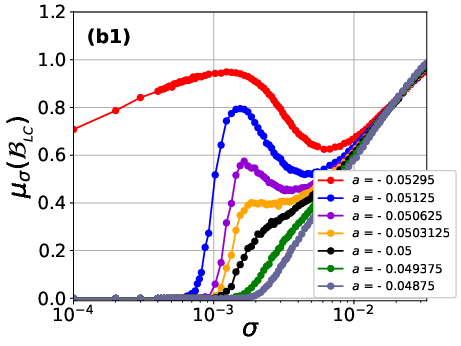}\includegraphics[width=7.5cm,height=5.5cm]{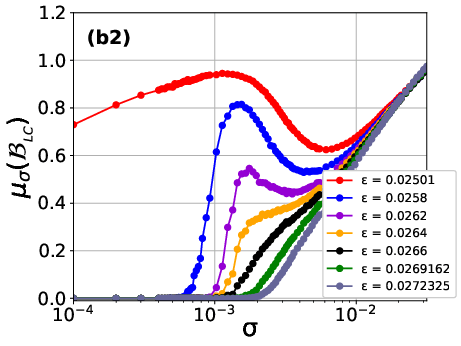}
\caption{Finite-time estimates of the limit-cycle basin occupation probability
$\mu_{\sigma}(\mathcal B_{\LC})$ as a function of the noise amplitude $\sigma$
for different values of $a$ and $\varepsilon$ in the bistable domain $\Omega_{\text{bi}}$. Each curve
is averaged over 300 independent realizations with simulation time
$T=5.0\times10^4$. Panels (a1)--(a2) use
$X_0=(0.75,1.0)\in\mathcal B_{\LC}$, while panels (b1)--(b2) use
$X_0=(0,0)\in\mathcal B_{\FP}$. The comparison shows an apparent finite-time
dependence on the initial basin: non-monotone curves are broadly visible for
$X_0\in\mathcal B_{\LC}$, whereas for $X_0\in\mathcal B_{\FP}$ they appear
mainly near the Hopf boundary. Other parameters are $b=1.0$ and $c=2.0$.}
\label{fig:4}
\end{figure}

\section{Uniqueness of the invariant probability measure and independence of ISR from initial conditions}
\label{invariant_measure}
The evolution of the  probability density \( P_t(x,y) \) of the solution $X_t=(v_t,w_t)$ to the SDE Eq. \eqref{eq:2}, with coefficient functions given by Eq. \eqref{eq:3}, is governed by the 
Fokker--Planck equation
\begin{equation}\label{eq:FP_equation}
\frac{\partial P_t}{\partial t} = \mathcal{L}^* P_t,
\end{equation}
with initial condition
\begin{equation}\label{eq:initial_condition}
P_0(x,y) = \delta_{y}(x), 
\end{equation}
where \( y=(v_0, w_0) \) is the initial state, and natural boundary conditions
\begin{equation}\label{eq:boundary_conditions}
\lim_{|x|\to\infty} P_t(x,y)=0, \qquad \lim_{|x|\to\infty} \|\nabla_x P_t(x,y)\|=0.
\end{equation}
Here, the 
forward Kolmogorov--Fokker--Planck (KFP) operator $\mathcal{L}^*$ (adjoint of the infinitesimal generator $\mathcal L$) is defined as\begin{equation}\label{eq:KFP_operator}
\mathcal{L}^* P_t = \frac{\sigma^2}{2} \frac{\partial^2 P_t}{\partial v^2} - \frac{\partial}{\partial v}\bigl[(v(a - v)(v - 1) - w)P_t\bigr] - \frac{\partial}{\partial w}\bigl[\varepsilon(b v - c w)P_t\bigr],
\end{equation}
for \( P_t \in L^2(\mathbb{R}^2) \).

\begin{theorem}[Invariant measure of the FHN SDE]
\label{thm:FHN-ergodic}
Fix parameters $a\in\mathbb R$ and $b,c,\varepsilon,\sigma>0$.
Let $X_t=(v_t,w_t)$ be the unique strong solution of Eq. \eqref{eq:2} with
initial state $X_0=(v_0,w_0)\in\mathbb R^{2}$, and denote by
$P_t\varphi(x)=\mathbb E_x[\varphi(X_t)]$ the associated Markov semigroup.
Let $\mathcal B_{\LC}\subset\mathbb R^{2}$ be the attraction basin of the limit cycle 
defined in Eq.~\eqref{basin}, and set
$\chi_{\LC}(t)  :=\mathbbm{1}_{\{X_t\in\mathcal B_{\LC}\}}$.
Then
\begin{enumerate}[label=\textup{(\roman*)}]
\item (\textbf{Global well-posedness})  
      The SDE Eq.~\eqref{eq:2} admits a unique and non-explosive strong solution for every initial $x\in\mathbb R^{2}$.  
\item (\textbf{Existence/Uniqueness of invariant measure})  
      There exists a unique invariant probability measure $\mu_\sigma$ on $\mathbb R^{2}$
      that is absolutely continuous with respect to the Lebesgue measure.
Its density $P_{\mbox{\rm\tiny stat}}(x)$, called the stationary density, is smooth, locally strictly positive, and satisfies in the distributional sense 
\begin{equation}\label{eq:stationary_density}
\mathcal{L}^* P_{\mbox{\rm\tiny stat}} = 0.
\end{equation}
\item (\textbf{Exponential rate of convergence})
      There exist constants $C,\rho>0$ and a Lyapunov function $V\ge1$ such that for the weighted norm
      \[
         \|P_t(x,y)dx-\mu_\sigma(dx)\|_{V}
         \le C\,V(y)\,e^{-\rho t},\qquad y\in\mathbb R^{2},\;t\ge0,
      \]
      where $\|m\|_{V}:=\sup_{|\varphi|\le V}|m(\varphi)|$ for a signed measure $m$ on $\mathbb{R}^2$.
\item (\textbf{Ergodic law of large numbers})  
      For every $\varphi\in L^{1}(\mu_\sigma)$ and initial condition $x_0\in\mathbb{R}^{2}$,
      \[
         \frac1T\int_0^T \varphi(X_t)\,dt
         \xrightarrow[T\to\infty]{\mathrm{a.s.}}
         \int_{\mathbb R^{2}}\varphi(x)\,\mu_\sigma(dx).
      \]
      In particular, noting that $\mu_\sigma(\partial\mathcal B_{\LC})=0$, we have
            \begin{equation}\label{birkoff}
             \frac{1}{T}\int_{0}^{T}\chi_{\LC}(t)\,dt
         \xrightarrow[T\to\infty]{\mathrm{a.s.}}
         \int_{\mathbb R^{2}}\mathbbm{1}_{\{x\in\mathcal B_{\LC}\}}\,\mu_\sigma(dx)
         =\mu_\sigma(\mathcal B_{\LC}).  
      \end{equation}
\end{enumerate}
\end{theorem}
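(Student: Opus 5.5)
Our approach is the standard Harris--Meyn--Tweedie route for hypoelliptic diffusions: a Lyapunov (drift) function supplies tightness and non-explosion, Hörmander's condition supplies smoothness of densities, and a controllability argument supplies irreducibility. Ergodicity and exponential convergence then follow from Harris' theorem.

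\emph{Step 1 (well-posedness and Lyapunov function).} The coefficients in Eq.~\eqref{eq:3} are polynomial, hence locally Lipschitz. So there is a unique strong solution up to the explosion time. To exclude explosion we take
\[
V(v,w)=1+\tfrac12 v^2+\tfrac{1}{2\varepsilon b}\,w^2 .
\]
With $\mathcal L=\frac{\sigma^2}{2}\partial_v^2+f\partial_v+g\partial_w$ the cross terms $-vw$ and $+vw$ cancel, and
\[
\mathcal L V=v^2(a-v)(v-1)-\tfrac{c}{b}w^2+\tfrac{\sigma^2}{2}
=-v^4+(a+1)v^3-av^2-\tfrac{c}{b}w^2+\tfrac{\sigma^2}{2}.
\]
Because $-v^4$ dominates the lower-order terms, $\mathcal L V\le K-\gamma V$ for some $\gamma,K>0$. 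Itô's formula with stopping at exit from balls, followed by Gronwall, gives $\mathbb E_x V(X_{t\wedge\tau_n})\le V(x)+Kt$. Hence $\tau_n\to\infty$ a.s., which proves (i). The same inequality, via Krylov--Bogoliubov, makes the time-averaged laws tight, so an invariant probability measure $\mu_\sigma$ exists.

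\emph{Step 2 (hypoellipticity and smoothness).} The noise vector field is $X_1=\sigma\partial_v$ and the drift is $X_0=f\partial_v+g\partial_w$. Their bracket is $[X_1,X_0]=\sigma(\partial_v f\,\partial_v+\varepsilon b\,\partial_w)$. Since $\varepsilon b\neq0$, the fields $X_1$ and $[X_1,X_0]$ span $\mathbb R^2$ at every point. Hörmander's theorem then says $\partial_t-\mathcal L^*$ and $\mathcal L^*$ are hypoelliptic. Two consequences follow: the transition probabilities have smooth densities $P_t(x,y)$, and any invariant measure, being a distributional solution of $\mathcal L^*\mu=0$, has a smooth density $P_{\rm stat}$ satisfying Eq.~\eqref{eq:stationary_density}.

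\emph{Step 3 (irreducibility, positivity, uniqueness, convergence).} This is the main obstacle, because noise acts on only one coordinate. We use the Stroock--Varadhan support theorem for the control system $\dot v=f(v,w)+u(t)$, $\dot w=\varepsilon(bv-cw)$. The control $u$ can steer $v$ along any smooth path. Prescribing $w(\cdot)$ as any smooth path from $w_0$ to $w_1$, we then set $v=(\dot w/\varepsilon+cw)/b$ and solve for $u$. Hence every point is reachable from every point in any time $t>0$. Combined with the smooth densities of Step 2, this gives:
\begin{itemize}
\item the support of $P_t(\cdot,y)$ is all of $\mathbb R^2$;
\item every compact set is small (a minorization $P_t(y,\cdot)\ge \eta\,\nu(\cdot)$ holds uniformly for $y$ in the set), using lower semicontinuity of the density near a point where it is positive;
\item $P_{\rm stat}>0$, and any two invariant measures, being mutually equivalent ergodic components, must coincide, which gives (ii).
\end{itemize}
Harris' theorem, applied with the drift condition of Step 1 and the minorization on a sublevel set $\{V\le R\}$, yields the $V$-weighted geometric bound in (iii).

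\emph{Step 4 (law of large numbers).} Uniqueness together with Harris recurrence gives the almost-sure Birkhoff limit for every initial point and every $\varphi\in L^1(\mu_\sigma)$, which is (iv). For Eq.~\eqref{birkoff} we apply this to $\varphi=\mathbbm 1_{\mathcal B_{\LC}}$. We only need $\mu_\sigma(\partial\mathcal B_{\LC})=0$. By Proposition~\ref{proposition 7}, $\partial\mathcal B_{\LC}$ consists of $\Gamma_{\SP}$ (together with the fixed point, in the paper's convention). This set is a smooth closed curve plus a point, so it has Lebesgue measure zero. Absolute continuity of $\mu_\sigma$ from Step 2 then gives $\mu_\sigma(\partial\mathcal B_{\LC})=0$.
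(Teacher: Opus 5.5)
Your proposal is correct. It has the same architecture as the paper's appendix: a drift condition, the bracket $[\sigma\partial_v,\,f\partial_v+g\partial_w]$ giving hypoellipticity and smooth densities, irreducibility, and then the Meyn--Tweedie/Harris machinery. Several ingredients are implemented differently.

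\emph{Lyapunov function.} The paper uses $1+v^4+\lambda w^2$. It must absorb the coupling terms $-4v^3w$ and $2\lambda\varepsilon b\,vw$ by Young's inequality, with a small auxiliary parameter and a large $\lambda$. Your weighting $1+\tfrac12v^2+\tfrac{1}{2\varepsilon b}w^2$ makes the coupling cancel exactly and gives the drift bound with an explicit rate (any $\gamma\le 2\varepsilon c$). The paper's quartic weight gives (iii) in a stronger norm, but you lose nothing. Your bound together with $(\partial_vV)^2=v^2\le 2V$ gives $\mathcal L(V^k)\le-\gamma_kV^k+C_k$ for every $k$.

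\emph{Irreducibility and convergence.} The paper cites the literature for irreducibility and invokes the Down--Meyn--Tweedie criterion (strong Feller plus topological irreducibility plus drift). You instead construct the control explicitly with $w$ as a flat output, check a minorization on sublevel sets of $V$, and apply Harris' theorem. This is the same mechanism made self-contained.

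\emph{Points the paper only asserts.} You justify (iv) via positive Harris recurrence, and $\mu_\sigma(\partial\mathcal B_{\LC})=0$ via absolute continuity.

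A few small repairs are needed.
\begin{enumerate}
\item In the control step, $w(\cdot)$ cannot be an arbitrary smooth path from $w_0$ to $w_1$. Since $v=(\dot w/\varepsilon+cw)/b$, you also need $\dot w(0)=\varepsilon(bv_0-cw_0)$ and $\dot w(t)=\varepsilon(bv_1-cw_1)$. A cubic Hermite interpolant provides this.
\item Irreducibility plus continuity only shows that $\mu_\sigma$ has full support, not that $P_{\mathrm{stat}}>0$ everywhere; a smooth density with full support can vanish at isolated points. For pointwise positivity, use Bony's strong maximum principle on $\mathcal L^*P_{\mathrm{stat}}=0$, as the paper does. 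A zero propagates along $\pm\partial_v$ and along the drift, and since moving in $v$ lets you give $\dot w$ either sign, these moves reach all of $\mathbb R^2$. Uniqueness does not need this, since Harris' theorem already gives it.
\item In the bistable regime $\partial\mathcal B_{\LC}=\Gamma_{\SP}$ alone. The fixed point lies in the open set $\mathcal B_{\FP}$, not on the boundary. This is harmless for the measure-zero conclusion.
\end{enumerate}
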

The proof of Theorem \ref{thm:FHN-ergodic} is presented in the Appendix \ref{appendix}. Thus, for fixed $(a,\varepsilon)\in\Omega_{\mathrm{bi}}$ and $\sigma>0$, the spiking dynamics of Eq.~\eqref{eq:2} with arbitrary initial data admits a unique invariant measure $\mu_{\sigma}$. The long-time $\sigma$–dependence of $\mu_\sigma(\mathcal B_{\LC})$ is therefore determined solely by $(a,\varepsilon)$, and not by the choice of initial basin. 
Consequently, as $T\to\infty$, Monte Carlo estimates obtained from initial
conditions in $\mathcal B_{\FP}$ and $\mathcal B_{\LC}$ converge to the same
invariant occupation probability $\mu_\sigma(\mathcal B_{\LC})$. Hence, the
initial condition plays no role in the asymptotic manifestation of ISR; any
apparent initial-basin dependence in Fig.~\ref{fig:4} is a finite-time effect. Figure~\ref{fig:5} illustrates this convergence for a parameter point in the genuine ISR regime (\textit{e.g.,} the red ISR curves in Figs. \ref{fig:4}(a1) and (b1) with a minimum at an intermediate noise $\sigma$).

\begin{figure}
\centering
\includegraphics[width=5.0cm,height=4.0cm]{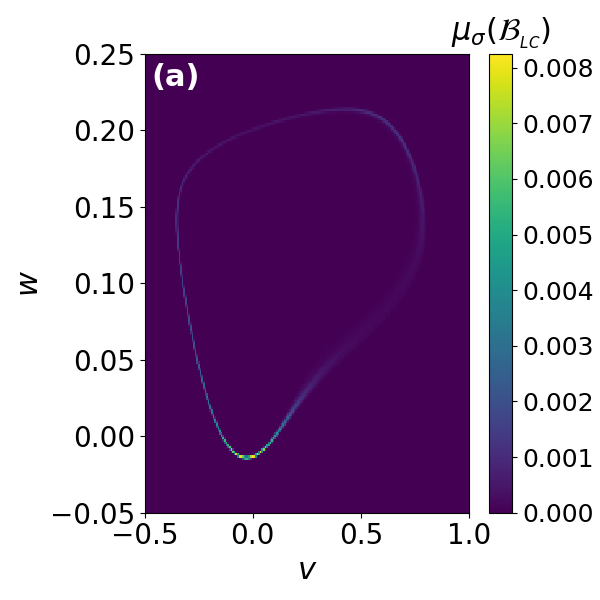}\includegraphics[width=5.0 cm,height=4.0cm]{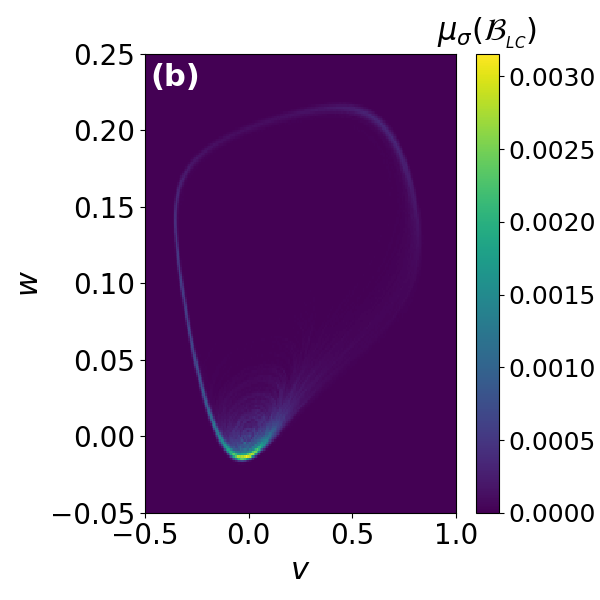}\includegraphics[width=5.0cm,height=4.0cm]{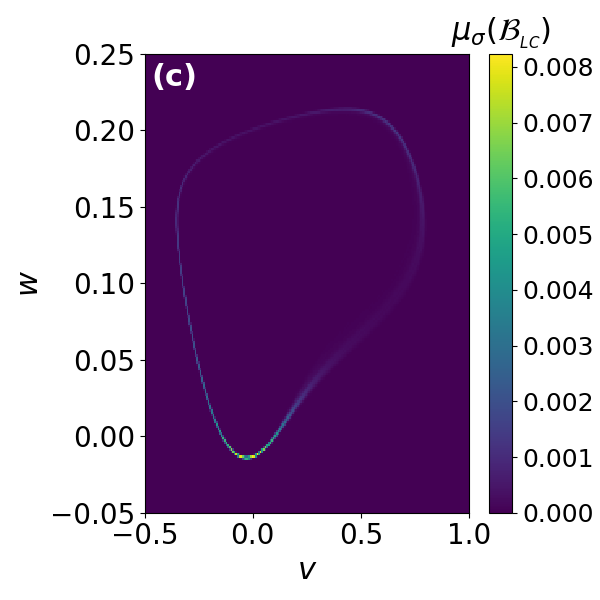}
\includegraphics[width=5.0cm,height=4.0cm]{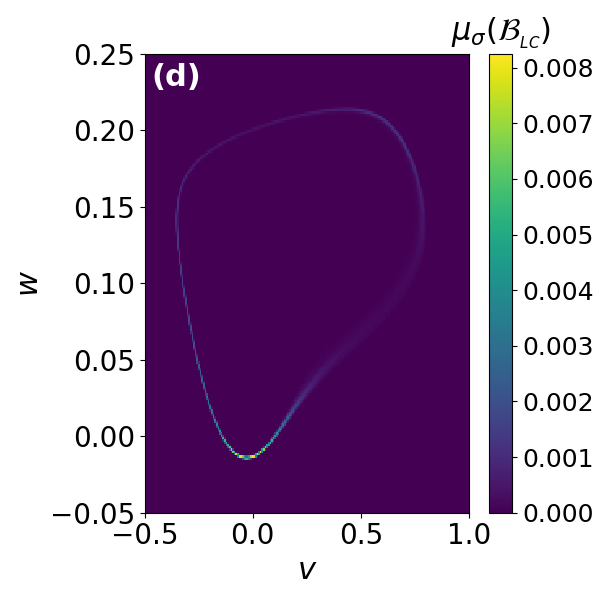}\includegraphics[width=5.0cm,height=4.0cm]{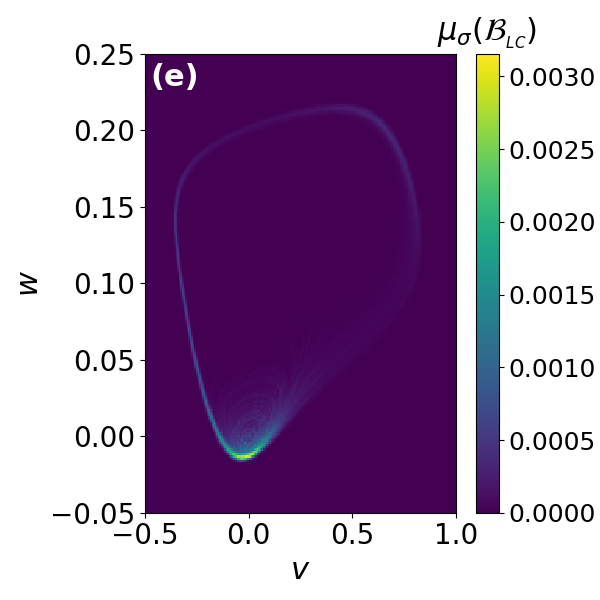}\includegraphics[width=5.0cm,height=4.0cm]{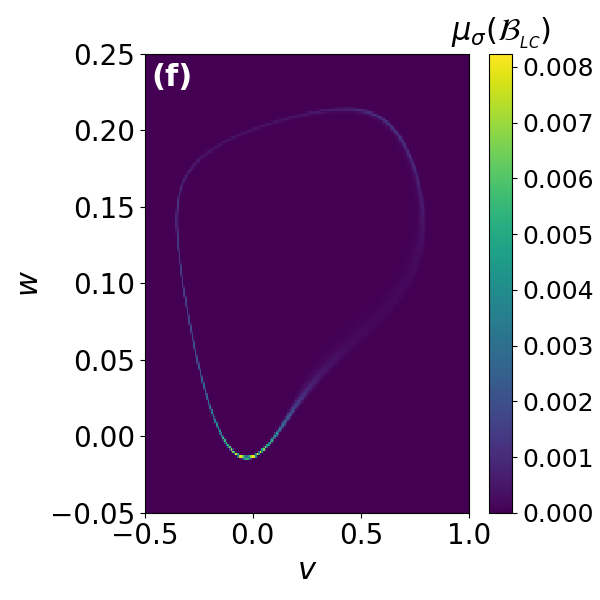}
\caption{
Probability density in phase space for different noise amplitudes.
(a)--(c): phase-space probability densities obtained from trajectories initialized in
\(\mathcal B_{\LC}\) for weak \((\sigma=0.001)\), intermediate \((\sigma=0.006)\),
and strong \((\sigma=0.037)\) noise intensities.
(d)--(f): corresponding phase-space probability densities obtained from trajectories
initialized in \(\mathcal B_{\FP}\).
The associated occupation probability \(\mu_\sigma(\mathcal B_{\LC})\) is obtained by
integrating the density over the limit-cycle basin \(\mathcal B_{\LC}\).
The redistribution of probability mass between \(\mathcal B_{\FP}\) and
\(\mathcal B_{\LC}\) illustrates the non-monotonic behavior characteristic of ISR.
Parameters: \(a=-0.05295\), \(\varepsilon=0.0265\),
\(\Delta S(a,\varepsilon)=7.9107\times10^{-6}\),
\(b=1.0\), and \(c=2.0\).
}
\label{fig:5}
\end{figure}

\section{Quasi‑potentials of the attractors as function of $a$ and $\varepsilon$}
\label{sec:stochastic_analysis}
In this section, we compute the quasi-potential barriers associated with escape
from the stable fixed point and the stable limit cycle. These barriers determine
the leading exponential contribution to the mean first-exit times and provide
the parameter-dependent input for the reduced escape-balance theory developed
in Section~\ref{sec:_heuristics2}. Thus, the influence of the excitability
parameter $a$ and the timescale parameter $\varepsilon$ on ISR is encoded, to
leading order in the weak-noise regime, in the quasi-potential costs required
to reach the separatrix $\Gamma_{\SP}$ from the two attracting sets. This
provides a quantitative mechanism for how intrinsic neuronal parameters shape
noise-induced switching between quiescent and spiking dynamics. Elucidating this mechanism is of particular significance for the broader neuroscience community.

 We recall that, for small-noise stochastic differential equations, mean first-exit times from metastable basins are governed, to leading exponential order, by the corresponding quasi-potential barriers \cite{mil1995first,lin2019quasi}. In the present problem, these are the escape
costs from the fixed-point basin $\mathcal B_{\FP}$ and from the limit-cycle
basin $\mathcal B_{\LC}$ to the separatrix $\Gamma_{\SP}$. 

In Eq.~\eqref{eq:2},
the noise acts only on the fast variable $v$, so the diffusion matrix of the
SDE is given by
$D=\sigma^2 NN^\top,$ where $N=(1,0)^\top$.  For the large-deviation analysis, we introduce the $\sigma$-independent matrix
$G:=NN^\top.$  In the Freidlin--Wentzell theory, the action functional is defined as \cite{kuehn2015multiple,freidlin2012random,berglund2006noise}
\begin{equation}\label{action_1}
A_T(\phi)
=\frac{1}{2}\int_0^T \bigl\| \dot{\phi}(t)-F(\phi(t)) \bigr\|^2_{G(\phi(t))^{-1}}\,dt,
\end{equation}
where $\phi:[0,T]\to\mathbb{R}^2$ is an absolutely continuous path, and $\|v\|_{G^{-1}}^{2} := v^{\mathsf T} G^{-1} v$. 

In the rank-one case, as in our system, the action functional reduces to \cite{lin2019quasi}
\begin{equation}\label{action}
A_T(\phi)
=\inf_{u:\,u=\dot{\phi}-F(\phi)} \frac{1}{2}\int_0^T \|u(t)\|^2\,dt
=\frac{1}{2}\int_0^T \bigl(\dot{\phi}_v-f(\phi)\bigr)^2\,dt,
\end{equation}
subject to the feasibility constraint $\dot{\phi}_w = g(\phi)$; otherwise $A_T(\phi) = +\infty$.

The quasi-potential quantifies the minimal cost required to go against the deterministic flow from an attractor (either a fixed point or a stable limit cycle) to a point $x \in \mathbb{R}^2$. 
Specifically, the quasi-potentials with respect to the fixed point $X_{\FP}$ and the stable limit cycle $\Gamma_\LC$ are defined by
\begin{equation}\label{quasi_pots}
S_\FP(x)
= \inf_{\substack{T>0\\ \phi(0)\in X_{\FP};\,\phi(T)=x}} A_T(\phi),
\qquad
S_\LC(x)
= \inf_{\substack{T>0\\ \phi(0)\in \Gamma_\LC;\,\phi(T)=x}} A_T(\phi).
\end{equation}
Since the deterministic dynamics along the limit cycle have zero action, $S_\LC(x)=0$ for all $x\in\Gamma_\LC$. 
Hence, $S_\LC(x)$ does not depend on the particular choice of the initial point $\phi(0)\in\Gamma_\LC$.

The escape barriers associated with transitions from the stable fixed point 
$X_{\FP}$ and the stable limit cycle $\Gamma_\LC$ 
to the unstable limit cycle $\Gamma_\SP$ 
are determined by the minima of the corresponding quasi-potentials on the separatrix \cite{freidlin2012random,berglund2006noise}:
\begin{equation}\label{eq:barriers-sigma}
S_\FP := \min_{x \in \Gamma_\SP} S_\FP(x),
\qquad
S_\LC := \min_{x \in \Gamma_\SP} S_\LC(x).
\end{equation}

We compute the quasi-potentials \(S_{\FP}=S_\FP(a,\varepsilon)\) and \(S_{\LC}=S_\LC(a,\varepsilon)\) (Eq.~\eqref{eq:barriers-sigma}) numerically using the degenerate-noise geometric Minimum Action Method (gMAM) \cite{lin2019quasi,heymann2008geometric,heymann2008pathways}, applicable here since noise enters only in \(v\) and \(G\) is singular.  

Figures~\ref{fig:6}(a)--(b) show the computed quasi-potential barriers
$S_{\FP}$ and $S_{\LC}$ as functions of $a$ and $\varepsilon$, each plotted
while fixing the other parameter inside the bistable wedge
$\Omega_{\mathrm{bi}}$. These barriers are interpreted only for
$(a,\varepsilon)\in\Omega_{\mathrm{bi}}$, where the stable fixed point
$X_{\FP}$, the stable limit cycle $\Gamma_{\LC}$, and the separating unstable
limit cycle $\Gamma_{\SP}$ coexist. Thus, all statements about approaching the
Hopf or SNLC boundaries are understood as one-sided limits taken from within
$\Omega_{\mathrm{bi}}$. As the system approaches the Hopf bifurcation from within
$\Omega_{\mathrm{bi}}$
($a \to a_h$ or $\varepsilon \to \varepsilon_h$),
$S_{\FP}$ decreases whereas $S_{\LC}$ increases. Conversely, as the system
approaches the SNLC bifurcation from within $\Omega_{\mathrm{bi}}$
($a \to a_{\mathrm{sn}}$ or $\varepsilon \to \varepsilon_{\mathrm{sn}}$),
$S_{\FP}$ increases while $S_{\LC}$ decreases. Outside
$\Omega_{\mathrm{bi}}$, one of the two attracting sets is absent, and the
quantities $S_{\FP}$, $S_{\LC}$, and
$\Delta S=S_{\LC}-S_{\FP}$ no longer have the interpretation of two competing
metastable escape barriers.

\begin{figure}
\centering
\includegraphics[width=14.0cm,height=6.0cm]{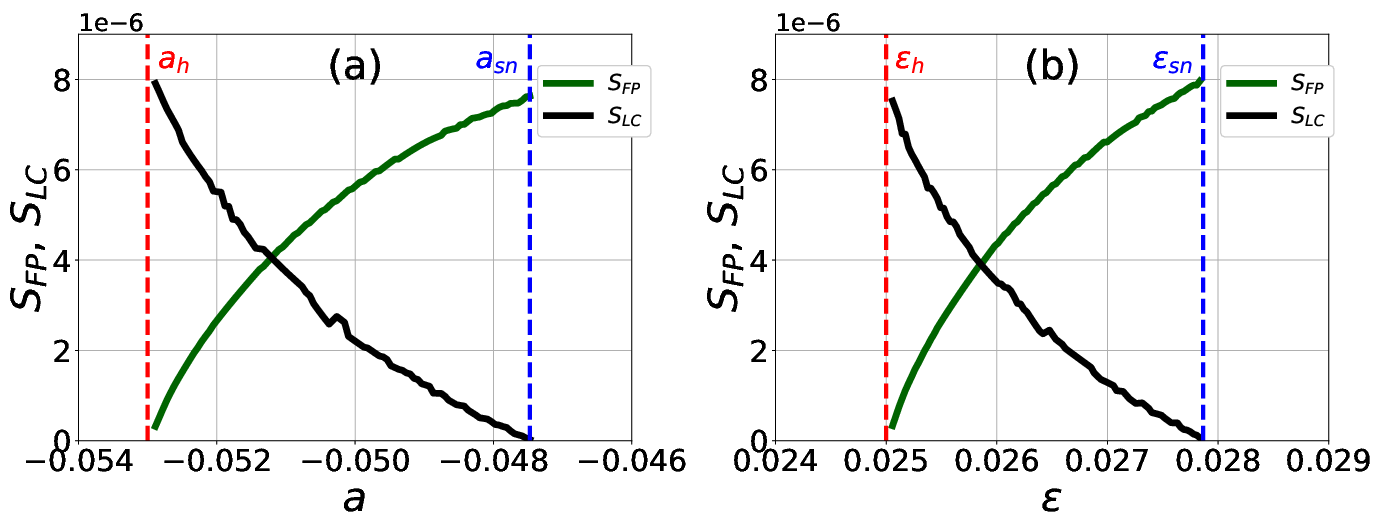}
\caption{Quasi-potentials as a function of parameters $a$ and $\varepsilon$. In (a) \(\varepsilon\!=\!0.0265\) and in (b) $a=-0.05$.  Quasi-potential $S_\FP$ (green) vs. \(a\)/\(\varepsilon\): min at the HB (red dashed lines at \(a_{\mathrm{h}}\!=\!-0.053\), \(\varepsilon_{\mathrm{h}}\!=\!0.025\)) and max at the SNLC (blue dashed lines at  \(a_{\mathrm{sn}}\!=\!-0.0474726\), \(\varepsilon_{\mathrm{sn}}\!=\!0.027865\)). Quasi-potential $S_\LC$ (black) vs. \(a\)/\(\varepsilon\): max at HB and min at SNLC. Other parameters: \(b\!=\!1.0\), \(c\!=\!2.0\).}
\label{fig:6}
\end{figure}

\section{Mechanism and quantitative prediction of the ISR minimum}
\label{sec:_heuristics2}
We now turn to the explanation for the existence of ISR in our FHN neuron model. Importantly, the mechanism leading to ISR is
largely generic: it follows directly from the universal small-noise asymptotics for
mean exit times from metastable attractors.  In our setting, the system possesses two
types of metastable states—either a stable limit cycle or a stable fixed point—each
surrounded by the same unstable limit cycle $\Gamma_{\SP}$.

To keep the discussion focused on the mechanism relevant for the present work, we do not use the full detailed asymptotic expansions available in the preceding papers \cite{berglund2004noise,berglund2014noise,maier1996oscillatory}, including refinements related to cycling,  exit-location statistics, and oscillatory modulation. Instead, we assume only the common leading-order weak-noise structure of the mean first-exit times, namely an exponential term of Arrhenius type combined with a prefactor having polynomial dependence on $\sigma$. This generic exponential--polynomial ansatz is consistent with the asymptotic results obtained by Berglund--Gentz for escape from a stable limit cycle through an unstable limit cycle \cite{berglund2004noise,berglund2014noise}, and by Maier--Stein for escape from a stable fixed point through an unstable limit cycle \cite{maier1996oscillatory}. Accordingly, in the weak-noise regime $0<\sigma\leq\sigma_0$ with $0<\sigma_0\ll1$, we postulate
\begin{align}\label{eq:mean_exits}
\begin{split}
\left\{\begin{array}{lcl}
\mathbb{E}[\tau_\FP]
&=&
P_{\FP}\,\sigma^{\beta_\FP}
\exp\!\left(\frac{S_{\FP}}{\sigma^{2}}\right),\\
\mathbb{E}[\tau_\LC]
&=&
P_{\LC}\,\sigma^{\beta_\LC}
\exp\!\left(\frac{S_{\LC}}{\sigma^{2}}\right).
\end{array}\right.
\end{split}
\end{align}
Here, $P_{\FP}=P_{\FP}(a,\varepsilon)$ and $P_{\LC}=P_{\LC}(a,\varepsilon)$ are positive model-dependent coefficients, $\beta_\FP=\beta_\FP(a,\varepsilon)$ and $\beta_\LC=\beta_\LC(a,\varepsilon)$ describe the polynomial dependence of the prefactors on the noise amplitude $\sigma$, and $S_{\FP}=S_{\FP}(a,\varepsilon)$ and $S_{\LC}=S_{\LC}(a,\varepsilon)$ are the quasi-potentials given by Eq. \eqref{eq:barriers-sigma}. The ansätze in Eq.~\eqref{eq:mean_exits} are not tied to the specific model considered here.
Consequently, the competition between the two escape processes, and hence the occurrence
of ISR, is not model-specific. Instead, it reflects a structurally robust
small-noise mechanism that becomes observable on sufficiently long timescales.

In the small-noise and bi-stable regimes, trajectories mix rapidly within each basin and only rarely escape across the separatrix. Consequently, the system exhibits a clear time-scale separation: the stochastic trajectories remain for long random times within the neighborhood of either the fixed point $X_\FP$ or the limit cycle $\Gamma_\LC$ before noise-induced transitions through the unstable limit cycle $\Gamma_\SP$ occur between their respective basins of attraction.

According to the exponential exit law \cite[Corollary~2]{day1983exponential} 
, both exit times, $\tau_\FP$ and $\tau_\LC$ are asymptotically exponentially distributed in the small-noise limit:
\begin{equation}
\frac{\tau_\FP}{\mathbb{E}[\tau_\FP]}
\;\xrightarrow{d}\;
\mathrm{exp}(1),
\qquad
\frac{\tau_\LC}{\mathbb{E}[\tau_\LC]}
\;\xrightarrow{d}\;
\mathrm{exp}(1),
\qquad
\text{as } \sigma \to 0. 
\end{equation}
Consequently, their survival probabilities decay exponentially:
\begin{align}\label{eq:Floquet_Lyapunovv}
\begin{split}
\left\{\begin{array}{lcl}
\mathbb{P}\{\tau_\FP > t\}
\approx
\exp\!\left(-\frac{t}{\mathbb{E}[\tau_\FP]}\right),\\
\mathbb{P}\{\tau_\LC > t\}
\approx
\exp\!\left(-\frac{t}{\mathbb{E}[\tau_\LC]}\right).
\end{array}\right.
\end{split}
\end{align}
These relations establish a direct correspondence between the mean first-exit times and the transition rates that govern the effective two-state Markov dynamics, a reduction that we argue for in the following.

We fix a small constant $\delta>0$. 
Let $\mathcal{B}^{\delta}_{\LC}$ and $\mathcal{B}^{\delta}_{\FP}$ be  closed $\delta$-neighborhoods of $\Gamma_\LC$ and $X_{\FP}$, respectively. Let $X_t$ be the  solution to Eq. \eqref{eq:2} with $X_0 \in  \{X_{\FP}\} \cup  \Gamma_\LC $. Let  $\tau^{\leftarrow }_0:=0$ and $\tau^{ \to}_{1}$ be the first exit time from $\mathcal{B}_{\FP}$, \textit{i.e.,} 
\begin{equation}
\tau^{ \to}_{1}:= \inf \{t>0: X_t \in \Gamma_\SP \}\,.    
\end{equation}
Then we recursively define the stopping times
\begin{align}
\begin{split}
\left\{\begin{array}{lcl}
\tau_k^{\leftarrow}&:=& \inf \{t>\tau^{ \to}_k: X_t \in \mathcal{B}^{\delta}_{\FP}\cup \mathcal{B}^{\delta}_{\LC} \},\\ 
\tau^{ \to }_{k+1}&:=& \inf \{t>\tau^{ \leftarrow }_{k}: X_t \in  \Gamma_\SP\}.
\end{array}\right.
\end{split}
\end{align}

Now we introduce a random variable $Y_k$ that indicates to which basin of attraction we return after the $k$-th exit to the separatrix, namely:
\begin{equation}
Y_k:= \begin{cases}
\mathrm{FP} & \text{ if } X_{\tau_k^{\leftarrow}} \in \mathcal{B}^{\delta}_{\FP}\,,
\\
\mathrm{LC} & \text{ if } X_{\tau_k^{\leftarrow}} \in \mathcal{B}^{\delta}_{\LC}\,.
\end{cases}    
\end{equation}
Compared to the exponentially large (in $\sigma^{-2}$) exit times $\tau^{\to}_{k+1}-\tau^{\leftarrow }_{k}$ from one of the basins of attraction, the time to reach the neighborhoods $\mathcal{B}^{\delta}_{\FP}\cup \mathcal{B}^{\delta}_{\LC}$ of the fixed point or limit cycle is very short. In fact we expect $\tau^{\leftarrow }_{k}-\tau^{ \to }_{k}$ to be of order $-\log \sigma$ \cite{berglund2006noise,kuehn2015multiple}. Therefore the waiting time between consecutive visits to $\mathcal{B}^{\delta}_{\FP}\cup \mathcal{B}^{\delta}_{\LC}$ is dominated by the exits times. We therefore assume that these waiting times 
\begin{equation}
\tau_k:= \tau^{\leftarrow }_{k}-\tau^{\leftarrow }_{k-1}\,,    
\end{equation}
 become independent in the small noise regime and their distribution only depends on the starting point and approaches the distribution of $\tau_{\#}$, when conditioned on $Y_{k-1} = \#$ for $\#\in\{\mathrm{FP},\mathrm{LC}\}$. Since starting from the separatrix, it is equally likely to next visit either basin of attraction, the $Y_k$ are asymptotically independent and uniformly distributed on the two possible states. 

Now let $Y_t$ be the continuous time jump process on $\{\mathrm{FP},\mathrm{LC}\}$ that takes the value $Y_k$ on the time interval $[\tau^{\leftarrow }_{k},\tau^{\leftarrow }_{k+1})$. This process indicates in which basin of attraction $X_t$ currently is. Due to the preceding argument in the small noise limit we expect that $Y_t$ is well approximated by the jump process $\widehat{Y}_t$ on $\{\mathrm{FP},\mathrm{LC}\}$ defined as follows.
 
 Let $\widehat{Y}_k$ be i.i.d. on $\{\mathrm{FP},\mathrm{LC}\}$ uniformly distributed random variables for $k\ge 0$ and $\widehat{Y}_0:=Y_0$. We pick random variables $\widehat{\tau}_k$ independent of $(\widehat{Y}_l)_{l\ne k-1}$ and of each other, such that the conditional distribution of  $\widehat{\tau}_k$ given the event $\widehat{Y}_{k-1}= y$ is $\mathrm{Exp}(1/\mathbb E[\tau_{y}])$ and $\widehat \tau^{\leftarrow }_{k}:=\sum_{l=1}^k\widehat{\tau}_l$. Then $\widehat{Y}_t$ takes the value $\widehat{Y}_k$ for $t\in [\widehat \tau^{\leftarrow }_{k},\widehat  \tau^{\leftarrow }_{k+1})$.
 
 The process $\widehat{Y}_t$ starts at $\widehat{Y}_0$. Once entering a state $\widehat{Y}_k=y$ it stays in this state for a $\mathrm{Exp}(1/\mathbb E[\tau_{y}])$-distributed waiting time. After that it jumps to a state  $\widehat{Y}_{k+1}$ that is uniformly distributed on $\{\mathrm{FP},\mathrm{LC}\}$. For the continuous-time Markov process $\widehat{Y}_t$, it is easy to see that 
 \begin{equation}\label{prob_LC}
\mu_\sigma(\mathcal{B}^{\delta}_{\LC})=\frac{1}{T} \int _0^T \mathbbm{1}_{\{\widehat{Y}_t= \LC\}} d t  \to \frac{\mathbb E[\tau_{\LC}]}{\mathbb E[\tau_{\LC}]+\mathbb E[\tau_{\FP}]}, 
 \end{equation}
 almost surely as $T \to \infty$. 

In the weak-noise regime $0<\sigma\leq\sigma_0$ with $0<\sigma_0\ll1$, the following asymptotic approximation holds for the
stationary occupation probability of $\mathcal{B}^{\delta}_{\LC}$:
\begin{equation}\label{eq:muBLC_refined}
\mu_\sigma(\mathcal{B}^{\delta}_{\LC})
\;=\;
\frac{1}{1+\mathbb E[\tau_{\FP}]/\mathbb E[\tau_{\LC}]}
\;=\;
\left[
1+
P\,\sigma^{\beta}
\exp\!\left(
-\frac{\Delta S}{\sigma^{2}}
\right)
\right]^{-1}.
\end{equation}
Now let
\begin{equation}\label{R}
R(\sigma):=
P\,\sigma^{\beta}
\exp\!\left(
-\frac{\Delta S}{\sigma^{2}}
\right),
\end{equation}
where $P>0$, $\Delta S\neq0$, and $\beta\neq0$, so that
Eq.~\eqref{eq:muBLC_refined} can be written as
\begin{equation}\label{RR}
\mu_\sigma(\mathcal{B}^{\delta}_{\LC})=\frac{1}{1+R(\sigma)}.
\end{equation}
When $R$ is large, $\mu_\sigma(\mathcal{B}^{\delta}_{\LC})$ is close to $0$;
when $R$ is small, $\mu_\sigma(\mathcal{B}^{\delta}_{\LC})$ is close to $1$.
Differentiating with respect to $\sigma$ gives
\begin{equation}\label{eq:dmu_refined}
\frac{d}{d\sigma}\mu_\sigma(\mathcal{B}^{\delta}_{\LC})
=
-\frac{R'(\sigma)}{(1+R(\sigma))^{2}},
\qquad
R'(\sigma)
=
R(\sigma)\left(
\frac{\beta}{\sigma}
+
\frac{2\Delta S(a,\varepsilon)}{\sigma^{3}}
\right).
\end{equation}
An interior turning point exists when
$\beta\sigma^2+2\Delta S=0$. Thus the corresponding extremal noise amplitude
$\sigma_\ast>0$ is given by
\begin{equation}\label{predicted_noise}
 \sigma_\ast^2
=
-\frac{2\Delta S}{\beta}>0.
\end{equation}
Hence, an interior turning point exists precisely when $\Delta S$ and $\beta$
have opposite signs.

The signs of $\Delta S$ and $\beta$ determine the qualitative shape of
$\mu_\sigma(\mathcal B^\delta_{\LC})$ within the weak-noise regime in which
Eq.~\eqref{eq:muBLC_refined} is valid. 
\begin{itemize}
\item Case $\Delta S>0,\ \beta<0$:
For weak noise, $R(\sigma)\to0$ as $\sigma\to0^+$, and hence
$\mu_\sigma(\mathcal B^\delta_{\LC})\to1$. Since $\Delta S\beta<0$,
there exists a positive critical noise amplitude
$
\sigma_\ast=\sqrt{-2\Delta S/\beta}.
$
At this point $R$ has a maximum, since $\beta<0$, and therefore
$\mu_\sigma(\mathcal B^\delta_{\LC})$ has a minimum. If
$\sigma_\ast\in(0,\sigma_0)$, then
$
\mu_{\sigma_\ast}(\mathcal B^\delta_{\LC})
=
\min_{0<\sigma\leq\sigma_0}
\mu_\sigma(\mathcal B^\delta_{\LC}).
$
Thus, within the weak-noise regime, $\mu_\sigma(\mathcal B^\delta_{\LC})$
has the ISR-type local profile
$
1\;\searrow\;\mu_{\sigma_\ast}(\mathcal B^\delta_{\LC})\;\nearrow .
$
This is the genuine ISR dip, provided $\sigma_\ast$ lies in the interior of
the weak-noise range where Eq.~\eqref{eq:muBLC_refined} is applicable.
\item Case $\Delta S>0,\ \beta>0$:
For weak noise, $R(\sigma)\to0$ as $\sigma\to0^+$, and hence
$\mu_\sigma(\mathcal B^\delta_{\LC})\to1$. Since $\Delta S$ and $\beta$
have the same sign, Eq.~\eqref{predicted_noise} has no positive solution.
Moreover, $R'(\sigma)>0$ for all $\sigma>0$, so $R$ is monotone increasing
and $\mu_\sigma(\mathcal B^\delta_{\LC})$ is monotone decreasing in the
regime described by Eq.~\eqref{eq:muBLC_refined}. Hence the local weak-noise
profile is
$
1\;\searrow .
$
This case does not produce an ISR minimum.
\item Case $\Delta S<0,\ \beta<0$:
For weak noise, $R(\sigma)\to\infty$ as $\sigma\to0^+$, and hence
$\mu_\sigma(\mathcal B^\delta_{\LC})\to0$. Since $\Delta S$ and $\beta$
have the same sign, Eq.~\eqref{predicted_noise} has no positive solution.
Furthermore, $R'(\sigma)<0$ for all $\sigma>0$, so $R$ is monotone decreasing
and $\mu_\sigma(\mathcal B^\delta_{\LC})$ is monotone increasing in the
weak-noise regime. Thus the local profile is
$
0\;\nearrow .
$
This case does not produce an ISR minimum.
\item Case $\Delta S<0,\ \beta>0$:
For weak noise, $R(\sigma)\to\infty$ as $\sigma\to0^+$, and hence
$\mu_\sigma(\mathcal B^\delta_{\LC})\to0$. Since $\Delta S\beta<0$, there
exists a positive critical noise amplitude
$
\sigma_\ast=\sqrt{-2\Delta S/\beta}.
$
At this point $R$ has a minimum, since $\beta>0$, and therefore
$\mu_\sigma(\mathcal B^\delta_{\LC})$ has a maximum. If
$\sigma_\ast\in(0,\sigma_0)$, then
$
\mu_{\sigma_\ast}(\mathcal B^\delta_{\LC})
=
\max_{0<\sigma\leq\sigma_0}
\mu_\sigma(\mathcal B^\delta_{\LC}).
$
Thus, within the weak-noise regime, the occupation probability has the
peak-shaped local profile
$
0\;\nearrow\;\mu_{\sigma_\ast}(\mathcal B^\delta_{\LC})\;\searrow .
$
This is the mirror case of the ISR dip and does not correspond to spike suppression by an intermediate noise amplitude.
\end{itemize}

The key observation is that non-monotonicity can occur only when
$\Delta S\beta<0$. More precisely,
$\Delta S>0,\ \beta<0$ yields a dip in $\mu_\sigma$, whereas
$\Delta S<0,\ \beta>0$ yields a peak in $\mu_\sigma$. Thus, the ISR dip is not
a consequence of $\Delta S>0$ alone; it results from the competition between the
exponential factor $\exp\!\left(-\Delta S/\sigma^2\right)$ and the power-law
prefactor $\sigma^\beta$ with negative effective exponent $\beta<0$.

We now test these criteria numerically for the FHN model defined by Eq.~\eqref{eq:2}. 
All numerical quantities used in Figs.~\ref{fig:9}--\ref{fig:11} are obtained as follows.  The quasi-potential barriers $S_{\FP}$ and $S_{\LC}$ are computed from Eq.~\eqref{eq:barriers-sigma} using the degenerate-noise gMAM method; hence $\Delta S=S_{\LC}-S_{\FP}$.
The occupation probability $\mu_\sigma(\mathcal B_{\LC})$ is estimated as per Eq. \eqref{eq:mfr_proxy} by
Monte Carlo simulation of the stochastic FHN system $X_t=(v_t,,w_t)$ in Eq. \eqref{eq:2} using
Euler--Maruyama method.  Similarly, the mean first-exit times entering Eq.~\eqref{eq:log_ratio_refined} are estimated by Monte Carlo averages of the first-exit times from
$\mathcal B_{\FP}$ and $\mathcal B_{\LC}$, respectively.

Figure~\ref{fig:9} shows how the quasi-potential imbalance
$\Delta S$ organizes the bistable wedge in parameter space. 
The balance curve
$\Delta S=0$ separates two escape-dominated regimes. In the region
$\Delta S>0$, escape from the limit-cycle basin is exponentially more costly
than escape from the fixed-point basin; for a negative effective exponent $\beta$, the
reduced theory therefore predicts an ISR minimum. 
In the region $\Delta S<0$, the quasi-potential ordering is reversed. For the parameter points considered here, fitting Eq.~\eqref{eq:log_ratio_refined} over the weak-noise interval ($\log \sigma \in [-6.0,-3.3]$), whose choice is justified below, yields a negative effective exponent $(\beta<0)$. In this case, the reduced theory predicts no ISR dip; if instead ($\beta>0$), the non-monotonicity would be peak-shaped rather than an ISR minimum. The marked parameter points are used below to test these regimes numerically.
\begin{figure}
\centering
\includegraphics[width=8.5cm,height=6.0cm]{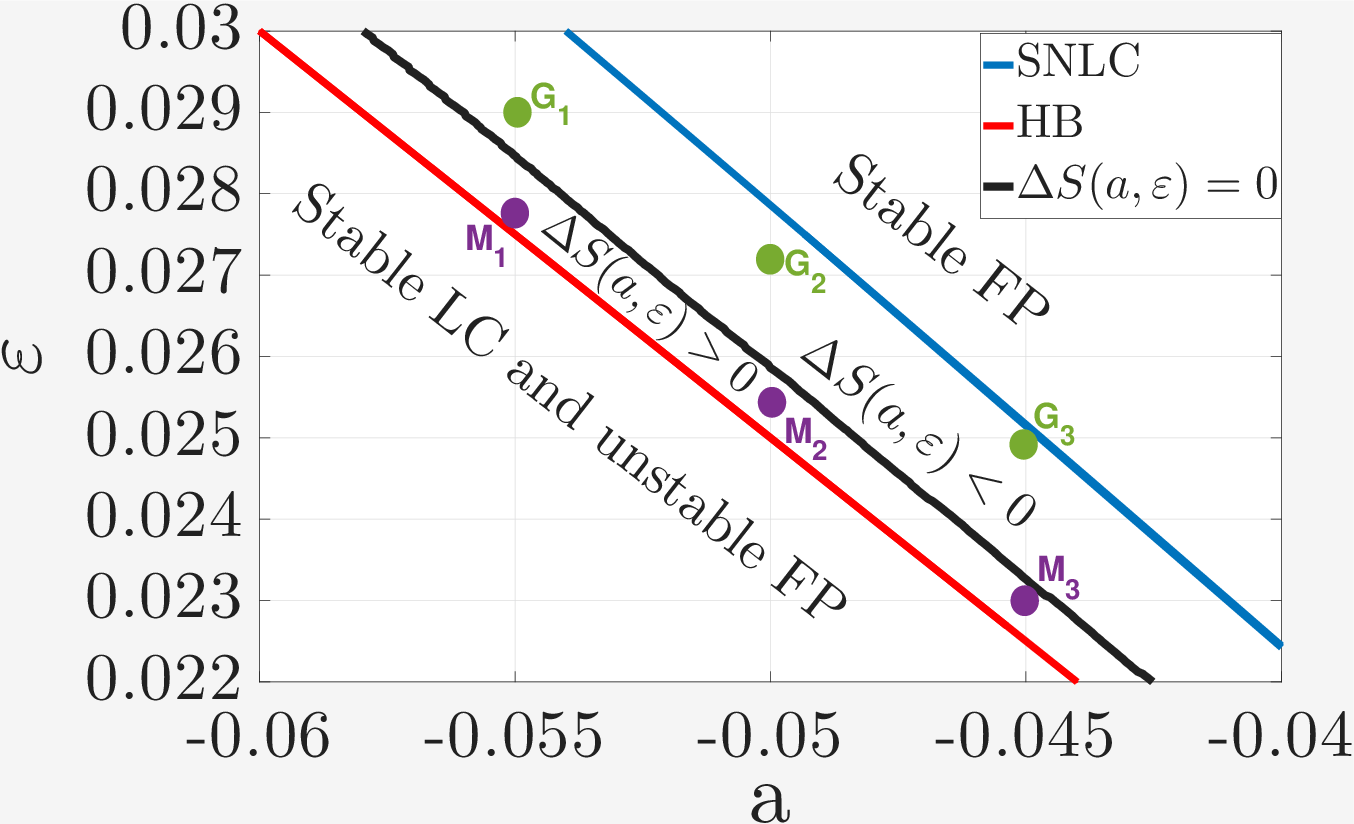}
\caption{Quasi-potential partition of the bistable region in the $(a,\varepsilon)$-plane. The black curve is the numerically computed balance set $\Delta S(a,\varepsilon)=S_{\LC}-S_{\FP}=0$, separating the regions $\Delta S>0$ and $\Delta S<0$. The red and blue curves denote the Hopf and SNLC boundaries, respectively. Magenta markers denote test points with $\Delta S>0$: $M_1=(-0.055,0.02751)$, $M_2=(-0.050,0.0253784)$, and $M_3=(-0.045,0.023313539824)$. Green markers denote test points with $\Delta S<0$: $G_1=(-0.055,0.029)$, $G_2=(-0.050,0.0272325)$, and $G_3=(-0.045,0.0250727)$. The magenta and green points are used in Figs.~\ref{fig:10} and \ref{fig:11}, respectively. $b=1.0$ and $c=2.0$.}
\label{fig:9}
\end{figure}

It is useful to rewrite the ratio of mean escape times in logarithmic form.
From Eq.~\eqref{eq:muBLC_refined}, we obtain
\begin{equation}\label{eq:log_ratio_refined}
\log\!\left(\frac{\mathbb E[\tau_{\FP}]}{\mathbb E[\tau_{\LC}]}\right)
=
\log P
+
\beta\log \sigma
-
\frac{\Delta S}{\sigma^{2}}.
\end{equation}
Equation~\eqref{eq:log_ratio_refined} separates the algebraic prefactor
contribution from the Arrhenius-type exponential contribution. 

Figure~\ref{fig:10} shows fits of Eq.~\eqref{eq:log_ratio_refined} to the data and illustrates the criterion for the three magenta test points ($M_1,M_2,M_3$) in Fig.~\ref{fig:9}, all satisfying $\Delta S>0$. Panels (a1)--(c1) show the Monte Carlo estimates of $\mu_\sigma(B_{\mathrm{LC}})$, while panels (a2)--(c2) show the corresponding logarithmic escape-time ratios after subtracting the leading quasi-potential contribution $(-\Delta S/\sigma^2)$. The resulting transformed quantity is used to estimate the effective exponent $\beta$ in  Eq.~\eqref{eq:log_ratio_refined} over the weak-noise fitting interval. The fitted values of $\beta$ are then used to evaluate the reduced escape-balance criterion for an ISR minimum. This provides a numerical consistency check for the reduced escape-balance description.

\begin{figure}[h]
\centering 
\includegraphics[width=5.0cm,height=9.0cm]{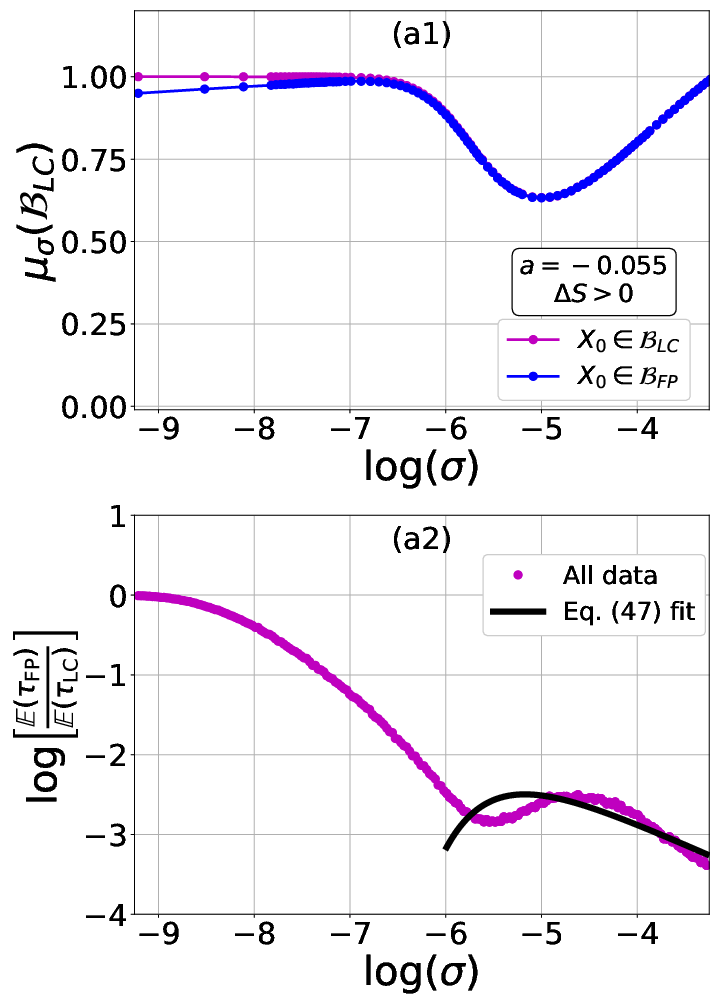}
\includegraphics[width=5.0cm,height=9.0cm]{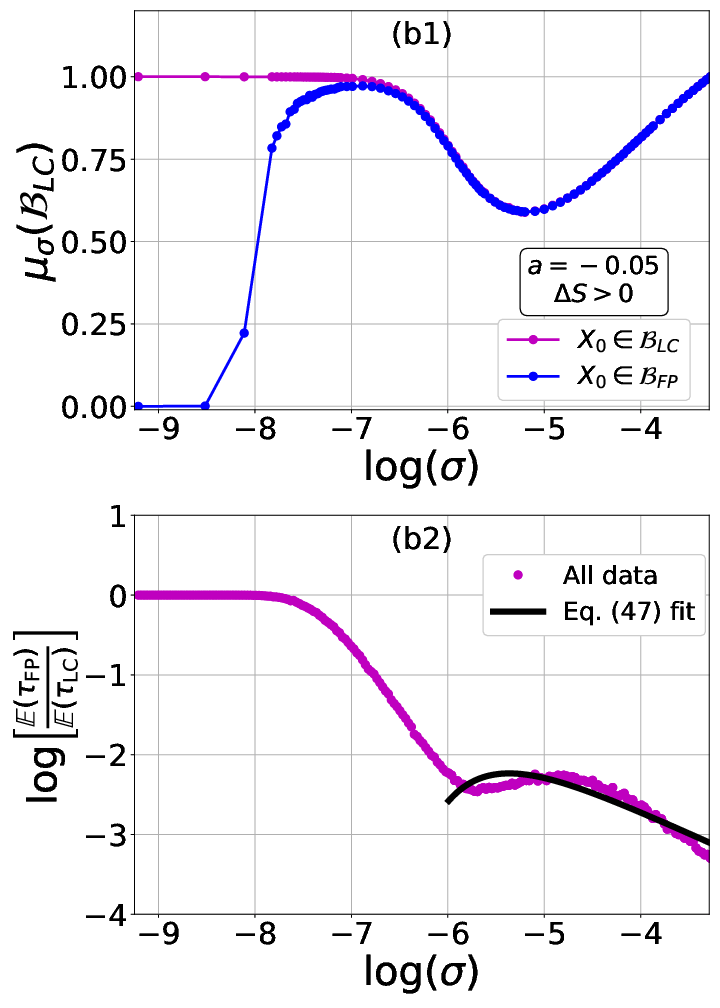}
\includegraphics[width=5.0cm,height=9.0cm]{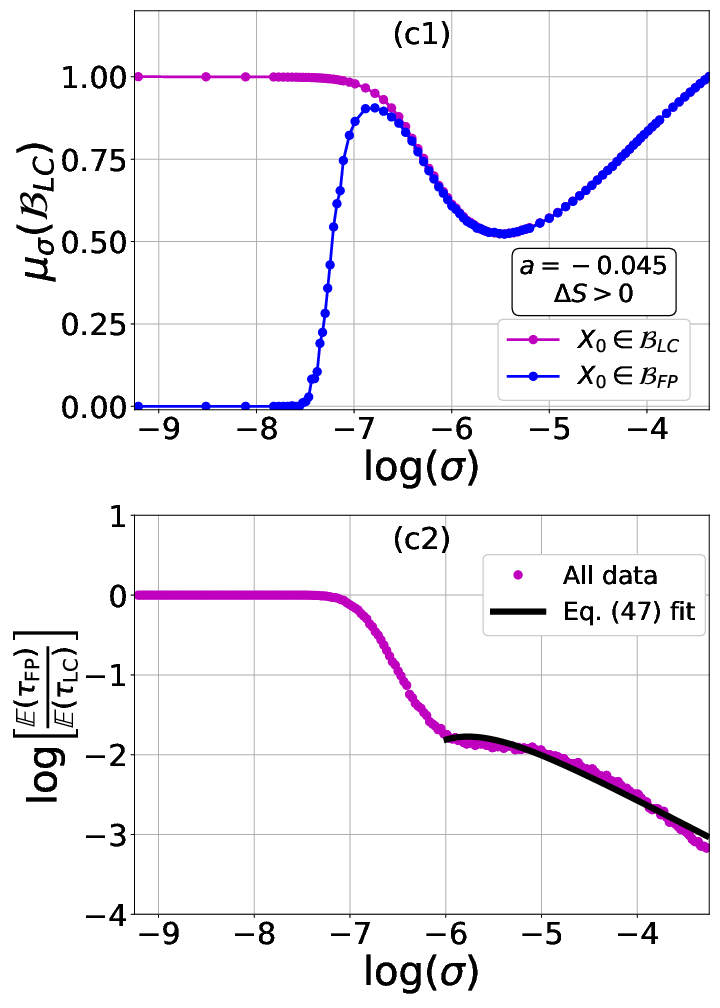}
\caption{
Numerical validation of the weak-noise ISR criterion for the three magenta parameter points \(M_1,M_2,M_3\) in Fig.~\ref{fig:9}, all satisfying $\Delta S>0$. Panels (a1)--(c1) show Monte Carlo estimates of $\mu_\sigma(\mathcal B_{\LC})$ as functions of $\log\sigma$, with the observed minimum denoted by $\sigma_{\min}$. Panels (a2)--(c2) show the transformed logarithmic escape-time ratios used to estimate the effective exponent $\beta$ in Eq.~\eqref{eq:log_ratio_refined}. The fits are performed over the weak-noise interval $\log\sigma\in[-6.0,-3.3]$. Columns (a), (b), and (c) correspond to \(M_1\), \(M_2\), and \(M_3\), respectively. Other parameters are $T=2.5\times 10^{5}$, $b=1.0$, and $c=2.0$.
}
\label{fig:10}
\end{figure}

For each parameter point, the quasi-potential difference $\Delta S$ was first computed independently from the quasi-potential landscape using the degenerate-noise gMAM and then kept fixed during the fitting procedure, so that only $\log P$ and $\beta$ were estimated from the mean escape-time data.
Rewriting Eq.~\eqref{eq:log_ratio_refined} as
\begin{equation}
\label{eq:linear_fit_corrected}
    y+\frac{\Delta S}{\sigma^2}=C+\beta x,
    \qquad
    y=\log\!\left(\frac{\mathbb E[\tau_{\FP}]}
    {\mathbb E[\tau_{\LC}]}\right),\quad
    x=\log\sigma,\quad C=\log P,
\end{equation}
reduces the estimation of the remaining parameters to a linear regression
problem. The effective prefactor parameters $C=\log P$ and $\beta$ are then
obtained by ordinary least-squares fitting of the corrected data
$y+\Delta S/\sigma^2$ against $x$ over the fitting interval used in
Fig.~\ref{fig:10}.

The fitting interval $\log\sigma\in[-6.0,-3.3]$ was chosen to isolate the noise range in which the metastable two-state reduction is both numerically resolved and dynamically relevant for the ISR transition. For very weak noise amplitudes, $\log \sigma\in[-9.2,-6)$, exits from the more stable basin are too rare within our long but finite simulation horizon $T$. As a result, the mean exit times saturate at $T$, \textit{i.e.,} $\mathbb E[\tau_{\FP}] = \mathbb E[\tau_{\LC}] = T,
$ which gives $\log\left(\mathbb E[\tau_{\FP}]/\mathbb E[\tau_{\LC}]\right)=\log(1)=0$ (see Figs. \ref{fig:10}(a2)-(c2) and \ref{fig:11}(a2)-(c2) for $\log \sigma\in[-9.2,-6)$). In this regime, the estimated occupation probability $\mu_\sigma(\mathcal B_{\LC})$ also shows a visible dependence on the initial basin, either $\mathcal B_{\LC}$ or $\mathcal B_{\FP}$. Therefore, the data in the interval $\log \sigma\in[-9.2,-6)$ are excluded from the fit.

In contrast, in the rest of the interval,\textit{i.e.,} $\log\sigma\in[-6.0,-3.3]$, the occupation probability curves obtained from initial conditions in $\mathcal B_{\FP}$ and $\mathcal B_{\LC}$ largely collapse, within sampling accuracy, onto the same $\mu_{\sigma}(\mathcal B_{\LC})$ profile for the parameter sets used in
the fits. This indicates that transitions between the two basins are
sufficiently well sampled and that the estimates are representative of the
invariant occupation measure. The upper end of the interval, $\log\sigma=-3.3$, is chosen to remain below the regime where moderate-noise effects distort the weak-noise
exponential--algebraic scaling in Eq.~\eqref{eq:log_ratio_refined}. Thus, the
selected interval, $\log\sigma\in[-6.0,-3.3]$, provides a practical compromise between avoiding unresolved rare-event artefacts and hence dependence of $\mu_\sigma(\mathcal B_{\LC})$ on the initial basin at too small noise amplitudes and avoiding non-asymptotic corrections at larger noise $\log \sigma>-3.3$.

Table~\ref{tab:sigma_comparison} summarizes the resulting quantitative test for
the three parameter values indicated by magenta markers  ($M_1,M_2,M_3$) in Fig.~\ref{fig:9}.
For all three points, the fitted effective exponent is negative, so that
$\Delta S\beta<0$ and Eq.~\eqref{predicted_noise} yields a positive
ISR-minimizing noise amplitude. The predicted values $\sigma_\ast$ are close to
the observed minima $\sigma_{\min}$ measured from the Monte Carlo occupation
curves. The remaining discrepancies are expected, since
Eq.~\eqref{predicted_noise} retains only the leading exponential--algebraic
structure of the mean-exit times and neglects higher-order prefactor
corrections, finite-time sampling effects, and non-asymptotic deviations at
moderate noise. Thus, the table supports the proposed mechanism: in the regime
$\Delta S>0$ and $\beta<0$, the ISR minimum is governed by the competition
between the quasi-potential imbalance and the algebraic noise-dependence of the
escape-time prefactors.

\begin{table}
\centering
\scriptsize
\setlength{\tabcolsep}{3pt}
\caption{
Quantitative comparison between the predicted and observed ISR minima for the three magenta parameter points \(M_1,M_2,M_3\) in Fig.~\ref{fig:9}, all satisfying $\Delta S>0$. The columns report the gMAM-computed barrier difference $\Delta S$, the fitted parameters $\log P$ and $\beta$ in Eq.~\eqref{eq:log_ratio_refined}, the predicted value $\sigma_\ast$ from Eq.~\eqref{predicted_noise}, the observed minimum $\sigma_{\min}$ of $\mu_\sigma(\mathcal B_{\LC})$, the absolute prediction error, and the relative error.
}
\label{tab:sigma_comparison}
\begin{tabular*}{\linewidth}{@{\extracolsep{\fill}}lccccccc@{}}
\hline
Test point
& $\Delta S$
& $\log P$
& $\beta$
& $\sigma_\ast$
& $\sigma_{\min}$
& $|\sigma_\ast-\sigma_{\min}|$
& $|\sigma_\ast-\sigma_{\min}|/\sigma_{\min}$ \\
\hline
$M_1$
& $8.5470\times10^{-6}$
& $-4.99$
& $-0.5331$
& $0.5663\times 10^{-2}$
& $0.6733\times 10^{-2}$
& $0.0011$
& $0.1589$\\
$M_2$
& $6.0123\times10^{-6}$
& $-4.90$
& $-0.5480$
& $0.4684\times10^{-2}$
& $0.5505\times10^{-2}$
& $0.0008$
& $0.1491$\\
$M_3$
& $4.7459\times10^{-6}$
& $-5.34$
& $-0.6935$
& $0.3695\times10^{-2}$
& $0.4042\times10^{-2}$
& $0.0003$
& $0.0940$\\
\hline
\end{tabular*}
\end{table}

Figure~\ref{fig:11} gives the complementary test for the three green points  ($G_1,G_2,G_3$) in
Fig.~\ref{fig:9}, for which $\Delta S<0$. The logarithmic escape-ratio plots in
panels (a2)--(c2) again yield negative fitted effective exponents $\beta$. Hence
$\Delta S$ and $\beta$ have the same sign, so that
Eq.~\eqref{predicted_noise} has no positive real solution. The reduced
escape-balance theory, therefore, predicts that no genuine ISR minimum should
occur for these parameter values. The apparent depressions visible in the
finite-time occupation curves initialized in $\mathcal B_{\LC}$ are consistent
with unresolved rare escapes over the simulation horizon, rather than with an
asymptotic ISR dip.

\begin{figure}[h]
\centering 
\includegraphics[width=5.0cm,height=9.0cm]{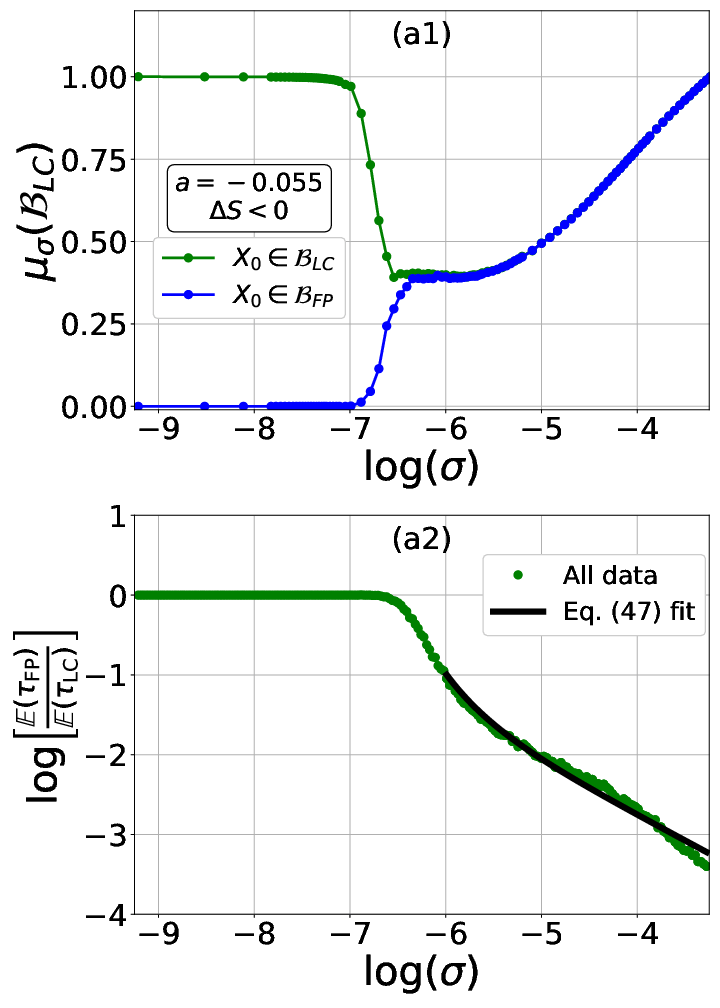}
\includegraphics[width=5.0cm,height=9.0cm]{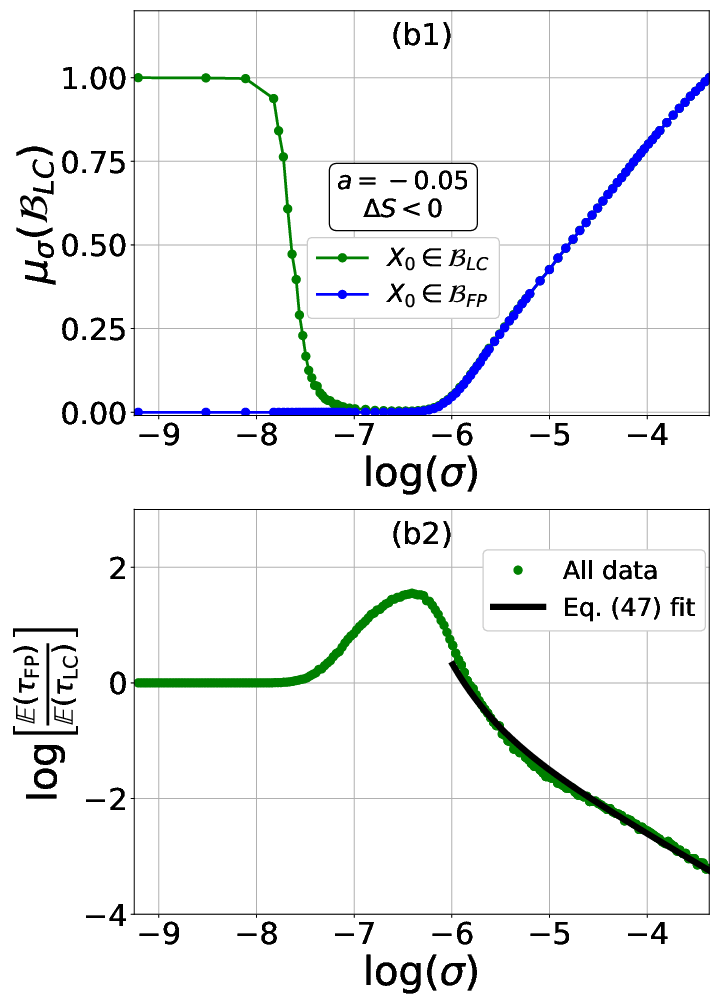}
\includegraphics[width=5.0cm,height=9.0cm]{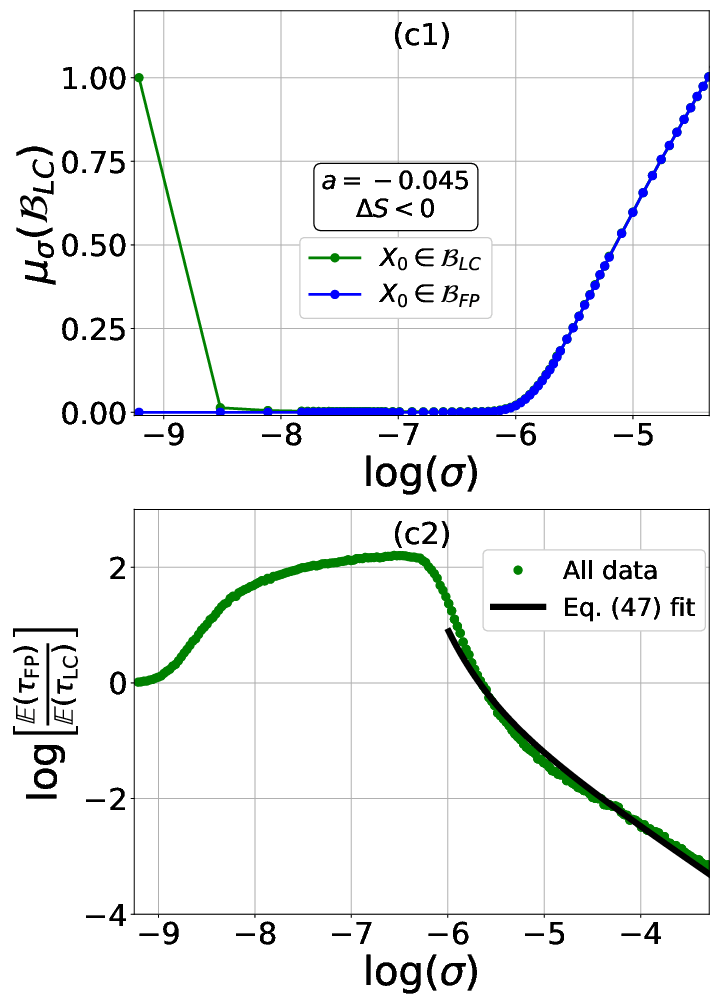}
\caption{
Numerical test of the weak-noise criterion for the three green parameter points \(G_1,G_2,G_3\) in Fig.~\ref{fig:9}, all satisfying $\Delta S<0$. Panels (a1)--(c1) show Monte Carlo estimates of $\mu_\sigma(\mathcal B_{\LC})$ as functions of $\log\sigma$. Panels (a2)--(c2) show the transformed logarithmic escape-time ratios used to estimate the effective exponent $\beta$ in Eq.~\eqref{eq:log_ratio_refined}. The fits are performed over the weak-noise interval $\log\sigma\in[-6.0,-3.3]$. Columns (a), (b), and (c) correspond to \(G_1\), \(G_2\), and \(G_3\), respectively. Since the fitted values of $\beta$ are negative while $\Delta S<0$, Eq.~\eqref{predicted_noise} has no positive real solution, and the reduced theory predicts no genuine asymptotic ISR minimum. Other parameters are $T=2.5\times10^{5}$, $b=1.0$, and $c=2.0$.
}
\label{fig:11}
\end{figure}

Table~\ref{tab:green_no_isr} gives the complementary quantitative check for the
green parameter points. Unlike the magenta cases in
Table~\ref{tab:sigma_comparison}, these points are not used to compare a
predicted minimum with an observed one. Instead, they test the non-existence
part of the weak-noise criterion. For each green point, the right-hand side of
Eq.~\eqref{predicted_noise} is negative, so the critical-noise formula does not
define a positive real noise amplitude. The table therefore supports the
interpretation that the depressions visible (see, \textit{e.g.,} Fig. \ref{fig:11}(a1)-(b1)) in finite-time occupation curves are
not genuine asymptotic ISR minima.

Furthermore, the quality of the fit in Eq.~\eqref{eq:linear_fit_corrected} depends on the location of the parameter point within the bistable wedge. For the magenta points  ($M_1,M_2,M_3$) with $\Delta S>0$, shown in Fig.~\ref{fig:9}, the fit improves as the parameters move away from the Hopf boundary: the root-mean-square error decreases from (0.182) in Fig.~\ref{fig:10}(a2), to (0.116) in Fig.~\ref{fig:10}(b2), and to (0.063) in Fig.~\ref{fig:10}(c2). Conversely, for the green points ($G_1,G_2,G_3$) with $\Delta S<0$, also shown in Fig.~\ref{fig:9}, the fit deteriorates as the parameters approach the SNLC boundary: the root-mean-square error increases from (0.080) in Fig.~\ref{fig:11}(a2), to (0.092) in Fig.~\ref{fig:11}(b2), and to (0.149) in Fig.~\ref{fig:11}(c2).

\begin{table}
\centering
\scriptsize
\setlength{\tabcolsep}{3pt}
\caption{Quantitative test of the non-existence part of the weak-noise criterion for the three green parameter points \(G_1,G_2,G_3\) in Fig.~\ref{fig:9}, all satisfying $\Delta S<0$. The columns report the gMAM-computed barrier difference $\Delta S$, the fitted parameters $\log P$ and $\beta$ in Eq.~\eqref{eq:log_ratio_refined}, and the prediction obtained from Eq.~\eqref{predicted_noise}.
}
\label{tab:green_no_isr}
\begin{tabular*}{\linewidth}{@{\extracolsep{\fill}}lcccl@{}}
\hline
Test point
& $\Delta S$
& $\log P$
& $\beta$
& Prediction \\
\hline
$G_1$
& $-3.0656\times10^{-6}$
& $-5.32$
& $-0.6395$
& No positive real $\sigma_\ast$ \\
$G_2$
& $-6.1528\times10^{-6}$
& $-6.56$
& $-0.9842$
& No positive real $\sigma_\ast$ \\
$G_3$
& $ -6.9879\times10^{-6}$
& $-7.07$
& $-1.1424$
& No positive real $\sigma_\ast$ \\
\hline
\end{tabular*}
\end{table}

These trends are consistent with the quasi-potential structure shown in Fig.~\ref{fig:6}: near the Hopf boundary, the barrier ($S_{\FP}$) for escape from the fixed-point basin is small while $S_{\LC}$ is comparatively large; near the SNLC boundary, the barrier ($S_{\LC}$) for escape from the limit-cycle basin decreases while $S_{\FP}$ is comparatively large. In both cases, one of the two attracting states is less strongly metastable, so finite simulation time, higher-order prefactor effects, and other non-asymptotic corrections become more visible over the accessible noise range. Thus, the variation in fit quality reflects boundary-induced limitations of the leading exponential--algebraic approximation in Eq.~\eqref{eq:log_ratio_refined}, rather than a failure of the escape-balance mechanism. In particular, the deterioration of the fit for the green points near the SNLC boundary does not alter the qualitative prediction:
since $\Delta S<0$ and the fitted effective exponent $\beta$ remains negative,
Eq.~\eqref{predicted_noise} has no positive real solution. The reduced
weak-noise theory, therefore, predicts no genuine asymptotic ISR minimum for
these parameter values.

\section{Summary and concluding remarks}
\label{sec:conclusion}
We have analyzed inverse stochastic resonance in a bistable
FitzHugh--Nagumo neuron driven by additive noise in the voltage variable. The main focus was on
how the excitability parameter $a$ and the timescale separation parameter
$\varepsilon$ shape the transition structure between quiescent and spiking
states. The deterministic bifurcation analysis identifies a codimension-two
bistability wedge in the $(a,\varepsilon)$-plane, bounded by a subcritical Hopf
bifurcation and a saddle-node of limit cycles. Inside this wedge, a stable fixed
point and a stable limit cycle coexist and are separated by an unstable periodic
orbit. This provides the deterministic phase-space geometry required for
noise-induced switching and hence for bistability-driven ISR.

The Monte Carlo simulations show that finite observation times can produce an
apparent dependence of ISR on the initial basin. In particular, occupation
curves initialized in the fixed-point basin and in the limit-cycle basin may
differ when one of the relevant escape processes is too rare to be sufficiently
sampled. This dependence is a finite-time effect. The uniqueness of the
invariant probability measure implies that, for fixed parameters and
$\sigma>0$, the long-time occupation probability of the limit-cycle basin is
independent of the initial condition. Thus, genuine asymptotic ISR is determined
by the parameter-dependent balance of the two escape mechanisms, not by the
initial state.

This balance is quantified by the quasi-potential barriers
$S_{\FP}(a,\varepsilon)$ and $S_{\LC}(a,\varepsilon)$ for escape from the
fixed-point and limit-cycle basins. The degenerate-noise gMAM computations show
that the difference
\[
    \Delta S(a,\varepsilon)=S_{\LC}(a,\varepsilon)-S_{\FP}(a,\varepsilon)
\]
organizes the bistable wedge into two regimes. When $\Delta S>0$, escape from
the limit-cycle basin is exponentially more costly than escape from the
fixed-point basin. When $\Delta S<0$, the quasi-potential ordering is reversed.
This partition explains why some finite-time non-monotone occupation curves
correspond to genuine ISR, whereas others arise from unresolved rare-event
statistics over the simulation horizon.

A reduced metastable description, based on the leading weak-noise form of the
mean escape times, gives the approximation
\[
    \mu_\sigma(\mathcal B^\delta_{\LC})
    =
    \left[
    1+
    P\sigma^\beta
    \exp\!\left(-\frac{\Delta S}{\sigma^2}\right)
    \right]^{-1}.
\]
This formula yields the sign condition $\Delta S\beta<0$ for
non-monotonicity of the basin occupation. In the fitted regime considered here,
where $\beta<0$, a genuine ISR dip is predicted only on the side
$\Delta S>0$. In that case, the reduced model also gives the predicted minimizing noise amplitude
\[
    \sigma_\ast^2=-\frac{2\Delta S}{\beta}.
\]
The test points with $\Delta S>0$ satisfy this criterion and yield
semiquantitative agreement between the predicted $\sigma_\ast$ and the observed
Monte Carlo minima. The test points with $\Delta S<0$ have no positive
predicted minimizer and therefore do not correspond to genuine asymptotic ISR
within the reduced weak-noise theory. 
The weak-noise approximation for
$\mu_\sigma(\mathcal B^\delta_{\LC})$ also suggests a limiting notion of complete noise-induced suppression of oscillatory-basin occupation near the quasi-potential balance boundary. More precisely, consider a two-dimensional dynamical system depending on a continuous parameter $q$, and suppose that, for
$q\in\mathcal Q$, the deterministic system possesses a stable fixed point
$X_{\FP}(q)$ and a stable limit cycle $\Gamma_{\LC}(q)$ separated by an
unstable limit cycle $\Gamma_{\SP}(q)$. Assume further that
$
\Delta S(q):=S_{\LC}(q)-S_{\FP}(q)>0
$
for $q\in\mathcal Q$, and that $\beta(q)<0$ in this regime. If
$q_0\in\partial\mathcal Q$ is such that $\Delta S(q)\to0^+$ as
$q\to q_0$ from within $\mathcal Q$, then the predicted minimizing noise
amplitude satisfies
$
\sigma_\ast(q)^2=-2\Delta S(q)/\beta(q)\to0.
$
Moreover, provided the prefactor ratio $P(q)$ remains bounded away from zero
and infinity, and $\beta(q)$ remains in a compact subset of $(-\infty,0)$,
the corresponding minimum of the weak-noise approximation satisfies
$
\mu_{\sigma_\ast(q)}^{(q)}(\mathcal B^\delta_{\LC}(q))\to0.
$
Thus the ISR dip enters arbitrarily small noise windows and becomes arbitrarily
deep as $q$ approaches $q_0$.

This motivates the following definition. We say that the 2-d dynamical system exhibits \textit{full
inverse stochastic resonance} at $q_0$ if, for every sequence $q_n\to q_0$ with
$q_n\in\mathcal Q$, and for every sufficiently small $\sigma_0>0$,
\[
\lim_{n\to\infty}
\inf_{0<\sigma\leq\sigma_0}
\mu_\sigma^{(q_n)}(\mathcal B^\delta_{\LC}(q_n))
=
0.
\]
In this sense, full ISR means that, arbitrarily close to $q_0$, the limit-cycle basin occupation probability can be made arbitrarily small by a noise amplitude that lies in any prescribed weak-noise neighborhood of $\sigma=0$.

Overall, the results provide a theoretical mechanism for ISR in a
bistable excitable system: intrinsic parameters modify the deterministic
bifurcation geometry, this geometry determines the quasi-potential barriers,
and the barrier imbalance controls the long-time occupation of the spiking
basin. The framework also distinguishes finite-time ISR-like depressions from
genuine asymptotic ISR, which is essential when interpreting simulations in
metastable regimes with exponentially long residence times. Although developed for the FitzHugh-Nagumo model with degenerate noise, the methodology should be applicable more broadly to excitable systems with coexisting stable equilibria and periodic orbits.

\section*{Acknowledgements} This work was funded by the Department of Data Science (DDS), Friedrich-Alexander-Universit\"at Erlan\-gen-Nürnberg, Germany, and the Deutsche Forschungsgemeinschaft (DFG, German Research Foundation) via the grant YA 764/1-1 to M.E.Y—Project No. 456989199.


\section*{Appendix: Proof of Theorem \ref{thm:FHN-ergodic}}
\label{appendix}
The proof of Theorem~\ref{thm:FHN-ergodic} is essentially contained in the works \cite{Wu2001,leon2018hypoelliptic} which are based on the H\"ormander's subellipticity estimates \cite{Hormander1967} and the uniqueness criterion of Downs, Meyn and Tweedy 
\cite{down1995exponential}. These general results are applied to the concrete stochastic FHN neuron equation studied here. The main new insight is the use of a natural Lyapunov function for the confinement of the dynamics.

\subsection{Local existence and uniqueness}
\label{sec-LocalExist}
In this section, it will be shown that the SDE Eq.~\eqref{eq:2} has a unique strong solution up to the explosion time. The drift coefficients in Eq. \eqref{eq:2} are $f$ and $g$ so that the deterministic vector field is a smooth and locally Lipschitz function $(v,w)\mapsto(f(v,w),g(v,w))$ that has at most
polynomial growth. The diffusion coefficient is the constant vector $(\sigma,0)^{\!\top}$. Define the exit time
\[
  \tau_{R}:=\inf\{t\ge 0:\|X_{t}\|\ge R\}.
\]
For $R$ sufficiently large, one has $\|X_{0}\|<R$. Then introduce
$
f_R(v,w)=
f(-R,w)\mathbbm{1}_{\{v< -R\}}
+
f(v,w)\mathbbm{1}_{\{|v|\leq R\}}
+
f(R,w)\mathbbm{1}_{\{v> R\}}
$
The vector field $(f_R,g)$ is then globally Lipschitz and linearly bounded in $v$ and $w$. Hence, Picard-iteration argument ({\it e.g.,} \cite[Thm.~5.2.1]{OksendalSDE}) shows that a unique strong solution exists for all times. Up to time $\tau_R$, it clearly coincides with the solution of the SDE Eq.~\eqref{eq:2}. Letting $R\uparrow\infty$ produces a unique strong solution up to the explosion time
\[
  \tau_{\infty}:=\lim\limits_{R\to\infty}\tau_{R}
  =\sup\{t\ge 0:\|X_{t}\|<\infty\}.
\]

\subsection{Lyapunov function}
\label{sec:Lyapunov} 

The following result simplifies the arguments in \cite{leon2018hypoelliptic}.

\begin{lemma}
\label{lem-Lyapunov}
Let \(V(v,w)=1+v^{4}+\lambda w^{2}\geq1\). For $\lambda$ sufficiently large, there exist $\kappa>0$ and $C<\infty$ such that the infinitesimal generator 
\(\mathcal L\) (backward Kolmogorov operator, sometimes also called the Itô generator)  of Eq.~\eqref{eq:2} satisfies
\[
   \mathcal{L}V \le -\kappa V + C\;,
\]
pointwise on $\mathbb{R}^2$. 
\end{lemma}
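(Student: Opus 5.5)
The plan is a direct computation of $\mathcal L V$ followed by Young's inequality to absorb the cross terms. The generator of Eq.~\eqref{eq:2} is
\[
\mathcal L = f(v,w)\,\partial_v + g(v,w)\,\partial_w + \tfrac{\sigma^2}{2}\,\partial_v^2 .
\]
With $V=1+v^4+\lambda w^2$ we have $\partial_vV=4v^3$, $\partial_wV=2\lambda w$ and $\partial_v^2V=12v^2$. Expanding $f=-v^3+(a+1)v^2-av-w$ gives
\[
\mathcal L V = -4v^6+4(a+1)v^5-4a v^4-4v^3w+2\lambda\varepsilon b\,vw-2\lambda\varepsilon c\,w^2+6\sigma^2v^2 .
\]
The two dissipative terms are $-4v^6$, which comes from the cubic nonlinearity and is why the quartic $v^4$ is chosen rather than $v^2$, and $-2\lambda\varepsilon c\,w^2$, which comes from $c>0$.

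The two mixed terms are handled with Young's inequality. First, $4|v^3w|\le v^6+4w^2$. Second, $2\lambda\varepsilon b|vw|\le \tfrac{\lambda\varepsilon c}{2}w^2+\tfrac{2\lambda\varepsilon b^2}{c}v^2$. Substituting these bounds yields
\[
\mathcal L V\le -3v^6+p(v)-\Bigl(\tfrac{3}{2}\lambda\varepsilon c-4\Bigr)w^2 ,
\]
where $p$ is a polynomial of degree at most $5$ whose coefficients depend on $a,b,c,\varepsilon,\sigma,\lambda$. This is where "$\lambda$ sufficiently large" enters: the bound $4|v^3w|\le v^6+4w^2$ produces a $+4w^2$ term with a fixed coefficient, independent of $\lambda$, and it must be beaten by the $\lambda$-proportional dissipation. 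Taking $\lambda\ge 8/(\varepsilon c)$ gives $\tfrac32\lambda\varepsilon c-4\ge \lambda\varepsilon c$, so the $w$-part is at most $-\lambda\varepsilon c\,w^2$.

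Next, set $\kappa:=\min\{\varepsilon c,1\}$ and compare with $-\kappa V=-\kappa-\kappa v^4-\kappa\lambda w^2$. The $w^2$ terms are handled by the choice of $\lambda$. For the $v$-terms it remains to show that
\[
h(v):=-3v^6+p(v)+\kappa v^4+\kappa
\]
is bounded above on $\mathbb R$. This holds because $h$ is a polynomial of even degree $6$ with negative leading coefficient, so it tends to $-\infty$ as $|v|\to\infty$ and attains a finite maximum. Setting $C:=\sup_{v\in\mathbb R}h(v)<\infty$ gives $\mathcal LV\le-\kappa V+C$ pointwise on $\mathbb R^2$.

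There is no real obstacle here. The only delicate point is the order in which constants are fixed: $\lambda$ must be fixed first, depending only on $\varepsilon c$, to control the $v^3w$ cross term. The factor $\lambda$ appearing in the $v^2$ coefficient of $p$ is harmless, because any $v^2$ term is absorbed by $-3v^6$ through the constant $C$.
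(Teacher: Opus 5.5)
Your proof is correct and follows the paper's argument. Both expand $\mathcal L V$ the same way, apply Young's inequality to the cross terms $v^3w$ and $vw$ (your second bound is exactly the paper's with $\eta=c/(2b)$), and take $\lambda$ large so that the dissipation $-2\lambda\varepsilon c\,w^2$ beats the $\lambda$-independent $w^2$ term coming from the $v^3w$ bound. Your explicit choices $\lambda\ge 8/(\varepsilon c)$, $\kappa\le\varepsilon c$ and $C=\sup_v h(v)$ also make rigorous the paper's informal ``for $v$ and $w$ sufficiently large'' step.
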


\begin{proof} Let $x=(v,w)=(x_1,x_2)\in\mathbb{R}^2$. Applying $\mathcal{L}$ to $V$ yields:
\begin{align*}\nonumber
(\mathcal{L}V)(x) &= 
f(x)\partial_vV(x)+g(x)\partial_wV(x)
  + \frac{1}{2}\sum\limits_{i,j=1}^{2} (\sigma^2 NN^{\top})_{ij}\,\partial_{x_i x_j}^{2} V_{\lambda}(x),\\\nonumber
&= [v(a-v)(v-1) - w] \cdot 4v^3 + \varepsilon(bv - cw) \cdot 2\lambda w + \frac{1}{2}\sigma^2\cdot 12v^2, \\
&= -4v^6 + 4(a+1)v^5 - 4a v^4 - 4v^3w + 6\sigma^2 v^2 + 2\lambda\varepsilon b v w - 2\lambda\varepsilon c w^2.
\end{align*}
To prove the claim, it has to be shown that the summands $-4v^6$ and $2\lambda\varepsilon c w^2$ dominate all others for $v$ and $w$ sufficiently large. Clearly $-4v^6$ dominates $4(a+1)v^5 - 4a v^4+ 6\sigma^2 v^2$ for large $v$. For the mixed products $v^3w$ and $vw$, we will use Young's inquality, giving
$$
|v^3w| \leq \frac{v^6}{2} + \frac{w^2}{2} 
\;,
\qquad
|vw| \leq \frac{v^2}{2\eta} + \frac{\eta w^2}{2}
\;,
$$
where $\eta>0$ is a free parameter to be chosen shortly. The first of these two bounds leads to a first contribution $2v^6$, which is smaller than $-4v^6$, and a second contribution which is $2w^2$. This latter has to be compensated by the negative contribution  $- 2\lambda\varepsilon c w^2$. The bound on $|vw|$ gives one summand $2\lambda\varepsilon b \frac{v^2}{2\eta}$ which is again dominated by $v^6$, and a second one $2\lambda\varepsilon b \frac{\eta w^2}{2}$. Choosing $\eta$ sufficiently small, this is dominated by $2\lambda\varepsilon c w^2$.  Finally, choosing $\lambda$ sufficiently large, this last summand indeed dominates $2w^2$. In conclusion, we obtain a bound
$$
(\mathcal{L}V)(x)
\leq -c_v v^6 -c_ww^2 
\leq -c_v v^4 -c_ww^2,
$$
for $v$ and $w$ sufficiently large, for two positive constants $c_v$ and $c_w$. For smaller values of $v$ and $w$, $(\mathcal{L}V)(x)$ is bounded by a constant. Combining these bounds one readily deduces the claim for a $\kappa=\min\{c_v,\frac{c_w}{\lambda}\}$. 
\end{proof}

\subsection{Non-explosion}
\label{sec:NonExplosion} 
Combined with Section~\ref{sec-LocalExist}, the following lemma proves item (i) of Theorem~\ref{thm:FHN-ergodic}.

\begin{lemma}
\label{nonexploseLemma}
$\mathbb{P}(\tau_\infty<\infty)=0$.
\end{lemma}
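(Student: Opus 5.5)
The plan is the standard Khasminskii-type argument, driven by the Lyapunov function $V(v,w)=1+v^4+\lambda w^2$ of Lemma~\ref{lem-Lyapunov}. For each $R$, the stopped process $X_{t\wedge\tau_R}$ is well defined by Section~\ref{sec-LocalExist}. On $\{\|x\|\le R\}$ all coefficients are bounded and $V$ is smooth, so Itô's formula applies to $V(X_{t\wedge\tau_R})$. The stochastic integral $\int_0^{t\wedge\tau_R}\sigma\,\partial_vV(X_s)\,dB_s$ has a bounded integrand before $\tau_R$, so it is a true martingale with zero expectation. Hence
\[
\mathbb E_x\bigl[V(X_{t\wedge\tau_R})\bigr]=V(x)+\mathbb E_x\int_0^{t\wedge\tau_R}(\mathcal L V)(X_s)\,ds .
\]

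Next I insert the bound $\mathcal LV\le-\kappa V+C\le C$ from Lemma~\ref{lem-Lyapunov}, which gives $\mathbb E_x[V(X_{t\wedge\tau_R})]\le V(x)+Ct$. This crude bound suffices. If one prefers a uniform-in-time bound, apply Itô to $e^{\kappa t}V$ instead, which yields $\mathbb E_x[V(X_{t\wedge\tau_R})]\le V(x)+C/\kappa$.

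On the event $\{\tau_R\le t\}$ we have $\|X_{\tau_R}\|=R$ by path continuity. Therefore $V(X_{\tau_R})\ge m_R:=\min_{\|y\|=R}V(y)$. Since $V$ is coercive ($V\ge 1+v^4+\lambda w^2\to\infty$ as $\|y\|\to\infty$), we have $m_R\to\infty$. Using $V\ge1>0$ and a Markov-type inequality,
\[
\mathbb P_x(\tau_R\le t)\le \frac{V(x)+Ct}{m_R}\xrightarrow[R\to\infty]{}0 .
\]
Because $\tau_R\uparrow\tau_\infty$, the events satisfy $\{\tau_\infty\le t\}\subset\{\tau_R\le t\}$ for every $R$. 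This gives $\mathbb P_x(\tau_\infty\le t)=0$ for each fixed $t$. Taking the union over $t\in\mathbb N$ yields $\mathbb P(\tau_\infty<\infty)=0$.

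The only delicate point is justifying Itô's formula and the vanishing of the martingale term up to $\tau_R$. Stopping at $\tau_R$ handles both, since the polynomial drift and the gradient of $V$ are bounded on the ball. The rest is routine, so I expect no real obstacle beyond this bookkeeping.
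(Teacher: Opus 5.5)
Your proof is correct and follows the same Khasminskii-type argument as the paper: Dynkin's formula for $V=1+v^4+\lambda w^2$ up to a localizing time, the bound $\mathcal LV\le C$ from Lemma~\ref{lem-Lyapunov}, and coercivity of $V$. The differences are minor: you localize at the ball exit times $\tau_R$ rather than the level-set times $\hat\tau_n$, and your Markov-type bound $m_R\,\mathbb P_x(\tau_R\le t)\le V(x)+Ct$ makes the paper's final step rigorous, since its assertion $\mathbb E[V(X_{t\wedge\hat\tau_n})]=\infty$ needs exactly this inequality or Fatou's lemma.

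One caveat on your unused aside. For the stopped process, It\^o applied to $e^{\kappa t}V$ only gives $\mathbb E_x\bigl[e^{\kappa(t\wedge\tau_R)}V(X_{t\wedge\tau_R})\bigr]\le V(x)+\frac{C}{\kappa}\bigl(e^{\kappa t}-1\bigr)$. The uniform bound $V(x)+C/\kappa$ in fact fails for fixed large $R$ as $t\to\infty$: because $\tau_R<\infty$ a.s., $\mathbb E_x[V(X_{t\wedge\tau_R})]\to\mathbb E_x[V(X_{\tau_R})]\ge m_R$. That uniform bound holds only for the unstopped process, once non-explosion is known.
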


\begin{proof} Let $V$ be as in Lemma~\ref{lem-Lyapunov} and set
\[
\hat{\tau}_n\;:=\;\inf\bigl\{t\ge0 : V (X_t)\ge n\bigr\},\qquad n\in\mathbb{N} .
\]
By Dynkin’s formula \cite[Thm.~7.4.1]{OksendalSDE} applied to the stopped process,
\begin{equation}\label{eq:Dynkin-stopped}
      \mathbb{E}\bigl[V (X_{t\wedge\hat{\tau}_n})\bigr]
      \;=\;
      V (X_0)
+\mathbb{E}\!\int_{0}^{t\wedge\hat{\tau}_n}\!\mathcal L V (X_s)\,ds ,
      \qquad t\ge0 .
\end{equation}
Hence by Lemma~\ref{lem-Lyapunov} one gets for all $t$
$$
      \mathbb{E}\bigl[V (X_{t\wedge\hat{\tau}_n})\bigr]
      \;\leq\;
      V (X_0)
+Ct
\;.
$$
Taking $n \to \infty$, then $\hat{\tau}_n \to \tau_\infty$ because \(V \) is coercive, namely $V_{\lambda}(x) \to \infty$ as $\|x\| \to \infty$. Suppose now that $\tau_\infty \leq t$ on a set of positive probability. Then $\mathbb{E}\bigl[V (X_{t\wedge\hat{\tau}_n})\bigr]=\infty$, again due to the coercivity of $V$, contradicts the above bound.  Hence \(\mathbb{P}(\tau_\infty\le t)=0\) for every
\(t>0\); \textit{i.e.,}\ \(\tau_\infty=\infty\) almost surely and the solution is
non-explosive in finite time.
\end{proof}

\subsection{H\"ormander property}

\begin{lemma}
\label{hoermanderLemma}
$\mathcal{L}$ and $\mathcal{L}^*$ are hypoelliptic in the sense of H\"ormander.
\end{lemma}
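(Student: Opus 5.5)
The plan is to verify Hörmander's bracket condition directly for the generator written in sum-of-squares form. We have
\[
\mathcal L = X_0 + \tfrac12 X_1^2, \qquad X_1=\sigma\,\partial_v,\qquad X_0=f(v,w)\,\partial_v+g(v,w)\,\partial_w,
\]
and for the adjoint
\[
\mathcal L^* = \tfrac12 X_1^2 - X_0 - (\operatorname{div}F),
\]
where $\operatorname{div}F=\partial_v f+\partial_w g$ is a smooth function. All coefficients are polynomial, hence $C^\infty$.

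Hörmander's theorem \cite{Hormander1967} states that an operator of the form $\sum_i X_i^2 + X_0 + c$ with smooth $c$ is hypoelliptic. The hypothesis is that the Lie algebra generated by $X_1$ and $X_0$, with $X_0$ entering only through brackets, spans $\mathbb R^2$ at every point. The zeroth-order term $-\operatorname{div}F$ and the sign change of $X_0$ do not affect this condition, so it suffices to check it once for $\mathcal L$; the same argument then covers $\mathcal L^*$.

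The key computation is the single commutator
\[
[X_1,X_0]=\sigma\bigl(\partial_v f\,\partial_v+\partial_v g\,\partial_w\bigr)=\sigma\bigl(\partial_v f\,\partial_v+\varepsilon b\,\partial_w\bigr),
\]
since $\partial_v g=\varepsilon b$. Together with $X_1=\sigma\partial_v$, the span at each point $(v,w)$ is given by the determinant
\[
\det\begin{pmatrix}\sigma & 0\\ \sigma\,\partial_v f & \sigma\varepsilon b\end{pmatrix}=\sigma^2\varepsilon b,
\]
which is a nonzero constant under the hypotheses $\sigma,\varepsilon,b>0$ of Theorem~\ref{thm:FHN-ergodic}. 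So $\{X_1,[X_1,X_0]\}$ spans $\mathbb R^2$ everywhere, and no higher brackets are needed.

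The main point requiring care is conceptual rather than computational. One must invoke the version of Hörmander's theorem that permits a drift field and a zeroth-order term, so that $\mathcal L^*$ is covered. One must also note that the parabolic operators $\partial_t-\mathcal L$ and $\partial_t-\mathcal L^*$ satisfy the same condition on $\mathbb R\times\mathbb R^2$: adding $\partial_t$ to the drift still gives brackets spanning the remaining space directions, and $\partial_t$ is supplied by the drift itself. This parabolic hypoellipticity is what the later smoothness claims for $P_t$ and $P_{\rm stat}$ in item (ii) use.
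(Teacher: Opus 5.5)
Your proposal is correct and follows the paper's proof. Both compute the single commutator $[X_1,X_0]=\sigma(\partial_v f\,\partial_v+\varepsilon b\,\partial_w)$, pair it with $X_1=\sigma\partial_v$ to get the nonvanishing determinant $\sigma^2\varepsilon b$, and conclude that H\"ormander's bracket condition holds everywhere. Your remarks on $\mathcal L^*$ (the sign-flipped drift and the harmless zeroth-order term $-\operatorname{div}F$) and on the parabolic operators $\partial_t-\mathcal L$, $\partial_t-\mathcal L^*$ make explicit what the paper leaves implicit.
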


\begin{proof}
Let us define the differential operators:
\[
\mathcal{G}_0 := f \partial_v + g \partial_w, \quad 
\mathcal{G}_1 := \sigma \partial_v,
\]
where $f$ and $g$ are the coefficients of the deterministic dynamics.
The Lie bracket evaluates to: 
\[
[\mathcal{G}_1, \mathcal{G}_0] = \sigma \partial_v(f) \partial_v + \sigma \partial_v(g) \partial_w.
\]
Explicitly:
\[
\partial_v f = -3v^2 + (2a + 2)v - a, \quad 
\partial_v g = \varepsilon b.
\]
Thus:
\[
[\mathcal{G}_1, \mathcal{G}_0] = \sigma (-3v^2 + (2a + 2)v - a) \partial_v + \sigma \varepsilon b \partial_w.
\]

At any \((v,w)\), the vectors 
\[
\mathcal{G}_1 = \sigma \begin{pmatrix} 1 \\ 0 \end{pmatrix}, \quad
[\mathcal{G}_1, \mathcal{G}_0] = \sigma \begin{pmatrix} -3v^2 + (2a + 2)v - a \\ \varepsilon b \end{pmatrix},
\]
are linearly independent because:
\[
\det \begin{pmatrix} \sigma & \sigma (-3v^2 + (2a + 2)v - a) \\ 0 & \sigma \varepsilon b \end{pmatrix} = \sigma^2 \varepsilon b \neq 0.
\]
Therefore \(\text{Lie}(\mathcal{G}_0, \mathcal{G}_1)\) spans \(\mathbb{R}^2\) everywhere. Hence, Hörmander's condition for hypoellipticity  \cite{hormander1967hypoelliptic} holds.
\end{proof}

\subsection{Strong Feller property}
\begin{lemma}\label{lem:strong-feller}
For every $t>0$, the Markov semigroup operator $P_t$ associated with the unique strong
solution $(X_t)_{t\ge 0}=(v_t,w_t)$ of Eq.~\eqref{eq:2}:
\[
P_t\varphi(x)=\mathbb{E}_x\!\bigl[\varphi(X_t)\bigr],
\qquad \varphi\in L^\infty(\mathbb{R}^2),
\]
is strong Feller, namely, the map $x\mapsto P_t\varphi(x)$ is
continuous on $\mathbb{R}^2$. 
\end{lemma}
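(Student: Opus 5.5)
The plan is to derive the strong Feller property from the hypoellipticity of Lemma~\ref{hoermanderLemma}, combined with the non-explosion of Lemma~\ref{nonexploseLemma}. Concretely, I would show that for each $t>0$ and $x\in\mathbb R^2$ the transition law $P_t(x,\cdot)$ has a density $p_t(x,y)$ that is jointly smooth in $(t,x,y)\in(0,\infty)\times\mathbb R^2\times\mathbb R^2$. Then $P_t\varphi(x)=\int\varphi(y)p_t(x,y)\,dy$, and continuity in $x$ for bounded measurable $\varphi$ follows once local uniformity in $x$ is controlled.

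\textbf{Step 1: smoothness of the kernel.} Consider the parabolic operator $\partial_t-\mathcal L_x$, acting in the variables $(t,x)$, and its counterpart $\partial_t-\mathcal L^*_y$. I would check Hörmander's parabolic condition. The vector fields are $\mathcal G_1$ and $\partial_t-\mathcal G_0$. Their bracket $[\mathcal G_1,\partial_t-\mathcal G_0]=-[\mathcal G_1,\mathcal G_0]$ recovers, together with $\mathcal G_1$, the full spatial span computed in Lemma~\ref{hoermanderLemma}. Hence the space-time Lie algebra spans $\mathbb R^3$, and the operator is hypoelliptic.

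The transition kernel is a distributional solution of $(\partial_t-\mathcal L^*_y)p=0$ on $(0,\infty)\times\mathbb R^2$, and of the backward equation in $(t,x)$. Because the coefficients are polynomial and non-explosion holds, the locally truncated problem suffices. Hypoellipticity in the joint variables $(t,x,y)$, using the operator $2\partial_t-\mathcal L_x-\mathcal L^*_y$, then yields joint smoothness of $p_t(x,y)$ for $t>0$.

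To avoid subtleties of global-in-space solutions, I would localize. Use the truncation from Section~\ref{sec-LocalExist}, or smooth cutoffs of the drift outside a ball $B_R$, so that the coefficients are globally Lipschitz and smooth. Apply the result to the truncated process. Then use that the truncated and original processes agree up to $\tau_R$, together with $\mathbb P_x(\tau_R\le t)\to0$ as $R\to\infty$, uniformly for $x$ in compacts. This uniform bound follows from the Lyapunov estimate $\mathbb E_x V(X_{t\wedge\hat\tau_n})\le V(x)+Ct$ and Markov's inequality.

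\textbf{Step 2: continuity.} Fix $x_n\to x$ and $\|\varphi\|_\infty\le1$. Then
\[
|P_t\varphi(x_n)-P_t\varphi(x)|\le |P^R_t\varphi(x_n)-P^R_t\varphi(x)|+2\sup_{z\in K}\mathbb P_z(\tau_R\le t),
\]
where $K$ is a compact set containing the sequence and $P^R_t$ is the semigroup of the truncated process. The first term tends to zero by strong Feller for the truncated process, which holds because its kernel is smooth in $x$. This reduces, via Scheffé's lemma, to continuity of $x\mapsto p^R_t(x,\cdot)$ in $L^1$. The second term is small for large $R$.

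\textbf{Main obstacle.} The hard step is the rigorous passage from hypoellipticity to a smooth kernel for the truncated process: a bare hypoellipticity statement gives local smoothness but not $L^1$-continuity in $x$. I would first try to invoke the standard Malliavin-calculus result that a uniform Hörmander condition with bounded smooth coefficients gives a smooth density with Gaussian-type bounds. The bracket determinant $\sigma^2\varepsilon b$ is constant, hence uniform, which supplies the needed uniformity. Failing that, I would cite \cite{Wu2001,leon2018hypoelliptic} directly, since they treat exactly this polynomial-drift hypoelliptic setting with a Lyapunov function.
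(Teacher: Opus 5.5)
Your proposal is correct and is essentially the paper's argument: Hörmander's condition (which you check in its space--time form) gives a transition density $p_t(x,y)$ that is smooth in $(x,y)$ for $t>0$, and continuity of $x\mapsto P_t\varphi(x)$ follows by integrating bounded $\varphi$ against this density. The truncation and Malliavin detour are unnecessary, and the ``main obstacle'' you name is not an obstacle. Non-explosion gives $\int p_t(x,y)\,dy=1$, so Scheff\'e's lemma turns the pointwise convergence $p_t(x_n,y)\to p_t(x,y)$ directly into $L^1$ convergence for the original process; this is exactly the step the paper leaves implicit. If you do truncate anyway, smooth out only $f$ and keep $g$: cutting off the whole drift makes $[\mathcal G_1,\mathcal G_0]$ vanish outside the ball, which destroys the uniform determinant $\sigma^2\varepsilon b$ you invoke.
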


\begin{proof} 
Because the vector fields
\(\mathcal{G}_1=\sigma\,\partial_v\) and
\([\mathcal{G}_1,\mathcal{G}_0]=\sigma\varepsilon b\,\partial_w\) span \(\mathbb{R}^{2}\) everywhere, Hörmander’s bracket condition holds.
The drift and diffusion coefficients of the infinitesimal generator $\mathcal{L}$ are \(C^{\infty}\);
hence, by Hörmander’s hypoelliptic theorem,
\(P_t(x,u)\) is \(C^{\infty}\) in \((x,u)\) for every fixed \(t>0\). For any \(\varphi\in L^{\infty}(\mathbb{R}^{2})\), the map   \(x\mapsto\int \varphi(u)\,P_t(x,u)\,du\) is continuous, so that for \(t>0\) also
\[
x\mapsto   P_t \varphi(x)
   = \mathbb{E}_{x}\!\bigl[\varphi(X_t)\bigr]
   = \int_{\mathbb{R}^{2}} \varphi(u)\,P_t(x,u)\,du,
\]
is continuous. 
\end{proof}

\subsection{Topological irreducibility}
\begin{lemma}
\label{lem:irreducibility}
Fix $T > 0$, a target $X^* = (v^*, w^*) \in \mathbb{R}^2$, and $r > 0$. For every initial state $x = (v_0, w_0) \in \mathbb{R}^2$, the solution $X_t = (v_t, w_t)$ of Eq. \eqref{eq:2} satisfies
\begin{equation}\label{eq:irr-goal}
\mathbb{P}_x \big( X_T \in B_r(X^*) \big) = \eta(r, T, x) > 0.
\end{equation}
\end{lemma}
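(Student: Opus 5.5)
We will prove the lemma by combining a deterministic control argument with the Stroock--Varadhan support theorem. The first step is to build a control. We consider the controlled system
\[
\dot v = f(v,w) + \sigma u(t),\qquad \dot w = g(v,w) = \varepsilon(bv-cw),
\]
with $u\in C([0,T])$. We look for a control $u$ whose solution $(v^u,w^u)$ goes from $x$ at time $0$ to $X^*$ at time $T$. The key observation is that $v$ can be chosen freely, since the control absorbs everything in the $v$-equation. So we first choose $w$ and then read off $v$ and $u$.

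Concretely, we pick a smooth path $w:[0,T]\to\mathbb R$ with $w(0)=w_0$ and $w(T)=w^*$. We then impose the constraint that $v(t):=(\dot w(t)+\varepsilon c\, w(t))/(\varepsilon b)$ satisfies $v(0)=v_0$ and $v(T)=v^*$. Since $b,\varepsilon>0$, this amounts to prescribing
\[
w(0)=w_0,\qquad w(T)=w^*,\qquad \dot w(0)=\varepsilon(bv_0-cw_0),\qquad \dot w(T)=\varepsilon(bv^*-cw^*).
\]
These are four endpoint conditions, and a cubic polynomial in $t$ meets them. With this choice $v$ is smooth and $w$ solves $\dot w=g(v,w)$ exactly. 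We then define
\[
u(t):=\sigma^{-1}\bigl(\dot v(t)-f(v(t),w(t))\bigr),
\]
which is continuous because $\sigma>0$. This produces a skeleton path $\phi=(v,w)$ of Eq.~\eqref{eq:2} with $\phi(0)=x$ and $\phi(T)=X^*$. This is exactly the controllability that the bracket computation in Lemma~\ref{hoermanderLemma} reflects, since $\partial_v g=\varepsilon b\neq 0$.

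The second step is to pass from the control to the SDE. We invoke the Stroock--Varadhan support theorem: for additive noise and locally Lipschitz drift, $\mathbb P_x(\sup_{t\le T}\|X_t-\phi(t)\|<r)>0$ for every skeleton path $\phi$. Since the drift is only locally Lipschitz, we justify this directly rather than cite the global-Lipschitz version.

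1. Let $K$ be the closed $(r+1)$-neighbourhood of $\phi([0,T])$ and let $L$ be a Lipschitz constant of $F$ on $K$.
2. Set $Z_t:=\sigma\bigl(B_t-\int_0^t u\,ds\bigr)$. Take the event $E_\delta=\{\sup_{t\le T}|Z_t|<\delta\}$. By Cameron--Martin (Girsanov with the deterministic drift $u$) and the positivity of the Wiener measure of small sup-norm balls, $\mathbb P(E_\delta)>0$ for every $\delta>0$.
3. On $E_\delta$, as long as $X$ stays in $K$, the difference $e_t=X_t-\phi(t)$ satisfies $e_t=\int_0^t\bigl(F(X_s)-F(\phi(s))\bigr)ds+(Z_t,0)$.
4. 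Gronwall then gives $\|e_t\|\le \delta e^{LT}$ for $t\le T$.
5. Choosing $\delta<r e^{-LT}$ keeps $X$ inside $K$ up to time $T$, by a continuity and first-exit argument; non-explosion comes from Lemma~\ref{nonexploseLemma}.

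It follows that $\|X_T-X^*\|<r$ on $E_\delta$. Hence $\eta(r,T,x)\ge\mathbb P(E_\delta)>0$.

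We expect the main obstacle to be the localization in step 2. The standard support theorem is stated for globally Lipschitz coefficients, and we must make sure that the Gronwall estimate only uses Lipschitz bounds inside $K$. The stopping-time argument in step 5 handles this: if the first exit from the $r$-tube occurred before $T$, the bound $\|e\|\le\delta e^{LT}<r$ would hold up to that exit time, which is a contradiction. The control construction in the first step is elementary once the cubic interpolation is written down.
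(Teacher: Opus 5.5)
Your argument is correct and is essentially the endpoint-control plus support-theorem route that the paper cites rather than proves (Mattingly--Stuart--Higham, Lemma 3.4). Your Hermite cubic in $w$ with $v=(\dot w+\varepsilon c w)/(\varepsilon b)$ and $u=\sigma^{-1}(\dot v-f(v,w))$ plays the role of the cubic interpolation used there. Your Cameron--Martin small-ball estimate, combined with the localized Gronwall and first-exit argument, is a valid self-contained substitute for invoking the globally Lipschitz Stroock--Varadhan theorem, so nothing is missing.
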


Two different proofs can be found in the literature: one is based on Girsanov's formula \cite{Wu2001} and is given in  \cite[Proposition 2.2]{leon2018hypoelliptic}, and another one on endpoint control and Strook's theorem, see \cite[Lemma 3.4]{mattingly2002ergodicity}.

\subsection{Conclusion of the proof of Theorem~\ref{thm:FHN-ergodic}}
\begin{proof} (Theorem~\ref{thm:FHN-ergodic}) Item  (i)  is already proved in Lemma~\ref{nonexploseLemma}. Items (ii) and (iii) follow from \cite[Theorem 5.2c]{down1995exponential} (just as discussed in \cite[Theorem 2.4]{Wu2001}) and the fact that FHN process in Eq.\eqref{eq:2} is a strong Feller semigroup (Lemma~\ref{lem:strong-feller}) which is topologically transitive (Lemma~\ref{lem:irreducibility}) and the generator of which satisfies 
\[
\mathcal{L}V \le -\kappa V + C,
\]
where $V$ is the Lyapunov function and $\kappa$ and $C$ are positive constants (Lemma~\ref{lem-Lyapunov}). As $V$ increases as $x=(v,w)\to\infty$, this indeed implies the condition (C2) in \cite{Wu2001}. One concludes that the invariant measure is unique. Due to hypoellipticity, it is absolutely continuous with a smooth density. The positivity of its density follows from Bony's maximum principle, or alternatively from the topological transitivity mentioned above. Item  (iv)  follows directly from item (iii).
\end{proof}

\section*{Data availability}
The data generated during the current study are available from the corresponding author upon reasonable request.


\end{document}